\newtheorem{theorem}{Theorem}[section]
\newtheorem{lemma}[theorem]{Lemma}
\newtheorem{claim}[theorem]{Claim}
\theoremstyle{definition}
\newtheorem{definition}{Definition}
\newtheorem*{comment*}{Comment}
\newcommand{\games}{diversity-seeking jump games~}
\newcommand{\game}{diversity-seeking jump game~}
\newcommand{\calT}{\mathcal{T}}
\newcommand{\calA}{\mathcal{A}}
\newcommand{\calB}{\mathcal{B}}
\newcommand{\calC}{\mathcal{C}}
\newcommand{\SW}{\text{SW}}
\newcommand{\OPT}{\text{OPT}}
\newcommand{\NE}{\text{NE}}
\newcommand{\PoA}{\text{PoA}}
\newcommand{\PoS}{\text{PoS}}
\title{\bf Diversity-Seeking Jump Games in Networks\thanks{A preliminary version of this paper was published as \citep{narayanan2023diversity}. This version includes the complete proofs of all results, as well as a new set of results on the price of anarchy and price of stability (Section~\ref{sec:poa}).}}
\author[1]{Lata Narayanan\thanks{lata.narayanan@concordia.ca}}
\author[1]{Yasaman Sabbagh\thanks{yasaman.sabbaghziarani@concordia.ca}}
\author[2]{Alexandros A. Voudouris\thanks{alexandros.voudouris@essex.ac.uk}}
\affil[1]{Department of Computer Science and Software Engineering, Concordia University}
\affil[2]{School of Computer Science and Electronic Engineering, University of Essex}
\date{}
\begin{document}

\maketitle

\begin{abstract}
Recently, strategic games inspired by Schelling's influential model of residential segregation have been studied in the TCS and AI literature. In these games, agents of $k$ different types occupy the nodes of a network topology aiming to maximize their utility, which is a function of the fraction of same-type agents they are adjacent to in the network. As such, the agents exhibit similarity-seeking strategic behavior. 
In this paper, we introduce a class of strategic jump games in which the agents are {\em diversity-seeking}: The utility of an agent is defined as the fraction of its neighbors that are of {\em different} type than itself. We show that in general it is computationally hard to determine the existence of an equilibrium in such games. However, when the network is a tree, diversity-seeking jump games always admit an equilibrium assignment. For regular graphs and spider graphs with a single empty node, we prove a stronger result: The game is potential, that is, the improving response dynamics always converge to an equilibrium from any initial placement of the agents. We also show (nearly tight) bounds on the price of anarchy and price of stability in terms of the social welfare (the total utility of the agents). 
\end{abstract}

\section{Introduction}
In his seminal work \citep{schelling1969models,schelling1971dynamic}, the economist Thomas Schelling proposed an elegant model to explain the phenomenon of residential segregation in American cities. In his model, agents of two types are placed uniformly at random on the nodes of a network. An agent is happy if at least some fraction $\tau$ of its neighbors are of the same type as itself, and unhappy otherwise. An unhappy agent will {\em jump} to an unoccupied node in the network, or exchange positions with another unhappy agent of a different type. Schelling showed experimentally that this random process would lead to {\em segregated neighborhoods}, even for $\tau \approx 1/3$. His work showed that even small and local individual preferences for one's own type can lead to large-scale and global phenomena such as residential segregation. As such, it inspired a significant number of follow-up studies in sociology and economics \citep{benenson2009schelling,pancs2007schelling,rogers2011unified}, with many empirical studies being conducted to study the influence of different parameters on segregation models \citep{benard2007wealth,bruch2014population,bursell2018diversity,fossett1998simseg,stivala2016diversity}. 

Inspired by Schelling's model, there have been a few lines of inquiry pursued by computer scientists as well. 
Some researchers attempted to prove analytically the conditions under which segregation takes place \citep{brandt2012analysis,immorlica2017exponential}, while a more recent line of research used {\em game-theoretic} tools to study the problem \citep{ijcai2022p22,bullinger2021welfare,chauhan2018schelling,echzell2019convergence,ijcai2019p38,kanellopoulos2020modified}. In the latter setting, strategic agents are placed at arbitrary positions in the network and move to new positions to improve their {\em utility}, defined as a function of the fraction of their neighbors which are of the same type as themselves. Two main classes of games have been considered. In {\em jump} games, the agents can move to previously unoccupied locations aiming to improve their utility, while in {\em swap} games, unhappy agents of different types can swap their locations if this increases the utility of both of them. Researchers have mainly studied questions related to the {\em existence}, {\em quality}, and {\em computational complexity} of {\em equilibria}, which are stable assignments of agents to the nodes of the network such that no agent wants or can deviate.

\subsection{Our Contribution}
In this paper, we introduce a new class of strategic games, called {\em diversity-seeking jump games}. In such games, the utility of an agent is the fraction of its neighbors of a {\em different} type than itself. To the best of our knowledge, this is the first time such a utility function has been studied in the context of jump games on networks that are inspired by Schelling's model. Data from the General Social Survey \cite{smith2019general} show that people prefer diverse neighborhoods rather than segregated ones. This survey, which is regularly conducted in the US since 1950, shows that the percentage of people preferring diverse neighborhoods has been steadily increasing. There are also other settings that could be modeled using this new utility function; e.g., teams composed of people with diverse backgrounds and skill sets to bring a broader range of perspectives to their business, or research groups composed of people bringing expertise from different disciplines. Indeed, many studies show that ethnically and gender-wise diverse teams lead to better outcomes for business \citep{antonio2004effects,dezso2012does,elia2019impact,filbeck2017does,freeman2015collaborating,loyd2013social,parker2016impact,phillips2006surface,sommers2006racial}.

It may seem at first glance that Schelling jump games are intimately related to our diversity-seeking jump games, but the relationship is not obvious. Observe that the utility of an agent in a Schelling {\em similarity-seeking} jump game is not always equal to the complement of its utility in the diversity-seeking setting. To see this, consider an agent placed at a node of degree $1$ that is adjacent to an empty node.  Such an isolated agent has utility $0$ in both similarity-seeking and diversity-seeking settings, thus showing that the models are not simply complements of each other. 

We present a series of results related to the computational complexity, the existence, and quality of equilibrium assignments in diversity-seeking jump games. We first show that given a network topology and a number of agents drawn from a set of types, it is NP-hard to determine whether there exists an equilibrium if some of the agents are {\em stubborn} (and do not ever move from the nodes they are initially placed). This hardness reduction can be found in Section~\ref{sec:hardness}.

In terms of positive existence results, in Sections~\ref{sec:tree} and \ref{sec:bounded}, we show that there are always equilibrium assignments in games where the network topology is a tree graph with any number of empty nodes, if all agents are strategic. 
For some classes of graphs, we show a stronger result: The \game is a {\em potential game} when the topology is a regular graph or a spider graph with a single empty node, or a line with any number of empty nodes. This means that from, any initial placement of agents to the nodes of the topology, improving response dynamics (IRD) by the agents always converges to an equilibrium. 
Our positive results show a sharp difference between \games and similarity-seeking jump games, for which there are instances with tree topologoes that do not admit equilibria \cite{ijcai2019p38}, and instances with spider graph topologies which are not potential games. 

Finally, we show that there is not much loss of efficiency at equilibrium assignments in diversity-seeking jump games in terms of social welfare, defined as the total utility of all agents. In particular, we show that the price of anarchy (that is, the worst-case ratio of the optimal social welfare over the minimum social welfare achieved at equilibrium) is at most $n/(k-1)$ when there are $n$ strategic agents partitioned into $k \geq 2$ types, and at most $k/(k-1)$ when the agents are partitioned into $k$ {\em symmetric} types consisting of $n/k$ agents each. These bounds are (nearly) tight even for simple topologies, such as a line graph in some cases or a star graph in other cases. We remark that, in contrast to similarity-seeking games where the price of anarchy has been shown to be increasing in $k$ for symmetric types~\cite{ijcai2019p38,kanellopoulos2023tolerance}, it is decreasing in $k$ for 
diversity-seeking games and tends to $1$. Intuitively, this is because in the games we study here, the social welfare is a measure of integration which is easier to achieve when there are a lot of different types of agents, whereas the social welfare is a measure of segregation in the model of \citet{ijcai2019p38} which is harder to achieve when there are many types. 
We also prove a lower bound of $65/62 \approx 1.048$ on the price of stability (which is the worst-case ratio of the optimal social welfare over the maximum social welfare achieved at equilibrium), thus showing that optimal assignments are not necessarily equilibria. These results can be found in Section~\ref{sec:poa}.

\subsection{Related Work}
In the last decade, there have been attempts in the computer science literature \citep{brandt2012analysis,immorlica2017exponential} to theoretically analyze the random process described by Schelling. It was shown that the expected size of the resulting segregated neighborhoods is polynomial in the size of the neighborhood on a line topology~\citep{brandt2012analysis} and exponential in its size on a grid topology~\citep{immorlica2017exponential}; however, in both cases it is independent of the overall number of agents. 

The study of the above process as a strategic game played by two types of agents rather than as random process appears to have been initiated by \citet{zhang2004dynamic}, who introduced a game-theoretic model where agents have a noisy single-peaked utility function that depends on the ratio of the numbers of agents of two agent types in any local neighborhood. \citet{chauhan2018schelling} introduced a game-theoretic model that incorporates Schelling's notion of a tolerance threshold. In their model, each agent has a threshold parameter $\tau \in (0,1)$ and may have a preferred location. The primary goal of an agent is to find a location where the happiness ratio exceeds $\tau$; if such a location does not exist, the agent aims to maximize its happiness ratio. Its secondary goal is to minimize the distance to its preferred location. \citet{chauhan2018schelling} studied the convergence of best-response dynamics to an equilibrium assignment in both jump and swap games with two types of agents and for various values of the threshold parameter. \citet{echzell2019convergence} generalized the model of \cite{chauhan2018schelling} by considering games with agents of $k \geq 2$ different types, in which the cost of an agent is related to the ratio of agents of its own type over the whole neighborhood, or the ratio of agents of its own type over the part of the neighborhood that includes the agents of the majority type different than its own. They showed results related to the converge of the dynamics and the computational hardness of finding placements that maximize the number of happy agents. 

More related to our model, \citet{ijcai2019p38} studied similarity-seeking jump games (that is, $\tau=1$) for $k \geq 2$ types of agents that can be either  strategic (aiming to maximize their utility) or stubborn (who stay at their initial location regardless of the composition of the neighborhood). \citeauthor{ijcai2019p38} showed that, while equilibria always exist when the topology is a star or a graph of degree $2$, they may not exist even for trees of degree $4$. They also showed that it is computationally hard to find equilibrium assignments or assignments with optimal social welfare, and bounded the price of anarchy and stability for both general and restricted games. \citet{agarwal2020swap} considered the utility function of \cite{ijcai2019p38}, but focused on swap games and, besides the social welfare, also considered a new social objective called the {\em degree of integration}. They showed similar results about the existence and the computational complexity of equilibrium assignments. An extended version of these two works that includes more results was later published as \citep{schelling-journal}. The hardness results of \citep{schelling-journal} where improved by \citet{kreisel2021equilibria}, who showed that the problem of determining the existence of equilibria is NP-hard in both jump and swap games, even if all agents are strategic. \citet{bilo2022topological} investigated the influence of the underlying topology and locality in swap games when agents can only swap with other agents at most a number of hops away, and also improved some price of anarchy results of \citep{schelling-journal}. 

Other variants of the model have been considered by \citet{kanellopoulos2020modified} who focuses on a slightly different utility function where an agent is also counted as part of its neighborhood, by \citet{chan2020schelling} who considered social Schelling games where the agents are not partitioned into type but rather are connected via a social network, and by \citet{bilo2023continuous} who considered a model in which the types are continuous rather than discrete (as in most papers). The computational complexity of finding assignments with high objective value according to different functions, including the social welfare, the Nash welfare, and other optimality notions has been considered by \citet{bullinger2021welfare}, and also by \citet{deligkas2024parameterized} from the parameterized complexity perspective.  

Some other recent papers that are related to ours are that of \citet{ijcai2022p12}, \citet{friedrich2023single} and \citet{kanellopoulos2023tolerance}, which are also motivated by the observation that real-world agents might favor diverse neighborhoods to some extend. \citet{ijcai2022p12} and \citet{friedrich2023single} considered swap and jump games, respectively, in which agents of two types have single-peaked utility functions that increase monotonically with the fraction of same-type neighbors in the interval $[0, \Lambda]$ for some $\Lambda \in (0,1)$, and then decrease monotonically afterward; this is in contrast to our model here, where the utility is a monotone function. \citet{kanellopoulos2023tolerance} considered the case where there are $k \geq 2$ types of agents and there is an implicit ordering of the types such that the distance between the types according to this ordering determines how much different types affect the utility of an agent whenever there are agents of those types in its neighborhood. All these papers showed instances where equilibria do not always exist, but also identified restricted classes of games for which existence of equilibria is guaranteed either by explicit constructive algorithms, or because IRD converges. They also present tight bounds on the price of anarchy and the price of stability in terms of the social welfare or the degree of integration. 

\section{The Model and Notation}
In an instance $I = (R, S, \calT, G, \lambda)$ of the {\em \game}, there is a set $R$ of {\em strategic agents} and a set $S$ of {\em stubborn agents} with $|R \cup S| = n \geq 2$. The agents are further partitioned into  $k \geq 2$ different {\em types} $\calT = (T_{1}, \ldots , T_{k})$. Each agent occupies a node of an undirected graph $G=(V,E)$ that satisfies $|V| > n$; $G$ is referred to as the {\em topology}. The locations (nodes of $G$) of the stubborn agents are fixed and given by an injective mapping $\lambda$ from $S$ to $V$. In contrast, the strategic agents can choose their locations freely, subject to the constraint that no two agents can share the same node. An {\em assignment} $C$ determines the node $v_\calA(C)$ that each agent $\calA$ (strategic or stubborn) occupies on the topology, that is, it is an injective mapping of agents to the nodes that respects $\lambda$; let $\calC(I)$ be the set of all possible assignments for game $I$. 

We define $n_T(v, C)$ to be the number of neighbors of node $v$ that are occupied by agents of type $T \in \cal{T}$ according to assignment $C$. Let $n(v,C) = \sum_{T \in \calT}n_{T}(v, C)$ be the total number of agents in $v$'s neighborhood according to assignment $C$. The satisfaction of the agents for an assignment $C$ is measured by a {\em utility function}. 
In particular, the utility of an agent $\calA$ of type $T$ that occupies node $v_\calA(C)$ in assignment $C$ is defined as the fraction of agents of type different than $T$ in $\calA$'s neighborhood. That is, 
$$u_{\calA}(C) = \frac{\sum_{t \neq T} n_t(v_{\calA}(C),C)}{n(v_{\calA}(C),C)}$$
Observe that $u_{\calA}(C) \in [0,1]$. 
By convention, the utility of an agent without neighbors is $0$.


\subsection{Equilibria and Improving Response Dynamics}
We are interested in stable assignments of diversity-seeking jump games, in which strategic agents do not have incentive to deviate by jumping to empty nodes of the topology to increase their utility. 

\begin{definition} \label{def:equilibrium}
An assignment $C$ is said to be a {\em pure Nash equilibrium} (or, simply, {\em equilibrium}; NE) if $u_\calA(C) \geq u_\calA(C')$ for every strategic agent $\calA$ and any assignment $C'$ according to which $\calA$ jumps to a node that is empty in $C$ while all other agents occupy the same nodes as in $C$. We will denote by $\NE(I)$ the set of all equilibrium assignments of game $I$. 
\end{definition}

More specifically, we study the convergence of {\em improving response dynamics (IRD)} to equilibrium assignments in diversity-seeking jump games: 
In each step, a single strategic agent tries to move to an unoccupied node where it gets higher utility. We say such a move of an agent $\calA$ from assignment $C$ to assignment $C'$ is an {\em improving move} if $u_{\calA}(C) < u_{\calA}(C')$. Note that in assignment $C$, a single agent may have many different improving moves, and multiple agents may have improving moves. We do not make any assumptions about which improving move will take place if multiple ones are available. Clearly, the IRD converges to an equilibrium assignment if no agent has an improving move. 

A game is a {\em potential game} if and only if there exists {\em a generalized ordinal potential function}, that is, a non-negative real-valued function $\Phi$ on the set of assignments such that for any two assignments $C$ and $C'$ where there is an improving move from $C$ to $C'$, we have $\Phi(C') < \Phi(C)$. As observed by \citet{monderer1996potential}, such games always admit an equilibrium, and furthermore, regardless of the starting assignment, any sequence of improving moves is finite and will terminate in an equilibrium. 

Not all games are potential ones in case there is an {\em improving response cycle (IRC)} is a sequence of improving moves that leads to a repeating assignment. In particular, the existence of an IRC implies that the IRD does not always converge to an equilibrium, but it does not imply the non-existence of an equilibrium. 

For particular topologies that we will consider in later sections, we will show that the following function is a generalized ordinal potential function. Similar potential functions have been used before in different models~\citep{chauhan2018schelling,echzell2019convergence,ijcai2019p38}. 

\begin{definition} \label{def:potential-function}
For an assignment $C$ and some value $m \in (0,1)$, let  $\Phi(C) = \sum_e{w_{C}(e)}$, where the weight $w_{C}(e)$ of any edge $e = \{u,v\} \in E$ is defined as:
\begin{align*}
w_{C}(e) = 
\begin{cases}
1   & \text{if $u$ and $v$ are occupied by agents of the same type}\\
m &\text{if either $u$ or $v$ is unoccupied}\\
0&\text{otherwise}.
\end{cases}
\end{align*}
\end{definition}

The following lemma shows that, if a diversity-seeking jump game without stubborn agents is a potential game for a particular class of topologies, then it is also a potential game if some of the agents are stubborn. This will allow us to consider only the case when all agents are strategic when showing that a game is a potential one for a class of networks.

\begin{lemma}\label{lem:stubborn}
For any given $k\geq2 $ and topology $G$, if the game $I = (R, \varnothing, T, G, \lambda)$ is a potential game, then every game $I' = (R', S', T, G, \lambda)$ in which $S'$ and $R'$ form a partition of $R$, is also a potential game. 
\end{lemma}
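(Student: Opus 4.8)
The plan is to show that the \emph{same} generalized ordinal potential function works, once its domain is restricted. Since $I = (R, \varnothing, T, G, \lambda)$ is assumed to be a potential game, fix a generalized ordinal potential function $\Phi$ for it, i.e.\ a non-negative real-valued function on $\calC(I)$ that strictly decreases along every improving move of $I$. I claim that the restriction of $\Phi$ to $\calC(I')$ is a generalized ordinal potential function for $I'$, which by definition establishes that $I'$ is a potential game.

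First I would record two easy structural facts. The set of agents is the same in both games, namely $R = R' \cup S'$; the only difference is the label (strategic vs.\ stubborn) attached to each agent and the extra placement constraint imposed on $S'$. Consequently every assignment of $I'$ is also an assignment of $I$, i.e.\ $\calC(I') \subseteq \calC(I)$, so $\Phi$ restricted to $\calC(I')$ is well defined and non-negative. Moreover, in any run of the IRD for $I'$ the stubborn agents in $S'$ never move, so the dynamics stay within $\calC(I')$; hence it suffices to control $\Phi$ along moves between assignments in $\calC(I')$.

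The heart of the argument is the observation that every improving move of $I'$ is also an improving move of $I$. Suppose $C, C' \in \calC(I')$ and there is an improving move for an agent $\calA$ from $C$ to $C'$ in $I'$; then $\calA \in R'$, the node occupied by $\calA$ in $C'$ is empty in $C$, all other agents occupy the same nodes in $C$ and $C'$, and $u_{\calA}(C) < u_{\calA}(C')$. Viewing $C$ and $C'$ as assignments of $I$, note that node occupancy and the utility value $u_{\calA}(\cdot)$ are determined entirely by the assignment and the type partition $\calT$, which are identical in $I$ and $I'$, and that $\calA \in R' \subseteq R$ is a strategic agent of $I$. Therefore the transition $C \to C'$ is an improving move of the strategic agent $\calA$ in $I$, so $\Phi(C') < \Phi(C)$. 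Since this holds for every improving move of $I'$, the restriction of $\Phi$ to $\calC(I')$ decreases along every improving move of $I'$, as required.

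I do not expect a genuine obstacle: intuitively, making agents stubborn only shrinks both the set of available deviations and the set of reachable assignments, so it can only remove improving moves, never create new ones. The one point needing a little care is the bookkeeping that ``improving move'' is a notion defined purely in terms of assignments, node occupancy, and the type partition, so it transfers verbatim between $I$ and $I'$; and the remark that we only use the inclusion ``improving moves of $I'$ $\subseteq$ improving moves of $I$'', not its converse (which would in general fail, since a strategic agent of $I$ that is stubborn in $I'$ may have improving moves available only in $I$).
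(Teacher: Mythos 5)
Your proof is correct and rests on exactly the same observation as the paper's: every improving move of $I'$ is also an improving move of $I$, since $R' \subseteq R$ and utilities depend only on the assignment and the type partition. The only (cosmetic) difference is that you conclude by restricting the potential function $\Phi$ to $\calC(I') \subseteq \calC(I)$, whereas the paper concludes via the absence of improving response cycles; your packaging is if anything slightly more direct given the paper's definition of a potential game.
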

\begin{proof}
Suppose $I = (R, \varnothing, T, G, \lambda)$ is a potential game for a given topology $G$, and let $\cal{C}$ be the set of valid assignments for $I$. Now consider the game $I' = (R', S', T, G, \lambda)$ where $S' \neq \varnothing$, $R' \subset R$ and $S' \subset R$, and let $\cal{C'}$ be the set of valid assignments for $I'$. Since $R' 
\subset R$, clearly $\cal{C} \subset {C'}$. Also any move between two assignments in $I'$ is also possible in $I$, though the converse does not  hold. It follows that if there is no IRC in $I$, there cannot be one in $I'$. 
\end{proof}

\subsection{Price of Anarchy and Stability}
We are also interested in the welfare guarantees of equilibrium assignments in terms of the {\em social welfare} that is defined as the total utility of all strategic agents. In particular, for an assignment $C$, the social welfare is 
\begin{align*}
    \SW(C) = \sum_{\calA \in R} u_\calA(C).
\end{align*}
For a given game $I$, let $\OPT(I) = \max_{C \in \calC(I)} \SW(C)$ be the optimal social welfare that can be achieved among all possible assignments of $I$. The {\em price of anarchy} is a pessimistic measure of the loss of welfare at equilibrium and is defined as the worst-case ratio, over all possible games $I$ with $\NE(I) \neq \varnothing$, of the optimal social welfare $\OPT(I)$ over the {\em minimum} social welfare achieved among all equilibrium assignments, that is, 
\begin{align*}
    \PoA = \sup_{I: \NE(I) \neq \varnothing} \frac{\OPT(I)}{\min_{C \in \NE(I)}\SW(C)}.
\end{align*}
On the other hand, the {\em price of stability} is an optimistic measure of the loss of welfare at equilibrium and is defined as the worst-case ratio of the optimal social welfare over the {\em maximum} social welfare at equilibrium, that is, 
\begin{align*}
    \PoS = \sup_{I: \NE(I) \neq \varnothing} \frac{\OPT(I)}{\max_{C \in \NE(I)}\SW(C)}.
\end{align*}
Clearly, the price of anarchy and the price of stability are both always at least $1$; the closer they are to $1$, the smaller the loss of welfare at equilibrium. Besides showing bounds on these two measures for all possible games, we will also focus on specific classes of games with particular topologies. 

\section{Complexity of Computing Equilibria} \label{sec:hardness}
In this section, we study the computational complexity of determining if a given diversity-seeking jump game admits an equilibrium assignment.

\begin{theorem}\label{thm:hardness}
For any $k\geq2 $, it is NP-complete to decide whether a given a \game $I = (R, S, T, G, \lambda)$ admits an equilibrium assignment.
\end{theorem}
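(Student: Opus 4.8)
The plan is to prove the two directions of NP-completeness separately. Membership in NP is routine: given a game $I$ together with a claimed assignment $C$, one verifies in polynomial time that $C$ is an equilibrium by computing, for every strategic agent $\calA\in R$ and every node $v$ that is empty in $C$, the utility $\calA$ would obtain by jumping to $v$ (keeping all other agents fixed) and checking it does not exceed $u_\calA(C)$; there are at most $|R|\cdot|V|$ such checks, each a matter of counting neighbors and their types.

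For NP-hardness I would reduce from a suitable NP-complete problem — a bounded-occurrence variant of satisfiability is a natural candidate, though a graph problem such as independent set could also serve — using the standard ``selection $+$ verification'' template for equilibrium-existence hardness, with stubborn agents playing the role of immovable walls. Concretely, the topology $G$ would be assembled from (i) a \emph{selection component} for each variable $x_i$, built so that in any equilibrium a designated strategic agent of that gadget must sit on one of two public nodes $t_i,f_i$ (interpreted as the truth value of $x_i$), with the surrounding stubborn agents arranged so that both choices remain reachable and no other position of that agent is stable; and (ii) one \emph{verification component} per clause $C$, whose ``literal nodes'' are joined to the corresponding public nodes $t_i/f_i$ so that the types appearing in the neighbourhood of the clause gadget are dictated by the selection components. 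The crux is to engineer the clause component — a subgraph with a constant number of strategic and stubborn agents — so that it admits a stable sub-configuration \emph{iff} at least one of its literals is ``true'': when all literals are ``false'' the strategic agents of that component are forced into an improving response cycle (no placement of them is stable, whatever empty nodes happen to be available), whereas a single ``true'' literal supplies a neighbour of the appropriate type that lets one of those agents settle and thereby breaks the cycle. For $k>2$ I would additionally place one isolated stubborn agent of each remaining type $3,\dots,k$, so that all $k$ types are nonempty while these agents never influence anything.

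With such a construction the correctness argument splits into two implications. If the source instance is a ``yes'' instance, I would exhibit an explicit equilibrium of $I$: place each selection agent according to a satisfying assignment and each clause component's strategic agents in their (now existing) stable sub-configuration, then check component-by-component that no agent has an improving move, using the stubborn walls to rule out profitable migration across components. If the source instance is a ``no'' instance, I would show that \emph{every} assignment of $I$ admits an improving move: any assignment induces a (possibly partial or inconsistent) truth assignment via the selection agents, and any inconsistency in a selection component or any all-false clause component exposes a local improving move, so no assignment is stable — again, the stubborn walls are what allow the argument to be localized to a single misbehaving component.

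The main obstacle is the design and verification of the clause gadget: one must choose the sizes and types of the stubborn ``padding'' around its strategic agents so that the relevant utility ratios land on the correct side of the comparison in every case, and so that in the all-false case the only available improving moves genuinely cycle rather than reaching an unintended stable configuration or an escape route. This is a finite but delicate case analysis. A secondary subtlety is making sure the selection components have no spurious equilibria and that strategic agents cannot profit by crossing between components — both handled by a large enough buffer of stubborn agents of the right types, which simultaneously makes all the utility inequalities robust.
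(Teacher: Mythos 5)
Your NP-membership argument is fine and matches the paper's (which disposes of it in one sentence). For hardness, however, what you have written is a template, not a proof, and the part you defer --- ``the design and verification of the clause gadget'' with its ``finite but delicate case analysis'' --- is precisely where all the content of this theorem lives. In a jump game there are no walls: a strategic agent may jump to \emph{any} empty node of the whole topology, and stubborn agents only shape the utilities of nearby nodes, they do not block migration. So every one of your claims (that a selection agent is confined to $t_i$ or $f_i$, that a clause component with all literals false cycles rather than reaching an unintended stable placement, that agents cannot profitably cross between components) reduces to a collection of concrete utility inequalities that must all simultaneously hold for the diversity-seeking utility function, and you have exhibited none of them. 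This is not a formality: the paper itself remarks that the authors were unable to construct even a single all-strategic diversity-seeking instance with no equilibrium, which indicates that instability gadgets in this model are genuinely hard to engineer, not routine padding exercises.

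For comparison, the paper's reduction (from Independent Set) avoids the per-variable/per-clause architecture entirely. It uses one global component $G_1$ containing the IS instance $H$ joined completely to a set $W$ of $7s+1$ stubborn agents ($5s+1$ blue, $2s$ red), so that a strategic red agent sitting on an independent set gets utility exactly $\tfrac{5s+1}{7s+1}$, and a single three-node gadget $G_2$ with available nodes $x,y,z$ offering utilities $1$, $5/6$, $4/7$ whose pairwise occupancies are mutually unstable. The yes-direction places $s$ agents on the IS and one at $x$; the no-direction is a four-case analysis on how many of $x,y,z$ are occupied, each case driven by explicit comparisons such as $\tfrac{5s+1}{7s+2} < 5/7 < \tfrac{5s+1}{7s+1}$ and $1/2 < \tfrac{5s+1}{8s-1}$. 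Until you supply gadgets with numbers at that level of concreteness and verify every such inequality, your argument has a genuine gap and cannot be accepted as a proof.
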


\begin{proof}
We give a proof for $k=2$; it is straightforward to extend it to $k \geq 2$. We can clearly verify if a given assignment is an equilibrium, so the decision problem belongs to NP. To show hardness, we will use a reduction from the Independent Set (IS) decision problem \cite{garey1979computers}. Recall that an independent set in a graph is a subset of vertices of the graph such that no two vertices in the subset are connected by an edge. An instance of the IS problem is an undirected graph $H= (X,Y)$, where $X$ and $Y$ are the set of vertices and edges respectively, and an integer $s$; it is a yes-instance if and only if $H$ has an independent set of size $ \geq s$. We construct a \game $I$ as follows (see also Figure~\ref{fig:hardness}): 
\begin{enumerate}
    \item There are two agent types: red and blue.
    \item There are a total of $8s+16$ agents; $s+1$ of them are strategic red agents, $2s+4$ are stubborn red agents, and $5s+11$ are stubborn blue agents. 
    \item The topology $G = (V,E)$ consists of two components $G_1$ and $G_2$, as follows:
    \begin{itemize}
        \item $G_1 = (V_1, E_1)$, where $V_1 = X \cup W$, $|W| = 7s+1$, and $E_1 = Y \cup \{\{v,w\}: v \in X, w \in W\} $; $5s+1$ stubborn blue agents and $2s$ stubborn red agents are placed at the nodes of $W$.
        \item $G_2$ has exactly three nodes (denoted $x$, $y$, and $z$) available to strategic agents, while all the other nodes are occupied by stubborn agents. 
        Node $x$ is connected to $y$ and one blue agent. 
        Node $y$ is connected to six nodes, containing one red agent and five blue agents. 
        Finally, $z$ is connected to 7 nodes containing 4 blue agents and 3 red agents. 
        Observe that when the other two nodes in $G_2$ are unoccupied, $x, y$, and $z$ offer utilities $1$, $5/6$ and $4/7$, respectively. 
        \item There is an edge between two (arbitrarily chosen)  nodes containing stubborn agents in $G_1$ and $G_2$, thereby connecting the graph. 
    \end{itemize}
\end{enumerate}

\begin{figure}[t]
    \centering
    \includegraphics[scale=0.45]{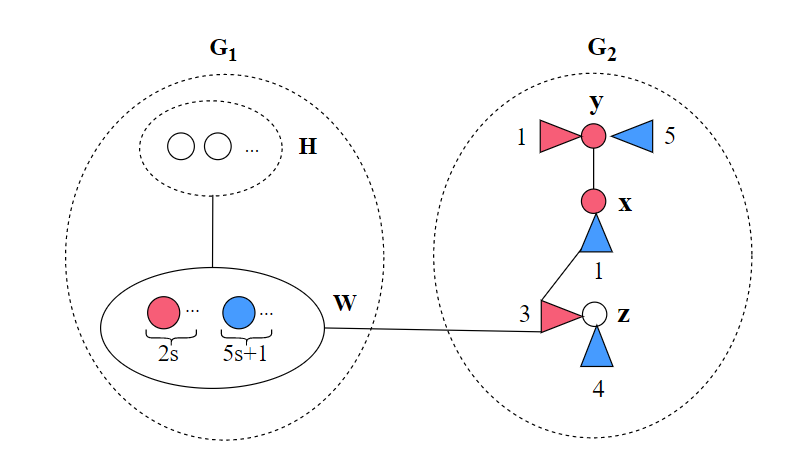}
    \caption{$G_1$ and $G_2$ are the two components of $G$, connected by a single edge. Every node in $H$ is connected to every node in $W$. The blue (red) triangle connected to $x$ denotes a set of neighbors of $x$  with blue (red) agents, and the number next to a triangle denotes the number of nodes in the corresponding set.}
    \label{fig:hardness}
\end{figure}
    
The main idea of the reduction is as follows: We will show that, if an IS of size $ \geq s$ exists in $H$, then placing $s$ strategic agents at the nodes of the IS and the remaining strategic agent at node $x$ is an equilibrium assignment in $I$. Conversely, if no IS of size at least $s$ exists in $H$, then we will show with a case analysis that there cannot be an equilibrium assignment in $I$.

We proceed with the proof that $I$ admits an equilibrium if and only if $H$ has an IS of size at least $s$. 
First, suppose that $H$ contains an IS of size at least $s$. 
Consider the following assignment in $I$: $s$ red agents occupy the nodes of the IS in $X$ and one red agent occupies node $x$ in $G_2$. 
We argue this is an equilibrium. The utility of each agent at a nodes the IS is $\frac{5s+1}{7s+1}$. Note that any remaining empty node in $G_1$ offers a utility that is at most $\frac{5s+1}{7s+1} < 1$, while the utility of the agent at $x$ is $1$. 
Therefore, the agent at $x$ does not benefit by moving to an empty node in $G_1$. Also, note that when node $x$ is occupied, the highest utility a red agent can achieve in $G_2$ is $5/7$ by moving to $y$. We have $\frac{5s+1}{7s+1} > 5/7$, therefore no agent in $G_1$ has an improving move to an empty node in $G_2$, and the assignment is an equilibrium.
    
On the other hand, suppose that $H$ does not contain an IS of size at least $s$. Suppose towards a contradiction that $I$ admits an equilibrium assignment. There are four possible cases: 
\begin{enumerate}
    \item All available nodes in $G_2$ are empty. 
    Then since there is no IS of size $s$ in $H$, and therefore in $G_1$, at least one strategic red agent is adjacent to another strategic red agent and has utility at most $\frac{5s+1}{7s+2}$. Since $\frac{5s+1}{7s+2}  < 1$, that agent can increase its utility by moving to node $x$ in $G_2$. Therefore, this cannot be an equilibrium assignment.
    
    \item Exactly one available node in $G_2$ is occupied by a strategic agent. 
    Then this agent must occupy $x$ as it provides the highest utility. As in the previous case, since there is no IS of size $s$, at least one strategic red agent is adjacent to another strategic red agent in $G_1$ and has utility at most $\frac{5s+1}{7s+2}$. Since $\frac{5s+1}{7s+2}  < 5/7$, that agent can improve its utility by moving to node $y$ in $G_2$. Therefore, this also cannot be an equilibrium assignment.
    
    \item Exactly two available nodes in $G_2$ are occupied. 
    If $x$ and $y$ are occupied, the agent at $x$ has utility $1/2$ and is motivated to move to $z$ to get utility $4/7$. 
    If $x$ and $z$ are occupied, the agent at $z$ has utility $4/7$ and is motivated to move to $y$ to get utility $5/7$. 
    If $y$ and $z$ are occupied, the agent at $y$ has utility $5/6$ and is motivated to move to $x$ to get utility $1$. 
    In all cases, there is an agent that wants to move to increase its utility, so this cannot be an equilibrium assignment.
    
    \item All available nodes in $G_2$ are occupied. 
    Every empty node in $G_1$ offers a utility that is at least $\frac{5s+1}{(7s+1) + (s-2)} = \frac{5s+1}{8s-1}$. Note that the agent at $x$ has utility $1/2 < \frac{5s+1}{8s-1}$. Therefore, the agent at $x$ is motivated to move to an empty node in $G_1$. Therefore, this cannot be an equilibrium assignment. 
\end{enumerate}
We have shown by contradiction that, if $H$ does not contain an IS of size $s$, there is no equilibrium assignment in $I$. 
This completes the proof of the NP-hardness. 
\end{proof}

The proof of the above hardness resilt above relies heavily on the existence of stubborn agents, as in the proofs of similar results for similarity-seeking jump games in \cite{ijcai2019p38}. 
In a recent paper, \citet{kreisel2021equilibria} showed that deciding the existence of an equilibrium in a similarity-seeking game is NP-complete, even when all agents are strategic. Their techniques are not immediately applicable to our setting, and in fact we have not been able to construct a diversity-seeking jump game with only strategic agents that does not have an equilibrium.

\section{Seeking Diversity in Tree Topologies} \label{sec:tree}
In this section, we consider \games with tree topologies. 
We first show that IRD do not always converge such games, even with a single empty node. 
However, there is always an equilibrium assignment; we give a polynomial-time algorithm to find one. 
In contrast, recall that similarity-seeking games do not always admit an equilibrium when the topology is a tree \citep{ijcai2019p38}. In Section~\ref{subsec:Spider-graphs}, we will show that the game is a potential game when the topology is a spider graph (i.e., a tree in which there is a single node of degree at least $3$).  

\begin{figure}[!htb]
    \centering
    \includegraphics[scale=0.38]{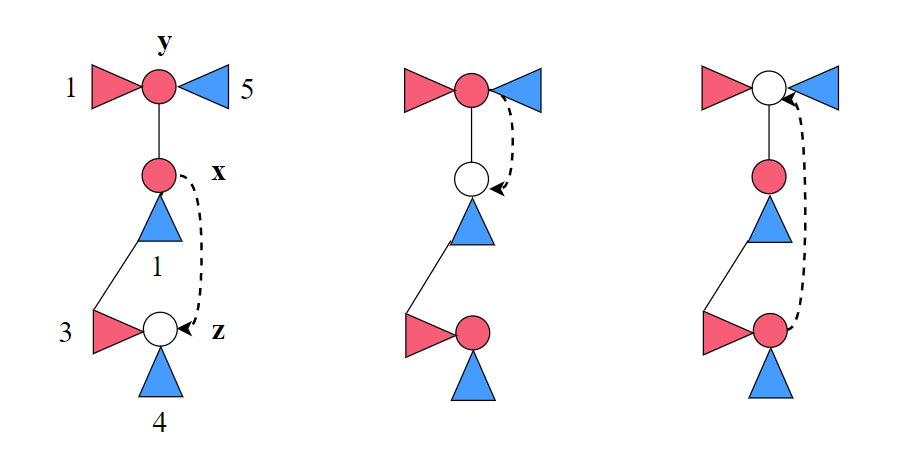}
    \caption{An IRC in a game with a tree topology. The blue (red) triangle connected to $x$ denotes a set of neighbors of $x$ with blue (red) agents, and the number next to a triangle denotes the number of nodes in the corresponding set.}
    \label{fig:cycleTree}
\end{figure}

\begin{theorem} \label{thm:tree-not-pot}
For every $k \geq 2$, there exists a \game $I = (R, S, T, G, \lambda)$ such that $G$ is a tree and $I$ is not a potential game. 
\end{theorem}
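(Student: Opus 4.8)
The plan is to exhibit an \emph{improving response cycle} (IRC) on a tree topology. This suffices for the claim: if $\Phi$ were a generalized ordinal potential function for the game, then along a cyclic sequence of improving moves $C_0 \to C_1 \to \cdots \to C_t = C_0$ we would have $\Phi(C_0) > \Phi(C_1) > \cdots > \Phi(C_t) = \Phi(C_0)$, a contradiction; hence no such $\Phi$ exists and $I$ is not a potential game. Moreover, it is enough to construct such an instance for $k=2$: from a two-type instance one obtains a $k$-type instance by placing $k-2$ extra stubborn agents, one of each remaining type, on $k-2$ new pendant leaves attached to a vertex already occupied by a stubborn agent. This keeps the topology a tree and does not alter the neighbourhood — hence neither the utility nor the set of improving moves — of any strategic agent, so the IRC carries over verbatim.

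For the $k=2$ construction (see Figure~\ref{fig:cycleTree}) I would use a few strategic agents, all of the \emph{same} type, say red, on a tree with a single empty vertex in which a handful of designated vertices that the strategic agents can occupy each carries a pendant ``gadget'' of stubborn red and blue agents in carefully chosen proportions. (By the positive results of the later sections, such a tree is necessarily not a line and not a spider with a single empty vertex, so it must have extra branching.) Two forces compete. Each designated vertex has a \emph{base} attractiveness fixed by the red/blue split of its pendant stubborn neighbours. At the same time, since the strategic agents share a type, whenever one of them is adjacent to the vertex occupied by another it lowers that agent's utility — it contributes a same-type (non-diverse) neighbour and enlarges the denominator. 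Thus the strategic agents repel one another, and which designated vertex is currently best for one of them depends on where the others sit. With the gadget sizes tuned correctly this forces a dance: one agent jumps into the (unique) empty vertex because it is now more attractive than its current spot; this makes the vertex it just vacated strictly more attractive for another agent than that agent's current vertex, so that agent follows; its move re-values the vertices again; and after a bounded number of such alternating jumps the configuration returns to its start.

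What remains is verification, which is a finite computation: for each move $C_i \to C_{i+1}$ in the cycle one writes down the moving agent's utility before and after using $u_{\calA}(C) = \big(\sum_{t \neq T} n_t(v_{\calA}(C),C)\big)/n(v_{\calA}(C),C)$ and checks the single inequality $u_{\calA}(C_i) < u_{\calA}(C_{i+1})$; one also checks that the target of each move is the currently empty vertex (so the jump is legal) and that the underlying graph is a tree. I expect the main obstacle to be the gadget design itself: choosing the numbers of pendant stubborn agents of each colour at each designated vertex so that \emph{all} the inequalities around the cycle hold simultaneously. This is exactly where the non-potential behaviour lives — the fractions $(\text{number of different-type neighbours})/(\text{number of neighbours})$ must be arranged so that the preference orders they induce over the landing spots cannot be reconciled by any single global potential, i.e.\ so that the frustration among the agents' improving moves is genuinely cyclic rather than orderable; a line or a one-empty spider does not leave room for this, which is why the construction needs the extra branching. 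Once a consistent choice of multiplicities is in hand, the handful of inequality checks completes the proof.
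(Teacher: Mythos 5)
Your overall strategy is the same as the paper's: exhibit an improving response cycle (IRC) on a tree, which rules out any generalized ordinal potential function. Your reduction from general $k$ to $k=2$ by attaching pendant stubborn agents of the remaining types to an already-stubborn vertex is a clean way to handle the quantifier over $k$ that the paper only gestures at. However, there is a genuine gap at the heart of the argument: you never actually produce the counterexample. The theorem is an existence claim, and its entire mathematical content is the explicit tree, the explicit multiplicities of stubborn red and blue agents in each pendant gadget, and the verification that a specific closed sequence of jumps is improving at every step. You describe what such a construction would look like and what checking it would involve, but you explicitly defer ``choosing the numbers \ldots so that \emph{all} the inequalities around the cycle hold simultaneously'' as the main obstacle. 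That is not a routine finishing step one can wave at: the whole difficulty of the theorem is showing that the cyclic system of strict inequalities is simultaneously satisfiable by some choice of gadget sizes, and a priori it could fail (indeed, the paper proves it \emph{does} fail on lines and on spiders with one empty node). Asserting that ``with the gadget sizes tuned correctly this forces a dance'' assumes exactly what must be proved.

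The paper closes this gap by exhibiting a concrete instance (Figure~\ref{fig:cycleTree}): three designated nodes, each with a pendant set of stubborn agents of prescribed colors and sizes, two same-type strategic agents, and a short cycle of moves each of which is checked to be improving. To repair your proof you would need to do the same. A workable template already appears elsewhere in the paper, in the component $G_2$ of the NP-hardness reduction: three nodes $x$, $y$, $z$ whose stubborn neighborhoods offer a red agent utilities $1$, $5/6$, and $4/7$ respectively, arranged so that with two red strategic agents occupying two of the three nodes, the occupied pair cycles $\{x,y\}\to\{y,z\}\to\{x,z\}\to\{x,y\}$ via improving moves; realizing this gadget as a tree (e.g., by hanging the stubborn agents as leaves and connecting the pieces through stubborn vertices) and writing out the handful of utility comparisons is what completes the proof. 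Until those numbers are on the page, the argument is a plan rather than a proof.
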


\begin{proof}
Figure~\ref{fig:cycleTree} shows an IRC in a diversity-seeking jump game on a tree with a single empty node and two types of agents. It is easy to verify that each move shown is improving. 
\end{proof}

In fact, if the agents at the nodes of the triangles in Figure~\ref{fig:cycleTree} are all stubborn and only the agents that occupy $x$, $y$ or $z$ are strategic, then this \game does not admit an equilibrium. Thus, equilibria are not guaranteed to exist when the topology is a tree and there exist stubborn agents. Nevertheless, we show that there is always an equilibrium assignment when the topology is a tree if all agents are strategic; we give an algorithm to find such an assignment. As a warm-up, we first consider the case where the tree has a single empty node. 

\begin{theorem}\label{thm:tree-empty-node}
Given $k\geq2$, every \game $(R, S, T, G, \lambda)$ where $G$ is a tree, $|V| = |R| + 1 $ and $S = \varnothing$ admits an equilibrium assignment. 
\end{theorem}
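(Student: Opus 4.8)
The plan is to construct an equilibrium assignment directly, exploiting the tree structure. Root the tree $G$ at an arbitrary leaf, and think of the single empty node as a "token" that we will eventually place somewhere. The key structural fact I want to use is that in a tree, removing any edge disconnects the graph; combined with the diversity-seeking utility, this means an agent at a node $v$ with at least one neighbor of a different type already has utility at least $1/\deg(v)$, and an agent adjacent only to an empty node (i.e. effectively isolated) has utility $0$ but can never do worse than $0$ by staying put. So the only agents with an incentive to move are those whose neighborhood, in the current assignment, is monochromatic in their own type (utility $0$) \emph{and} who could reach a node where they get some different-type neighbor.

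First I would reduce to a cleaner combinatorial statement: an assignment $C$ is an equilibrium iff for every agent $\calA$ of type $T$ with $u_\calA(C)=0$, every empty node $v'$ reachable by a jump has the property that all occupied neighbors of $v'$ are also of type $T$ (so $\calA$ gains nothing). Since there is exactly one empty node, "reachable" just means "the unique empty node", so the condition is simply: the unique empty node either has an occupied neighbor of a type that is \emph{not} the type of any zero-utility agent, or has no occupied neighbors at all, or there are no zero-utility agents. The natural approach is a greedy / exchange argument: start from any assignment, and repeatedly apply improving moves, but do so in a disciplined order so that a suitable potential strictly decreases. The candidate potential is $\Phi$ from Definition~\ref{def:potential-function} (with $m$ chosen appropriately, say $m$ close to $1$ so that an empty edge "costs" almost as much as a monochromatic edge) — but Theorem~\ref{thm:tree-not-pot} warns us this cannot literally be a generalized ordinal potential on trees, so I would instead use it only as a progress measure along a \emph{specific} sequence of moves that I control, not along arbitrary IRD.

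Concretely, the algorithm I would propose: while the current assignment is not an equilibrium, there is a zero-utility agent $\calA$ of type $T$ that wants to jump to the empty node $v_0$; necessarily $v_0$ has an occupied neighbor of type $\ne T$. Perform that jump. This strictly increases the number of bichromatic edges by at least one (the new edge from $\calA$ to that different-type neighbor of $v_0$), while the edges vacated by $\calA$ — all previously monochromatic in $T$, since $u_\calA=0$ — turn into empty-incident edges; choosing the edge weights so that monochromatic $=1 > m =$ empty-incident, each vacated edge also \emph{decreases} $\Phi$. Hence $\Phi$ strictly decreases with every move in this controlled sequence, $\Phi$ is bounded below by $0$ and takes finitely many values, so the process terminates — necessarily at an equilibrium. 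Termination of the algorithm \emph{is} the existence proof. The main obstacle I anticipate is the bookkeeping in the inductive step: one must verify that after $\calA$'s jump, no edge weight goes \emph{up} except possibly the single new edge at $v_0$, and in particular that the node $\calA$ left behind — now empty — produces only $m$-weighted or $0$-weighted edges, never a new $1$-weighted one. This is exactly where the tree hypothesis and the zero-utility hypothesis ($\calA$'s old neighbors were all type $T$) do the work, and it needs to be stated carefully; a secondary subtlety is handling the degenerate cases where $v_0$ has no occupied neighbors (then no agent ever wants to move there, and the starting assignment is already an equilibrium) and where there simply are no zero-utility agents to begin with.
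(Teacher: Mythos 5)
Your argument has a genuine gap at its very first reduction, and it propagates through the rest. You claim that the only agents with an incentive to move are those with utility $0$, i.e.\ that an assignment is an equilibrium iff no zero-utility agent can gain a different-type neighbor by jumping. This is false: an agent with utility strictly between $0$ and $1$ (say $1/3$, with one different-type neighbor out of three occupied neighbors) will happily jump to an empty node whose occupied neighbors are all of a different type, obtaining utility $1$. Because of this, your controlled dynamics --- move only zero-utility agents into the empty node --- terminates at a state that need not be an equilibrium: positive-utility agents may still have improving moves there. And you cannot simply enlarge the set of allowed moves to include those, because (as the paper's Theorem~\ref{thm:tree-not-pot} and Figure~\ref{fig:cycleTree} show) there is an improving-response cycle on a tree with a single empty node in which every move is made by a \emph{positive}-utility agent; so no choice of $m$ makes $\Phi$ decrease along all improving moves, and the ``potential along a controlled sequence'' strategy collapses exactly on the moves you excluded. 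A secondary, fixable issue: when a zero-utility agent jumps to the empty node $v_0$, the edges from $v_0$ to its \emph{same}-type occupied neighbors go from weight $m$ up to weight $1$, so more than ``the single new edge'' changes upward; this can be absorbed by taking $m$ close to $1$, but it is not the main problem.

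The paper avoids dynamics entirely and gives a direct construction: root the tree at a leaf $r$, pre-sort the agents by type, fill the odd levels bottom-up and then the even levels top-down, leaving only $r$ empty. The parity structure guarantees that every agent except those of one distinguished type (the type placed at $r$'s unique neighbor) has all neighbors of a different type, hence utility $1$; and the only empty node is adjacent solely to an agent of that distinguished type, so the remaining agents, whatever their utility, gain nothing by jumping. If you want to salvage an existence proof along your lines, you would need either such an explicit target assignment or a restricted dynamics whose terminal states you can actually certify as equilibria --- the zero-utility restriction does not do that.
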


\begin{proof}
Pick a node $r$ of degree one to be the root of the tree and call its unique neighbor as $v$. Assume the agents are pre-ordered by type. The algorithm proceeds in two phases. In the first phase, starting with the deepest {\em odd} level in the tree, place agents in the current level $\ell$, moving from left to right. When the current level is completely filled, skip a level, go to level $\ell-2$ and repeat, until the unique node $v$ in level $1$ is filled, say by an agent of a type $T_x$: For convenience, we call it a {\em red agent}. This ends the first phase. In the second phase, move downwards from level $2$, filling even levels from left to right.  

See Figure~\ref{fig:treeSingle} for an example. At the end of this procedure, only the root node is empty. Furthermore, all non-red agents are guaranteed to have utility one, and have no incentive to move. There may be red agents that have utility less than $1$, but they cannot improve their utility by moving to the root. This proves that the assignment is an equilibrium. 
\end{proof}

\begin{figure}[t]
    \centering
    \includegraphics[scale=0.38]{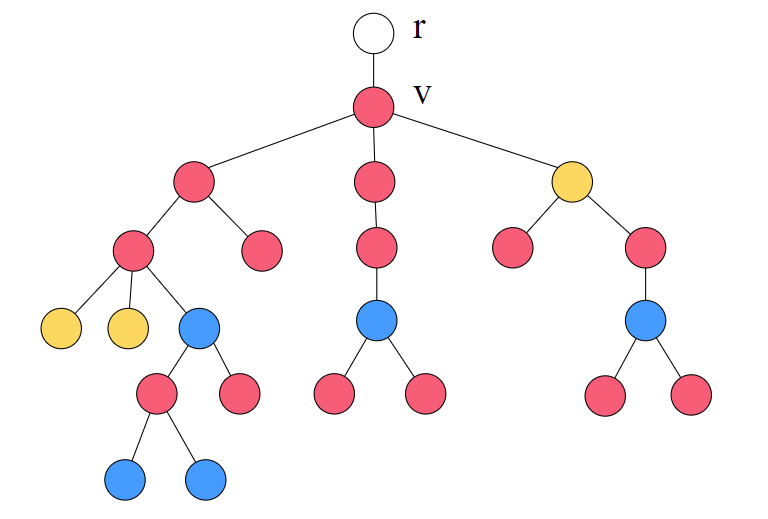}
    \caption{An equilibrium assignment for an instance with $k=3$. The red agents are placed first, starting at level 5, and ending at level 2, followed by yellow, and blue agents respectively. By convention, $r$ is on level 0.}
    \label{fig:treeSingle}
    \end{figure}

For the case where the number of empty nodes is more than $1$, we first find a connected sub-tree with one more node than the number of agents, and find an equilibrium assignment for the sub-tree using the method of Theorem~\ref{thm:tree-empty-node}. Next we adjust this assignment to get an assignment for the original tree by moving some agents to empty nodes in the original tree that were not present in the sub-tree; this ensures that no agent will have an improving move. Before we provide the full proof, we present some few easily seen properties of the assignment for trees described above:
\begin{itemize}
\item[(P1)] Every non-red agent is either placed on odd levels or on even levels but not both. This implies that for every non-red agent $\calA$ placed on  a particular level, there are no agents of its own type in the previous or next levels. 

\item [(P2)] Except possibly for two red agents, that we call {\em mixed} agents, a red agent has either all red children\footnote{For brevity, we say an agent is a child (neighbor/parent) of another agent to mean an agent is placed at a node that is the child (neighbor/parent) of the node where the other agent is placed.} or all non-red children. This is because there are only two levels that can have both red and non-red agents, namely the levels where we start and stop placing red agents.  

\item [(P3)] Every red agent with a non-red parent can only have non-red children, and therefore has utility one. To see this, note that the parent and the children of this red agent are either both on odd levels or both on even levels. Additionally, note that red agents are the last agents to fill the odd levels and the odd levels are filled from bottom to top so if an odd level does not contain all red agents, no odd level below that level contain red agents. Similarly, even levels are filled from top to bottom, with red agents being the first to fill them, so if an even level does not contain all red agents, no even level below that level contain red agents. 
\end{itemize}

We are now ready to show that there is always an equilibrium when the topology is a tree. 

\begin{theorem}\label{thm:tree-many-empty-nodes}
Given $k\geq2$, every \game $(R, S, T, G, \lambda)$ where $G$ is a tree, $|V| > |R| + 1 $ and $S = \varnothing$ admits an equilibrium assignment. 
\end{theorem}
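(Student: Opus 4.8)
The plan is to reduce the many-empty-nodes case to the single-empty-node case handled by Theorem~\ref{thm:tree-empty-node}, and then carefully ``repair'' the resulting assignment so that agents cannot profitably jump to the extra empty nodes. First I would select a connected subtree $G'$ of $G$ with exactly $|R|+1$ nodes: starting from the whole tree $G$ (which has $|V| > |R|+1$ nodes), repeatedly delete a leaf, until exactly $|R|+1$ nodes remain; since deleting a leaf of a tree keeps it connected, $G'$ is a connected subtree. Apply the algorithm of Theorem~\ref{thm:tree-empty-node} to $G'$ to obtain an assignment $C'$ in which only the root $r$ of $G'$ is empty, every non-red agent has utility $1$, and no red agent can improve by jumping to $r$. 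The issue is that in the full tree $G$ there are additional empty nodes (the deleted leaves and possibly other nodes outside $G'$), and some agent — in particular a red agent with utility $<1$ — might now have an improving move to one of those nodes.

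The key observation I would exploit is that in $G'$, by properties (P1)--(P3), every non-red agent already has utility $1$, and by (P3) every red agent whose parent (in $G'$) is non-red also has utility $1$; the only agents that might have utility $<1$ are red agents all of whose relevant neighbors in $G'$ are red — essentially the ``block'' of red agents near the top of the tree, together with at most two ``mixed'' red agents. The plan is: if some red agent $\mathcal{A}$ in $C'$ has utility $<1$ and has an improving move to an empty node $w \in V \setminus V'$, then $w$ has at least one non-red neighbor (otherwise the move would not be improving for a red agent unless $w$ is adjacent only to red agents or is isolated, giving utility $\le$ current). I would then move $\mathcal{A}$ to such a node $w$, or more robustly, I would relocate the entire block of non-unit-utility red agents one at a time into the spare empty nodes of $G$ in such a way that each lands next to a non-red agent (or next to no agent at all, keeping utility at the convention value $0$, which we must rule out as non-improving). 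Because $G$ is a tree and there is at least one extra empty node, there is enough room; the red block has bounded structure so only finitely many such relocations are needed, and after each relocation the moved agent attains utility $1$ and the freed node in $G'$ becomes empty, which only helps.

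The main obstacle — and the step I expect to require the most care — is verifying that after the repair step, \emph{no} agent (including the newly relocated red agents, the stationary non-red agents whose neighborhoods may have changed, and the red agents still in $G'$) has an improving move to \emph{any} empty node of $G$. Moving an agent out of $G'$ can change the utility of its former neighbors: a non-red agent that had a red neighbor removed only gains (its utility stays $1$ or the denominator shrinks favorably), so non-red agents remain at utility $1$; but a red agent that loses a red neighbor might see its utility rise, which is fine, or a red agent adjacent to the vacated node might now want to jump \emph{into} it — I must check that the vacated node, now surrounded by whatever remains, does not offer utility strictly above what every adjacent agent already has. The cleanest way to organize this is to argue that at the end every agent has utility $1$ except possibly a constant-size set of red agents forming a contiguous clump all of whose neighbors are red, and that for such a clump no empty node in the whole tree can offer a red agent utility $>\tfrac{(\text{red neighbors})-1+\dots}{\dots}$ better than what it has — mirroring the final paragraph of the proof of Theorem~\ref{thm:tree-empty-node}. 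I would present this as a sequence of short claims: (i) $G'$ exists and is a connected subtree; (ii) applying Theorem~\ref{thm:tree-empty-node} gives $C'$ with the stated properties; (iii) the repair procedure terminates; (iv) in the final assignment every non-red agent has utility $1$; and (v) no red agent has an improving move, invoking (P1)--(P3) together with the counting argument that any empty node adjacent to a would-be improving red agent necessarily contains, in its neighborhood, no configuration strictly better than its current one.
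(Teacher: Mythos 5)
Your high-level strategy is the same as the paper's: extract a connected subtree $G'$ with $|R|+1$ nodes by repeatedly deleting leaves, apply Theorem~\ref{thm:tree-empty-node} to $G'$, and then relocate the low-utility red agents to empty nodes of $G \setminus G'$. However, the repair step as you sketch it has a genuine gap, and one of your intermediate claims is false. You assert that when a red agent is relocated, ``a non-red agent that had a red neighbor removed only gains \dots so non-red agents remain at utility $1$.'' This fails precisely when the relocated red agent was the \emph{only} occupied neighbor of some non-red agent: that non-red agent is then isolated and drops from utility $1$ to utility $0$ (by the paper's convention), after which it has improving moves of its own and the whole argument unravels. This situation genuinely arises for the ``mixed'' red agents of property (P2) — a mixed agent $\calA$ can have non-red children that are leaves of $G'$ with no other neighbors — and it is exactly why the paper inserts a preprocessing step that swaps the positions of $\calA$'s red children having further neighbors with its isolated non-red children, establishing (Claim~\ref{claim:tree}) that afterward either all red children of $\calA$ have utility $0$ (so $\calA$ need not move at all) or every non-red child of $\calA$ has children of its own (so $\calA$ can safely leave). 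Your plan to relocate ``the entire block of non-unit-utility red agents'' would also move red children of $\calA$ that have non-red children, which the paper deliberately never moves, for the same reason.

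Two further points you leave vague are load-bearing in the paper. First, the reason the repair is tractable at all is the structural fact that every empty node of $G \setminus V'$ is adjacent to \emph{at most one} occupied node (a consequence of building $G'$ by repeated leaf deletion), so each empty node offers a red agent utility exactly $0$ or exactly $1$ and nothing in between; the paper then shows this invariant is preserved after every relocation (the vacated node becomes adjacent only to red agents and empty nodes, and empty neighbors of the newly occupied node become worthless to red agents). You gesture at ``a counting argument'' but never isolate this invariant, and without it you cannot rule out that a relocation creates an empty node offering some intermediate utility that tempts another agent. Second, you do not handle termination correctly: the supply of ``available'' nodes (empty, adjacent only to a non-red agent) may run out before all low-utility red agents are relocated. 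The paper's argument is that this is fine — at that point every remaining empty node offers utility $0$ to a red agent, so the leftover low-utility red agents have no improving move — whereas your sketch implicitly assumes every such agent can be relocated (``there is enough room''), which need not hold. To make your proof complete you would need to add the preprocessing swap, state and maintain the one-occupied-neighbor invariant for empty nodes, order the relocations so that only utility-$0$ red agents (plus at most one mixed agent, last) ever move, and argue both termination cases separately.
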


\begin{proof} 
Given the tree $G = (E,V)$, fix a root node of degree one, and repeatedly remove leaf nodes until we have a tree with exactly $|R|+1$ nodes, call it $G' = (E',V')$. Let $C_{G'}$, denote the equilibrium assignment in $G'$ as described in the proof of Theorem~\ref{thm:tree-empty-node}. As before, let us call the type of the agent assigned to the unique neighbor of the root a red agent. 

Now consider the original tree $G$, and consider the same placement of agents as in $C_{G'}$, call this assignment $C_G$. Clearly the utility of agents in $C_G$ is exactly the same as their utility in $C_{G'}$, as acquiring new empty neighbors does not change an agent's utility. From Property (P1), all non-red agents have utility $1$. As in $C_{G'}$, there may be some red agents that have utility less than one in $C_G$, but in $G$, they may have improving moves available, i.e., $C_G$ may not be an equilibrium.

Next we show how to convert $C_G$ into an equilibrium assignment, by changing the locations of some agents one by one in such a way that the agent being moved increases its utility and no other agent decreases its utility. We first do some preprocessing with the children of mixed agents. By property (P2), there are at most 2 red mixed agents. Let $\calA$ be such an agent; if there are two, we call the agent closer to the root $\calA$ and the one further from the root $\calB$. Notice  that $\calA$ and $\calB$ must be on adjacent levels. Suppose not, then notice that the level below $\calA$ and the level on which $\calB$ is located are both odd or both even. Then $\calB$ is either on an odd level below the level we start placing red agents or on an even level below the level we finish placing the red agents. In both cases, $\calB$ cannot contain red agents, therefore $\calA$ and $\calB$ must be on adjacent levels.
 
For the preprocessing, we make a local adjustment to the placement of children of $\calA$. Let $S_1$ be the set of red children of $\calA$ that are adjacent to at least one more agent, and let $S_2$ be its  non-red children with {\em no} other agents as neighbors. If $S_2 = \varnothing$, and the second mixed agent $\calB$ exists and is a child of $\calA$, we swap\footnote{Note that this is a jump game, so the swap referred to here is not to be confused with improving response dynamics.} the position of $\calB$ with any non-red child of $\calA$. Note that the non-red child in its new position still has utility $1$, as do all non-red children of $\calA$ in this case. Another consequence of this swap is that $\calB$ is no longer a mixed agent, so there is only one mixed agent in the tree. 

If $S_2 \neq \varnothing$, we swap the positions of $\min(|S_1|, |S_2|)$ agents in $S_2$ with the agents in the other set. If $\calB$ is a child of $\calA$, then we make sure that $\calB$ participates in the swap. 

\begin{claim}
After the preprocessing operation, {\em at least one} of the following two conditions holds: 
\begin{enumerate} 
    \item  All red children of $\calA$ have utility $0$.
    \item  All of $\calA$'s non-red children have children of their own. 
\end{enumerate}

\label{claim:tree}
    
\end{claim}

\begin{proof}
If $|S_1| \leq |S_2|$, by (P2), the agents in $S_2$ that swapped positions now have children of a different color than themselves, and still have utility $1$. The remaining agents in $S_2$ have no neighbors other than $\calA$ and retain utility $1$. Also, all the red children of $\calA$, have no neighbors other than $\calA$ and have utility $0$. See Figure~\ref{fig:swaps} (a). 

Otherwise, if $|S_1|>|S_2|$, all non-red children of $\calA$ have children of a different type than themselves and, by property (P1), have utility $1$. So, if $\calA$ moves from its position, these non-red agents would still have utility $1$. Some of the red children of $\calA $ may have red children, and others may have non-red children. See Figure~\ref{fig:swaps} (b).
\end{proof}

If there remains another mixed agent in the tree, then this agent $\calB$ is not a child of $\calA$, and we perform the same pre-processing operation for $\calB$'s children as well.

\begin{figure}[t]
\centering
\includegraphics[scale=0.38]{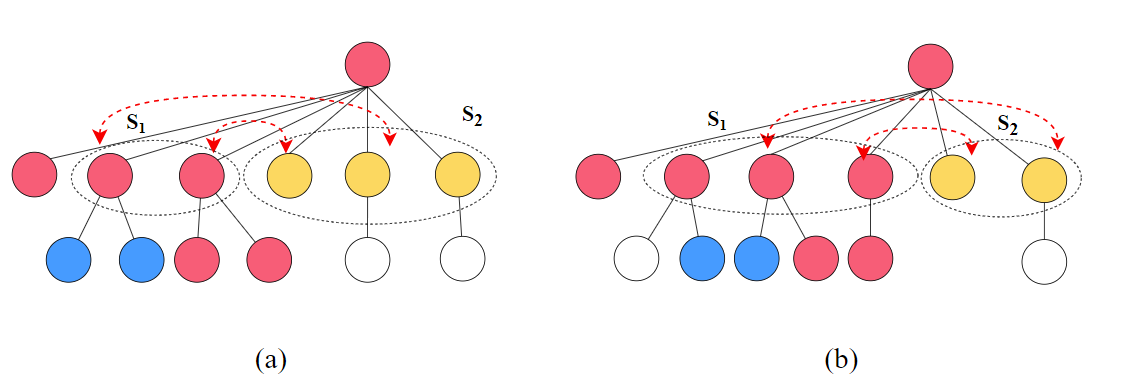}
\caption{ Swapping the red children adjacent to at least one more agent ($S_1$) with non-red children with no other agents as neighbors ($S_2$). (a): $|S_1| \leq |S_2|$. (b): $|S_1| > |S_2|$. Red dotted arrows show the swaps.}
\label{fig:swaps}
\end{figure}

We now create an ordered list of {\em candidate} red agents to move: starting with level $1$, moving down level by level, we put in red agents that have utility $0$ in this list. It remains to consider agents of utility strictly between $0$ and $1$: these are either agents that are mixed, or their children red agents who have non-red children. 

First consider agent $\calA$. If Condition (1) holds, then all of $\calA$'s red children are already in the list of candidates. If they all move, then $\calA$'s utility will change to $1$, so we do not put $\calA$ into the candidate list. Otherwise, if Condition (2) of Claim~\ref{claim:tree} holds, we put $\calA$ in the list of candidates. 
Next, suppose there was a second mixed agent $\calB$. As observed earlier, $\calB$ cannot be a child of $\calA$. 
Because our placement algorithm for the tree $G'$ placed agents from left to right, it can be verified that all of $\calA$'s red children can only have red children (see Figure~\ref{fig:two-mixed-agents}). Therefore, condition (1) of Claim~\ref{claim:tree} is satisfied, and $\calA$ was not placed in the candidate list. We follow the same procedure to decide if $\calB$ should be placed in the candidate list. We see that there is at most one mixed agent in the candidate list, and it is the last agent in the list. 

Next we describe {\em where} to move the candidate agents. Before we move any agent, note that all empty nodes have the property that they are adjacent to at most a single node containing an agent. This follows from the manner in which $G'$ was extracted from $G$ by repeatedly removing leaves. Therefore, all empty nodes either offer utility $0$ or $1$ to a red agent. When moving agents, we will maintain this property of the set of empty nodes. 

Call a node {\em available} if it is empty and adjacent only to a non-red agent, and perhaps other empty nodes. Observe that if a candidate red agent moves to an available node, we have the following properties:
(a) the utility of the candidate increases to $1$; 
(b) the utility of all other agents stays the same or increases; 
(c) the number of available nodes decreases by $1$;
(d) the newly vacated node is adjacent only to red agents and possibly empty nodes\footnote{Property (d) is not true if a mixed agent vacates its spot, but this happens only in the last step, when there are no more candidates.} and therefore offers utility $0$ to any red agent;
(e) any empty neighbor of the newly occupied node offers utility $0$ to any red agent. 
Thus, we maintain the desired property of empty nodes. 

We are now ready to describe our procedure to convert $C_G$ into an equilibrium assignment. We repeatedly take the first candidate from the ordered list of candidates and move it to an available node. This process ends when there are no more available nodes or there are no more candidates. If there are no more available nodes, we have an equilibrium assignment, as all other empty nodes offer utility $0$ to a red agent. If there are no more candidates, then all red agents have utility $1$, and we have an equilibrium assignment. 
\end{proof}

\begin{figure}[t]
\centering
\includegraphics[scale=0.45]{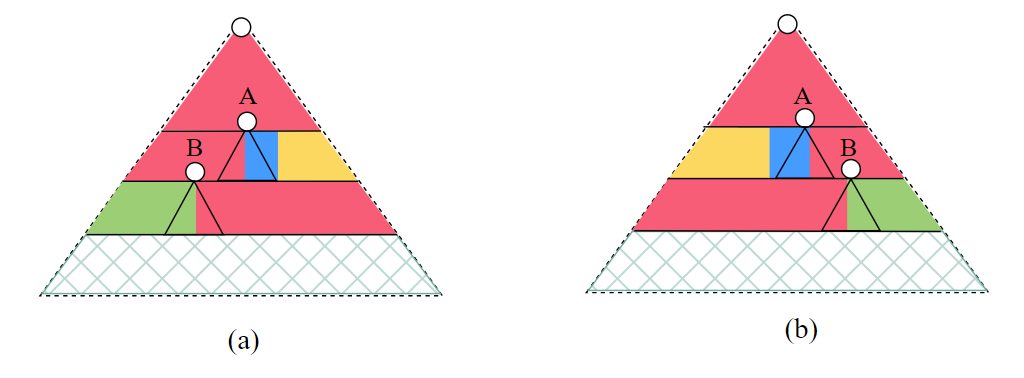}
\caption{All the red children of $\calA$ can only have red children (if any) : (a) $\calB$ is on the even level at which we finish placing the red agents, then all the red agents that are to the right of $\calB$ on the same level (thus children of $\calA$) can only have red children (if any) (b) $\calB$ is on the odd level at which we start placing the red agents, then all the red agents that are to the left of $\calB$ on the same level (thus children of $\calA$) can only have red children (if any). The cross hashed levels may or may not exist, but if they do, they are  occupied by non-red agents.}
\label{fig:two-mixed-agents}
\end{figure}

\subsection{Spider Topologies}
\label{subsec:Spider-graphs}
A {\em spider graph}, also called a starlike graph, is a tree in which all nodes besides one, called the {\em center node} $c$, have degree of at most 2 (see Figure~\ref{fig:cycle-spider-similarity-seeking} for an example); the degree of $c$ is $\delta(c) \geq 3$.
We will show that any \game in which the topology is a spider graph with a single empty node is a potential game. 
Unfortunately the function $\Phi$ from Definition~\ref{def:potential-function} is not a generalized ordinal potential function for any value of $m \in (0,1)$, as there are some improving moves for agents that increases the value of the function. However, we will show that any {\em long enough sequence of improving moves} is guaranteed to result in a lower value of the function. This will suffice to show convergence of IRD. 

We start with a lemma providing upper bounds on the change in potential caused by different types of moves.

\begin{lemma}\label{lem:spider-transitions}
For $k \geq 2$, in every \game $(R, S, T, G, \lambda)$~where $G$ is a spider graph, for an improving move of an agent $\calA$ of type $T$ in assignment $C_0$ that results in assignment $C_1$, we have:
 \begin{enumerate}
     \item $\Phi(C_1)-\Phi(C_0) \leq -m$ if the improving move does not involve the center node;
     \item $\Phi(C_1)-\Phi(C_0) \leq 2m-1 + n_T(c, C_1) -m\delta(c) $ if the improving move is from a degree-2 node to the center node;
     \item $\Phi(C_1)-\Phi(C_0) \leq m - m\delta(c) + n_T(c, C_0)$ if the improving move is from a leaf to the center node;
     \item $\Phi(C_1)-\Phi(C_0) \leq 1-2m + m\delta(c) - n_T(c, C_0)$ if the improving move is from the center node to any other node.
 \end{enumerate}
\end{lemma}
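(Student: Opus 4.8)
The plan is to compute $\Phi(C_1)-\Phi(C_0)$ exactly and read the four bounds off the resulting identity. Suppose $\calA$, of type $T$, jumps from a node $u$ to the empty node $v$ of $C_0$ (so $u$ becomes the empty node of $C_1$). Since the jump changes only the occupancy of $u$ and $v$, only edges incident to $u$ or to $v$ can change weight. The first observation is that the edge $\{u,v\}$, when it exists, has weight $m$ in $C_0$ (since $v$ is empty) and weight $m$ in $C_1$ (since $u$ is empty), so it contributes nothing. For every other neighbor $w$ of $u$, the edge $\{u,w\}$ moves $1\to m$ if $w$ has type $T$, $0\to m$ if $w$ has a different type, and stays at $m$ if $w$ is empty; symmetrically, for every neighbor $w\neq u$ of $v$, the edge $\{v,w\}$ moves $m\to 1$, $m\to 0$, or stays at $m$ in the same three cases. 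Adding these contributions gives
\[
\Phi(C_1)-\Phi(C_0)=(1-m)\bigl(n_T(v,C_1)-n_T(u,C_0)\bigr)-m\bigl((n(v,C_1)-n_T(v,C_1))-(n(u,C_0)-n_T(u,C_0))\bigr),
\]
where the neighborhood of $v$ is measured in $C_1$ precisely so that the case $u\in N(v)$ needs no special treatment.

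From this identity the four inequalities follow by substituting structural facts. Since $C_0$ has a single empty node (as in the setting of this subsection), every neighbor of $u$ other than $v$ is occupied, and every neighbor of $v$ other than $u$ is occupied in $C_1$; and since every node other than the center has degree at most $2$, a leaf has exactly one neighbor and a degree-$2$ node exactly two, so whenever one of $u,v$ is a non-center node adjacent to the other, it has exactly one occupied neighbor besides it. I would also use two consequences of the move being improving: $\calA$'s utility at $u$ in $C_0$ is strictly below its utility at $v$ in $C_1$, hence $u_{\calA}(C_0)<1$ and $n(v,C_1)-n_T(v,C_1)\ge 1$. In case~1, both $u$ and $v$ have degree at most $2$, so the four counts in the identity are integers in $\{0,1,2\}$, and enumerating the few profiles compatible with the improving condition and a single empty node gives $\Phi(C_1)-\Phi(C_0)\le -m$ in each. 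In cases~2 and~3 we have $v=c$, and $n(c,C_1)$ is $\delta(c)$ or $\delta(c)-1$ according to whether $u\notin N(c)$ or $u\in N(c)$; substituting this, together with $n(u,C_0)\le 2$ (degree-$2$ node) or $\le 1$ (leaf) and the value of $n_T(u,C_0)$ that $u_{\calA}(C_0)<1$ forces, makes the right-hand side of the identity at most the stated bound (with equality in one sub-case and slack to spare in the others). Case~4 is the mirror image, with $u=c$, $n(c,C_0)\in\{\delta(c)-1,\delta(c)\}$, and the improving condition confining the counts at $v$ to a tiny set.

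The bulk of the work — and the only place any care is needed — is the bookkeeping of case~1 combined with the adjacency split, since whether $u$ and $v$ are adjacent determines which of the two sums the edge $\{u,v\}$ sits in and whether the relevant counts are capped at $1$ or at $2$, and this has to be threaded through all four cases. The one genuinely delicate configuration is in case~1: when $\calA$ leaves a node with one same-type and one different-type occupied neighbor (utility $1/2$) for a node at which it has utility $1$, the identity yields $\Phi(C_1)-\Phi(C_0)=m-1$, so the bound $-m$ holds exactly when $m\le 1/2$; this is the range of $m$ under which $\Phi$ is used in what follows, and with such an $m$ all remaining cases (and the three center-involving bounds, which in fact hold for every $m\in(0,1)$ in the single-empty-node setting) go through without further conditions.
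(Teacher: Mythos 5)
Your proposal is correct and follows essentially the same route as the paper's proof: both arguments rest on the same local bookkeeping of the edge weights incident to the source and target nodes (with the edge between them contributing nothing), the only difference being that you package the computation as a single closed-form identity and then specialize, whereas the paper tabulates each source/target/utility/adjacency sub-case directly. You also correctly surface two hypotheses the paper uses implicitly here — the single-empty-node setting (which forces the occupied-neighbor counts) and the restriction $m\le 1/2$ needed for the degree-2-to-leaf move with utilities $1/2\to 1$ in case 1 — both of which the paper only records in the table captions and the discussion following the lemma.
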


\begin{proof}
We consider the following exhaustive cases:
\begin{enumerate}
\item The improving move does not involve the center node $c$. 
\begin{enumerate}
    \item Agent $\calA$ jumps from a degree-2 node to another degree-2 node:
    \begin{enumerate}
    \item $u_{\calA}(C_0) = 0 $ and $u_{\calA}(C_1) = \frac{1}{2}$. Then $\Phi(C_1) - \Phi(C_0) = (1+2m) - (2+2m) = -1$.
    \item $u_{\calA}(C_0) = \frac{1}{2} $ and $u_{\calA}(C_1) = 1$. Then $\Phi(C_1) - \Phi(C_0) = (2m) - (1+2m) = -1$.
    \item $u_{\calA}(C_0) = 0 $ and $u_{\calA}(C_1) = 1$ and source and target nodes are not adjacent. Then $\Phi(C_1) - \Phi(C_0) = (2m) - (2+2m) = -2$.
    \item $u_{\calA}(C_0) = 0 $ and $u_{\calA}(C_1) = 1$ and source and target nodes are adjacent. Then $\Phi(C_1) - \Phi(C_0) = (2m) - (2m+1) = -1$.
    \end{enumerate}
    \item Agent $\calA$ jumps from degree-2 node to leaf node:
    \begin{enumerate}
    \item $u_{\calA}(C_0) = 0 $ and $u_{\calA}(C_1) = 1$. Then $\Phi(C_1) - \Phi(C_0) = (2m) - (2+m) = m-2$.
    \item $u_{\calA}(C_0) = \frac{1}{2} $ and $u_{\calA}(C_1) = 1$. Then $\Phi(C_1) - \Phi(C_0) = (2m) - (1+m) = m-1$.
    \end{enumerate}
    \item Agent $\calA$ jumps from leaf node to degree-2 node:
    \begin{enumerate}
    \item $u_{\calA}(C_0) = 0 $ and $u_{\calA}(C_1) = \frac{1}{2}$. Then $\Phi(C_1) - \Phi(C_0) = (m+1) - (1+2m) = -m$.
     \item $u_{\calA}(C_0) = 0 $ and $u_{\calA}(C_1) = 1$ and source and target nodes are not adjacent. Then $\Phi(C_1) - \Phi(C_0) = (m) - (1+2m) = -1 -m $.
     \item $u_{\calA}(C_0) = 0 $ and $u_{\calA}(C_1) = 1$ and source and target nodes are adjacent. Then $\Phi(C_1) - \Phi(C_0) = (m) - (2m) = -m$.
     \end{enumerate}
     \item Agent $\calA$ jumps from leaf node to leaf node: $\Phi(C_1) - \Phi(C_0) = (m) - (1+m) = -1$.
     
\end{enumerate}
\item The improving move is from a degree-2 node to center node:
\begin{enumerate}
    \item $u_{\calA}(C_0) = \frac{1}{2} $ and $u_{\calA}(C_1) >\frac{1}{2}$: $\Phi(C_1) - \Phi(C_0) = (2m + n_{T}(c, C_1)) - (1+m\delta(c))$.
     \item $u_{\calA}(C_0) = 0 $ and $u_{\calA}(C_1) > 0$ and source and target nodes are not adjacent: $\Phi(C_1) - \Phi(C_0) = (2m + n_{T}(c, C_1)) - (2+m\delta(c)) $.
    \item $u_{\calA}(C_0) = 0 $ and $u_{\calA}(C_1) > 0$ and source and target nodes are adjacent: $\Phi(C_1) - \Phi(C_0) = (2m + n_{T}(c, C_1)) - (1+m\delta(c))$.
\end{enumerate}
  \item The improving move is from a leaf to center node:
    \begin{enumerate}
    \item Source and target nodes are not adjacent: $\Phi(C_1) - \Phi(C_0) = (m + n_{T}(c, C_1)) - (1+m\delta(c)) $.
    \item Source and target nodes are adjacent: $\Phi(C_1) - \Phi(C_0) = (m + n_{T}(c, C_1)) - (m\delta(c))$.
    \end{enumerate}
\item  The improving move is from center node to any other node:
    \begin{enumerate}
    \item Agent $\calA$ jumps from center node to degree-2 node:
    \begin{enumerate}
    \item $u_{\calA}(C_0) <\frac{1}{2} $ and $u_{\calA}(C_1) = \frac{1}{2}$. Then $\Phi(C_1) - \Phi(C_0) = (1+ m\delta(c)) - (n_T(c, C_0) + 2m) $.
     \item $u_{\calA}(C_0) <1 $ and $u_{\calA}(C_1) = 1$ and source and target nodes are not adjacent. Then $\Phi(C_1) - \Phi(C_0) = (m\delta(c)) - (n_{T}(c, C_0) + 2m) $.
    \item $u_{\calA}(C_0) <1 $ and $u_{\calA}(C_1) = 1$ and source and target nodes are adjacent. Then $\Phi(C_1) - \Phi(C_0) =  (m\delta(c)) - (n_{T}(c, C_0) + 2m) $.
    \end{enumerate}
    \item Agent $\calA$ jumps from center node to leaf node: $\Phi(C_1) - \Phi(C_0) = (m\delta(c)) - (n_T(c, C_0) + m)$.
    \end{enumerate}
\end{enumerate}
Tables~\ref{tab:spider-transitions} and \ref{tab:spider-transitions-center} provide a more detailed counting of the change in potential based on the source and target node of a move.
\end{proof}

\begin{table}[h]
\small
\centering
\begin{tabular}{|c|c|c|c|}
\hline
Type of move                  & Source utility & Target utility & $\Delta \Phi \leq$ \\ \hline
\multirow{3}{*}{degree 2 to degree 2} & $0$              & $1/2$           & $-1$                   \\ \cline{2-4} 
                              & $0$              & $1$              & $-1$                   \\ \cline{2-4} 
                              & $1/2 $           & $1$              & $-1$                   \\ \hline
\multirow{2}{*}{degree 2 to leaf} & $0$              & $1$              & $m-2$                  \\ \cline{2-4} 
                              & $1/2$            & $1$              & $m-1 $                 \\ \hline
\multirow{2}{*}{Leaf to degree 2} & $0$              & $1/2$            & $-m$                   \\ \cline{2-4} 
                              & $0$              & $1$              & $-m$                   \\ \hline
Leaf to leaf                  & $0$              & $1$              & $-1$                   \\ \hline
\end{tabular}
\caption{Summary of the change in the potential change when the topology is a spider graph, resulting from improving moves that do not involve the center node. Note that $m < 1/2$.}
\label{tab:spider-transitions}
\end{table}

\begin{table*}[h]
\small
\centering
\begin{tabular}{|c|c|c|c|}
\hline
Type of move                    & Source utility & Target utility    & $\Delta \Phi \leq$ \\ \hline
degree 2 to center (adjacent) & $0$              & $>0$   &    $2m-1 + n_T(c, C_1) - m\delta(c) $              \\ \hline
\multirow{2}{*}{degree 2 to center (non-adjacent)} & $0$              & $>0$   &    $2m-2 + n_T(c, C_1) - m\delta(c) $            \\ \cline{2-4} 
                                & $1/2$            & $>1/2$ &
                                $2m - 1 + n_T(c, C_1) -m\delta(c)$             \\ \hline
\multirow{2}{*}{Center to degree 2} & $<1/2$ & $1/2$              &   $1 - 2m + m\delta(c) - n_T(c, C_0)$               \\ \cline{2-4} 
                                & $<1$   & $1$                 &        $-2m + m\delta(c) - n_T(c, C_0)$          \\ \hline
Leaf to center (adjacent) & $0$              & $>0$   &    $m + n_T(c, C_1) - m\delta(c)$                    \\ \hline
Leaf to center (non-adjacent)   & $0$              & $>0$   &    $m  -1 + n_T(c, C_1) - m\delta(c)$                    \\ \hline
Center to leaf             & $<1$   & $1$                 &      $- m + m\delta(c) - n_T(c, C_0)$            \\ \hline
\end{tabular}
\caption{ Summary of change in when the topology is a spider graph, resulting from improving moves that involve the center node. Note that  $m < 1/2$.}
\label{tab:spider-transitions-center}
\end{table*}

For the remainder of the section we will assume that $m < \frac{1}{2}$. Lemma~\ref{lem:spider-transitions} then implies that any improving move of an agent that does not involve the center node always {\em decreases} the value of the potential function $\Phi$. However, an improving move of an agent that moves to or out of the center can increase the potential by a non-constant amount.  The change in potential is related to the degree of the center node and the number of neighbors of the same type as the agent that moved to or from the center node. We claim, however, that such a move that increases the potential must be followed by moves that collectively decrease the potential {\em below} its value before the move in question, and thus that the game is a potential game. 

Suppose towards a contradiction that there is an IRC. Consider the smallest IRC of length $p$, and denote the assignments that it involves by  $C_0, C_1, \ldots C_{p-1}$.  Since any improving move of an agent that does not involve the center node always  decreases the potential,  our IRC must involve moves to or out of the center node. Since every move of an agent $\calA$ to the center node must be followed by a move of agent $\calA$ out of the center node, we calculate the change in potential resulting from the set of moves that includes the move to the center, move out of the center, and any moves in between. Let $C_{j_{0}}, C_{j_{1}}, \ldots C_{j_{w}}$, be the assignments in the cycle in which the center node $c$ is unoccupied, listed in increasing order. Furthermore, define  $C^{+}_{j_{i}}, C^{-}_{j_{i}},$ where $0 \leq i \leq w $ to be the immediate successor and predecessor of $C_{j_{i}}$. We will show that for every $i$, the difference in potential between $C_{j_i}$ and $C_{j_{i+1}}$ is always at most $0$; For readability, we use $i+1$ to mean $(i+1) \bmod (w+1)$ when referring to the indices of the assignments. Furthermore, we will show that it impossible for  the difference to be $0$ for {\em every such pair}, which yields a contradiction to the existence of an IRC. 

The following  lemma gives an upper bound on the change in potential caused by moves {\em inbetween} a move of an agent $\calA$ to the center node and the subsequent move out of the center node, that is between the assignments $C^+_{j_{i}}$ and $C^-_{j_{i+1}}$, for any $0 \leq i\leq w$. Let the center node $c$ be occupied by an agent $\calA$ of type $T$ in both these assignments and all assignments in between. Assume that $\calA$ moved from a source node $s$ in $C_{j_i}$ to the center node and subsequently moved to a target node $t$ in $C_{j_{i+1}}$. Let $\Delta_T = n_T(c,C^-_{j_{i+1}})- n_T(c, C^+_{j_{i}})$. 

\begin{lemma}\label{lem:change-in-neighborhood}
We have
\begin{numcases}{ \Phi(C^-_{j_{i+1}}) - \Phi(C^+_{j_{i}}) \leq}
 - |\Delta_{T}| & \text{if $s$, $t$ are not adjacent to $c$} \label{case-st}\\
 -|\Delta_{T}| + m &\text{ if only $t$ is adjacent to $c$} \label{case-t}\\
 -|\Delta_{T}| + m -1&\text{if only $s$ is adjacent to $c$}. \label{case-s}
\\
-|\Delta_{T}| + 2m -1 &\text{if $s$, $t$ are adjacent to $c$}. \label{case-no-st}
\end{numcases}
\end{lemma}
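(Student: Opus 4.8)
The statement to prove is Lemma~\ref{lem:change-in-neighborhood}, which bounds the net change in potential over the block of moves that occur while agent $\calA$ sits at the center node $c$ (from the assignment $C^+_{j_i}$ right after $\calA$ arrives at $c$ to the assignment $C^-_{j_{i+1}}$ just before $\calA$ leaves $c$). The overall idea is to decompose this block into two parts: the moves \emph{not} involving the center (there may be several, in between), and then separately account for the fact that the endpoints of the block carry the ``center arrival'' and ``center departure'' bookkeeping only partially — the rest of those two special moves are charged to $C_{j_i} \to C^+_{j_i}$ and $C^-_{j_{i+1}} \to C_{j_{i+1}}$ and handled in the outer argument.

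\smallskip
\textbf{Step 1: The spokes incident to $c$ form $\delta(c)$ disjoint paths.} Since $G$ is a spider, deleting $c$ leaves a disjoint union of paths (the legs). While $\calA$ occupies $c$, no move can enter or leave $c$, so every move in the block $C^+_{j_i},\dots,C^-_{j_{i+1}}$ is a move within $G \setminus \{c\}$, i.e.\ a move of the kind covered by part~(1) of Lemma~\ref{lem:spider-transitions}. However, such a move can change the \emph{weights of the edges incident to $c$} — an agent can jump onto or off of a neighbor of $c$, thereby toggling an edge $\{c,\text{leg-endpoint}\}$ between weight $m$ (neighbor empty) and weight $0$ or $1$ (neighbor occupied, differing/same type as $\calA$). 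So I first split the potential into $\Phi = \Phi_{\mathrm{out}} + \Phi_c$, where $\Phi_c$ is the sum of weights of the $\delta(c)$ edges incident to $c$ and $\Phi_{\mathrm{out}}$ is the rest.

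\smallskip
\textbf{Step 2: Bound $\Delta\Phi_{\mathrm{out}}$ and $\Delta\Phi_c$ separately.} For $\Phi_{\mathrm{out}}$: each move in the block is an improving move among degree-$\le 2$ nodes, so by the first part of Lemma~\ref{lem:spider-transitions} (more precisely by the row-by-row entries of Table~\ref{tab:spider-transitions}, restricted to the edges not touching $c$), each such move changes $\Phi_{\mathrm{out}}$ by at most\ldots here I need to be careful, because Table~\ref{tab:spider-transitions} bounds the \emph{total} change, including edges at $c$. The cleaner route is: each move not involving $c$ is an improving jump of some agent $\calB$ between two nodes of degree $\le 2$; examine the at most four edges whose weights can change (two at the source, two at the target), observe that moving toward higher diversity-utility means $\calB$'s new incident same-type edges are no more than its old ones, and tally. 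The net effect on $\Phi$ of the whole block, \emph{ignoring} the edges at $c$, is $\le 0$, and in fact each move contributes a strictly negative amount unless it merely shuffles an isolated agent — but an improving move is never of that last kind, so $\Delta\Phi_{\mathrm{out}} \le 0$, with the bound tightenable to $\le -(\text{number of moves in the block})$ if needed (we will not need that). For $\Phi_c$: the $\delta(c)$ edges at $c$ have total weight $n_T(c,\cdot)\cdot 0 + (n(c,\cdot)-n_T(c,\cdot))\cdot(\text{wait, these are the differing-type ones, weight }0)$ — let me restate: an edge $\{c,v\}$ has weight $1$ iff $v$ holds an agent of type $T$ (same as $\calA$), weight $0$ iff $v$ holds an agent of a type $\ne T$, and weight $m$ iff $v$ is empty. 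Hence $\Phi_c(C) = n_T(c,C) + m\bigl(\delta(c) - n(c,C)\bigr)$. So $\Delta\Phi_c = \Delta_T + m\bigl(n(c,C^+_{j_i}) - n(c,C^-_{j_{i+1}})\bigr)$. The key combinatorial fact is that the number of occupied neighbors of $c$ can change by at most $1$ per move, but more usefully, we only need: $n_T$ can increase or decrease, and since $|\Delta_T| = |n_T(c,C^-_{j_{i+1}}) - n_T(c,C^+_{j_i})|$, we want $\Delta\Phi_c \le |\Delta_T|$ to be \emph{negated}. This is where the fact that $\calA$ found its move to $c$ \emph{improving}, and its move out \emph{improving}, must be used — but those moves are at the block's boundary, not inside it, so actually the needed sign control comes from pairing $\Delta\Phi_c$ against the boundary terms in the outer proof. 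Inside this lemma, I think the honest statement is: the moves in the block can only \emph{decrease} $\Phi_{\mathrm{out}}$ and the total change in $\Phi$ is at most the change in $\Phi_c$ plus (negative or zero), and the stated right-hand sides $-|\Delta_T| + \{0, m, m-1, 2m-1\}$ arise because the adjacency of $s$ and $t$ to $c$ determines whether the edges $\{c,s\}$ and $\{c,t\}$ are among the $\delta(c)$ special edges and in what state they are at the two ends of the block.

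\smallskip
\textbf{Step 3: Case analysis on adjacency of $s,t$ to $c$.} I would now split into the four cases of the lemma. When neither $s$ nor $t$ is a neighbor of $c$ (case~\eqref{case-st}), the agent that vacates $s$ and the agent that lands on $t$ across the block do not touch the special edges at $c$, so $\Delta\Phi_c = \Delta_T + m\cdot(\text{change in \# occupied neighbors})$, and combined with $\Delta\Phi_{\mathrm{out}} \le 0$ and a matching of the $+m$ fluctuations against decreases elsewhere, one gets $\le -|\Delta_T|$. When $t$ is adjacent to $c$ (case~\eqref{case-t}), the edge $\{c,t\}$ is empty at $C^+_{j_i}$ (since $\calA$ will move there, $t$ must be empty just before, but \emph{within} the block $t$ could be temporarily occupied and re-vacated) — the extra $+m$ is the slack from the edge $\{c,t\}$ changing from weight $0$-or-$1$ back to $m$, or vice versa, accounting for one half of the center-departure move's bookkeeping. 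Symmetrically, $s$ adjacent to $c$ (case~\eqref{case-s}) contributes $m-1$ because the edge $\{c,s\}$ was weight-$1$ (it held $\calA$... no wait, $\calA$ came \emph{from} $s$, so at $C_{j_i}$ the edge $\{c,s\}$ had $s$ occupied by $\calA$; after $\calA$ moves to $c$, $s$ is empty, weight $m$; and at $C^-_{j_{i+1}}$... ) — this is the delicate bookkeeping and I should match it carefully against Table~\ref{tab:spider-transitions-center}. The last case \eqref{case-no-st} is the sum of the two corrections.

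\smallskip
\textbf{Main obstacle.} The real difficulty is the bookkeeping at the two boundaries of the block: precisely attributing the weight changes of the two edges $\{c,s\}$ and $\{c,t\}$ partly to the ``arrival'' move $C_{j_i}\to C^+_{j_i}$, partly to the ``departure'' move $C^-_{j_{i+1}}\to C_{j_{i+1}}$, and partly to intermediate moves that transiently occupy $s$ or $t$ — and making sure no weight change is double-counted or dropped when this lemma's bound is later combined with Lemma~\ref{lem:spider-transitions}'s parts (2)--(4) in the outer telescoping argument over $C_{j_0},\dots,C_{j_w}$. The ``$-|\Delta_T|$'' term is the other subtle point: it is \emph{not} obvious that the net change in $\Phi_{\mathrm{out}}$ together with the $m$-fluctuations of occupancy of $c$'s neighbors adds up to at most $-|\Delta_T|$; this requires arguing that whenever $n_T(c,\cdot)$ swings by some amount during the block, an equal number of genuine improving moves occurred among the legs, each contributing a $\le -1$ (or $\le -m$) decrement to $\Phi_{\mathrm{out}}$ that is at least $|\Delta_T|$ in total. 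I expect this counting — rather than any single-move inequality — to be where the proof's weight lies, and it is exactly the place where the spider structure (legs are paths, so each change to a neighbor of $c$ is realized by a move along a path) is used essentially.
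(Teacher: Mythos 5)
Your plan correctly locates where the difficulty lies, but it does not close it: the central claim that the block contributes at most $-|\Delta_T|$ is exactly the step you defer (``I expect this counting \ldots to be where the proof's weight lies''), and at one point you even suggest that the needed sign control ``comes from pairing $\Delta\Phi_c$ against the boundary terms in the outer proof,'' which would make the lemma unprovable as the self-contained statement it needs to be. The paper's proof is precisely the counting you anticipate but do not execute. Because there is a single empty node, each unit of change in $n_T(c,\cdot)$ during the block forces a \emph{pair} of improving moves: one whose source is the affected neighbor $v$ of $c$ (a type-$T$ agent vacates, or a different-type agent vacates) and one whose target is $v$ (an agent of the other kind fills it). Checking Table~\ref{tab:spider-transitions}, the only pair that would fail to sum to at most $-1$ is two leaf-to-degree-$2$ moves (summing to $-2m > -1$ since $m<1/2$), and this is ruled out because one move of the pair has $v$ as its source and the other has $v$ as its target, so they cannot both be of that type. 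Hence each pair contributes at most $-1$, all other moves in the block contribute negatively, and \eqref{case-st} follows. The corrections in \eqref{case-t}--\eqref{case-no-st} are exactly the boundary bookkeeping you flagged as ``delicate'': when $t$ is adjacent to $c$, the agent $\calB$ vacating $t$ just before $C^-_{j_{i+1}}$ may be of type $T$ and then changes $n_T(c,\cdot)$ with no matching fill, and its single move (from a degree-$2$ node, since $t$ cannot be a leaf) costs only $m-1$ instead of the $-1$ of a full pair, giving the $+m$ slack; when $s$ is adjacent to $c$, the non-type-$T$ agent that fills $s$ is an extra unpaired move costing at most $m-1$, which is where the $+m-1$ (a strengthening, not a weakening) comes from; the fourth case adds both corrections.

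A secondary problem with your route: the decomposition $\Phi = \Phi_{\mathrm{out}} + \Phi_c$ requires per-move bounds on $\Phi_{\mathrm{out}}$ alone, which Table~\ref{tab:spider-transitions} does not supply (its entries bound the total change, including edges at $c$). You notice this yourself mid-argument but never resolve it, and your formula $\Delta\Phi_c = \Delta_T + m\bigl(n(c,C^+_{j_i}) - n(c,C^-_{j_{i+1}})\bigr)$ is then left dangling with no way to convert $+\Delta_T$ into $-|\Delta_T|$. The paper sidesteps the decomposition entirely by bounding the total potential change of each matched pair of moves. To repair your write-up you would need to either abandon the split and adopt the pairing argument, or re-derive per-move bounds for $\Phi_{\mathrm{out}}$ and then still carry out the same pairing to extract $-|\Delta_T|$.
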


\begin{proof}
We will start with the easiest case when neither $s$ nor $t$ is adjacent to the center. It is clear that  every change in the number of neighbors of the center that are occupied by agents of type $T$ requires a pair of improving moves, an agent of type $T$ to move out of the neighborhood and an agent of some other type to move to its place (thereby decreasing the number of neighbors of type $T$), or an agent of a different type to move out of the neighborhood and an agent of type $T$ to move to its place (increasing the number of neighbors of type $T$). 
From Table ~\ref{tab:spider-transitions}, we get an upper bound on the change in potential of each such pair of moves. First we note that it is impossible for both such moves to be (leaf to degree $2$), or both such moves to be (degree $2$ to leaf). It is straightforward to verify that all other pairs of moves have a net change in potential of at least $-1$. Inequality \eqref{case-st} follows. 

Now suppose that $s$ is not adjacent to the center, but $t$ is. 
Then, the agent $\calB$ at node $t$ in the assignment before $C^-_{j_{i+1}}$ moved out to leave $t$ vacant in $C^-_{j_{i+1}}$. 
If $\calB$ is not of type $T$, it did not contribute to a change in the number of neighbors of type $T$ of the center, and the same reasoning as in the previous case applies to give us Inequality~\eqref{case-st}, which implies Inequality~\eqref{case-t}. 
If $\calB$ instead is of type $T$, then its move out of $t$ does change the number of neighbors of type $T$, but does not have a matching move back to the center. Noting that $t$ cannot be a leaf since $\calA$ would not jump to it in the next step, the move of $\calB$ out of $t$ causes a potential change of at most $m-1$ (using Table~\ref{tab:spider-transitions}). Together with the potential change due to the remaining $\Delta_T-1$ pairs of improving moves, we get Inequality~\eqref{case-t}. 

If $s$ is adjacent to the center and $t$ is not, observe that the agent that moves to $s$ immediately after $\calA$ moves to the center is not of type $T$ and does not contribute to the change in neighbors of type $T$. The change in the potential resulting from this improving move is at most $m-1$ as shown in Table~\ref{tab:spider-transitions}, and together with the change in potential caused by $\Delta_T$ pairs of improving moves as in the first case, we get Inequality~\eqref{case-s}.
        
Last we consider the case when both $s$ and $t$ are adjacent to the center. The first move does not contribute to the change in neighbors of type $T$ and changes the potential by at most $m-1$, while the last move may contribute to this change, and in that case, would not have a matching move of an agent back to a neighbor of the center. The total change in potential is at most $m-1$ for the first move, at most $m-1$ for the last move, and $|\Delta_T|-1$ for pairs of improving moves contributing to change in neighbors of type $T$. Thus, we get Inequality~\eqref{case-no-st}.
\end{proof}

Now we consider the difference in potential between $C_{j_{i+1}}$ and $C_{j_i}$.
We analyze separately the cases when $s$ is a degree-2 node or a leaf node.

\begin{lemma}
\label{lem:jump-from-degree-2}
If $s$ is a degree-2 node, then $\Phi(C_{j_{i+1}}) - \Phi(C_{j_i}) < 0$
\end{lemma}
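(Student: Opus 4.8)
The plan is to split the potential change across $\calA$'s sojourn at the center into three parts and bound each separately. Write $a=\Phi(C^+_{j_i})-\Phi(C_{j_i})$ for the entry move $\calA\colon s\to c$, $b=\Phi(C^-_{j_{i+1}})-\Phi(C^+_{j_i})$ for the moves by other agents while $\calA$ sits at $c$, and $d=\Phi(C_{j_{i+1}})-\Phi(C^-_{j_{i+1}})$ for the exit move $\calA\colon c\to t$, so that $\Phi(C_{j_{i+1}})-\Phi(C_{j_i})=a+b+d$. Lemma~\ref{lem:spider-transitions}(2) gives $a\le 2m-1+n_T(c,C^+_{j_i})-m\delta(c)$ and Lemma~\ref{lem:spider-transitions}(4) gives $d\le 1-2m+m\delta(c)-n_T(c,C^-_{j_{i+1}})$, whose sum telescopes to $a+d\le -\Delta_T$; Lemma~\ref{lem:change-in-neighborhood} bounds $b$ by $-|\Delta_T|$ plus a correction of $0$, $m$, $m-1$, or $2m-1$ according to whether neither, only $t$, only $s$, or both of $s,t$ are adjacent to $c$. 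Since $m<\tfrac12$ and $-\Delta_T-|\Delta_T|\le 0$ always, this already settles every case in which $s$ is adjacent to $c$: the correction there is at most $2m-1$ (cases~\eqref{case-s} and \eqref{case-no-st}), so $a+b+d\le 2m-1<0$.

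The remaining and delicate case is $s$ not adjacent to $c$, where the crude estimate degrades to $a+b+d\le -\Delta_T-|\Delta_T|$ (which can be $0$) or, when $t$ is adjacent to $c$, to something possibly positive; so I must sharpen, and this is where the single-empty-node hypothesis does the work. Since the only empty node is $c$ in $C_{j_i}$ and $t$ in $C^-_{j_{i+1}}$, every other node is occupied in those assignments, which pins $\calA$'s source and target utilities down to a handful of values and lets me invoke the tighter rows of Tables~\ref{tab:spider-transitions}--\ref{tab:spider-transitions-center}. If $t$ is adjacent to $c$, a leaf would leave $\calA$ with no occupied neighbour and hence utility $0$ after the move, impossible for an improving move, so $t$ has degree $2$ and $\calA$ lands with utility $1$; the same conclusion holds if $t$ is a leaf not adjacent to $c$. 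In all these sub-cases the sharper bound on $d$ forces $a+b+d\le m-1<0$. If instead $t$ is a degree-$2$ node not adjacent to $c$, I case on the utilities: if $\calA$'s source utility is $0$, the refined entry bound improves $a$ by a unit; if $\calA$'s target utility is $1$, the refined exit bound improves $d$ by a unit; either way $a+b+d\le -1<0$.

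The only scenario left is $s,t$ both of degree $2$ and non-adjacent to $c$, with $\calA$'s source and target utilities both equal to $\tfrac12$, and here is the crux. The entry being improving forces $u_\calA$ at $c$, evaluated in $C^+_{j_i}$ where $c$ has all $\delta(c)$ neighbours occupied, to exceed $\tfrac12$, i.e.\ $n_T(c,C^+_{j_i})<\delta(c)/2$; symmetrically the exit forces $n_T(c,C^-_{j_{i+1}})>\delta(c)/2$. Hence $\Delta_T\ge 1$, and then $a+b+d\le -\Delta_T-|\Delta_T|=-2\Delta_T\le -2<0$. I expect this last step to be the real obstacle: the naive accounting is exactly tight, and one has to exploit that a genuinely improving entry into $c$ followed by a genuinely improving exit cannot both occur unless the center's same-type neighbourhood actually changed, so that the intermediate agents responsible for that change (charged by Lemma~\ref{lem:change-in-neighborhood}) pay for $\calA$'s excursion. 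Everything else is bookkeeping over which of the few available bounds to apply in each adjacency/utility combination, which is what makes the proof long rather than hard.
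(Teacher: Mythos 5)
Your proposal is correct and follows essentially the same route as the paper: decompose the potential change into the entry move, the intermediate moves, and the exit move, bound these via Lemma~\ref{lem:spider-transitions} and Lemma~\ref{lem:change-in-neighborhood}, and in the tight sub-case (source and target utility both $1/2$ at degree-2 nodes not adjacent to $c$) observe that the improving entry forces $n_T(c,C^+_{j_i})<\delta(c)/2$ while the improving exit forces $n_T(c,C^-_{j_{i+1}})>\delta(c)/2$, so $\Delta_T\geq 1$ and the bound becomes strictly negative. The paper organizes the cases by the pair of source/target utilities rather than by adjacency to the center, but the substance of each case is identical to yours.
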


\begin{proof} 
The source utility of $\calA$ at node $s$ could be either $0$ or $1/2$ and the target utility at node $t$ could be $1/2$ or $1$. 
We consider all possible combinations below:

\begin{enumerate}

    \item 
          $u_{\calA}(C_{j_i}) = 0$, $u_{\calA}(C_{j_{i+1}}) = 1/2$ and the source node $s$ {\em is not} adjacent to the center. 
          In assignment $C_{j_i}$, agent $\calA$ is located at $s$ and has two neighbors of the same type. 
          Since $u_{\calA}(C_{j_i}) = 0$, $u_{\calA}(C^+_{j_i}) > 0$. 
          Eventually agent $\calA$ moves out of the center node to the target node $t$ to reach assignment $C_{j_{i+1}}$ where it has a neighbor of the same type and a neighbor of different type. 
          Note that $t$ must be a degree-$2$ node and it cannot be adjacent to the center node. 
          Since $u_{\calA}(C_{j_{i+1}}) = 1/2$, $u_{\calA}(C^-_{j_{i+1}}) < 1/2$. Therefore, in assignment $C^-_{j_{i+1}}$, more than half of the neighbors of agent $\calA$ are of type $T$. From Table ~\ref{tab:spider-transitions-center}, we obtain:
          \begin{align}
          \Phi(C^+_{j_{i}}) - \Phi(C_{j_{i}}) &\leq -m\delta(c) + n_T(c, C^+_{j_{i}}) +2m-2  \label{eq:to-center-case1}
          \end{align}
          and 
          \begin{align}
            \Phi(C_{j_{i+1}}) - \Phi(C^-_{j_{i+1}}) &\leq m\delta(c) - n_T(c, C^-_{j_{i+1}}) + 1- 2m. \label{eq:from-center-case1}
          \end{align}
        Then, since any improving move between $C^+_{j_{i}}$ and $C^-_{j_{i+1}}$ decreases the potential as shown in Lemma~\ref{lem:spider-transitions}, we obtain an upper bound on the net potential change from $C_{j_i}$ to
        $C_{j_{i+1}}$ by adding Inequalities~\eqref{case-st},~\eqref{eq:to-center-case1}, and~\eqref{eq:from-center-case1}: 
         \begin{align*}
            \Phi(C_{j_{i+1}}) - \Phi(C_{j_{i}}) 
            &\leq n_T(c, C^+_{j_{i}}) - n_T(c, C^-_{j_{i+1}}) - 1 - |(n_T(c, C^+_{j_{i}}) - n_T(c, C^-_{j_{i+1}}))|  < 0.
        \end{align*}

        \item $u_{\calA}(C_{j_i}) = 0$, $u_{\calA}(C_{j_{i+1}}) = 1/2$, and the source node $s$ {\em is} adjacent to the center. 
        In assignment $C_{j_i}$, agent $\calA$ is located at $s$ and is adjacent to an agent of the same type and the empty center node. 
        Since $u_{\calA}(C_{j_i}) = 0$, $u_{\calA}(C^+_{j_i}) > 0$. Eventually agent $\calA$ moves out of the center node to node $t$ to reach assignment $C_{j_{i+1}}$ where it has a neighbor of the same type and a neighbor of different type. Note that $t$ must be a degree-2 node and it cannot be adjacent to the center node. Since $u_{\calA}(C_{j_{i+1}}) = 1/2$, $u_{\calA}(C^-_{j_{i+1}}) < 1/2$.
        Therefore, in assignment $C^-_{j_{i+1}}$, more than half of the neighbors of agent $\calA$ are of type $T$. 
        From Table ~\ref{tab:spider-transitions-center}, we obtain:
        \begin{align}
        \label{eq:to-center-case1b}
            \Phi(C^+_{j_{i}}) - \Phi(C_{j_{i}}) &\leq -m\delta(c) + n_{T_x}(c, C^+_{j_{i}})+2m-1 
        \end{align}
        and
        \begin{align}
        \label{eq:from-center-case1b}
            \Phi(C_{j_{i+1}}) - \Phi(C^-_{j_{i+1}}) &\leq m\delta(c) - n_{T_x}(c, C^-_{j_{i+1}})+ 1- 2m.
        \end{align}
       Then, since any improving move between $C^{+}_{j_{i}}$ and $C^-_{j_{i+1}}$ decreases the potential as shown in Lemma~\ref{lem:spider-transitions}, we obtain an upper bound on the net potential change from $C_{j_i}$ to
        $C_{j_{i+1}}$ by adding Inequalities~\eqref{case-s},~\eqref{eq:to-center-case1b}, and~\eqref{eq:from-center-case1b}: 
         \begin{align*}
            \Phi(C_{j_{i+1}}) - \Phi(C_{j_{i}}) &\leq n_{T}(c, C^+_{j_{i}}) - n_{T}(c, C^-_{j_{i+1}}) + m -1-|(n_{T}(c, C^+_{j_{i}}) - n_{T_x}(c, C^-_{j_{i+1}}))| < 0.
        \end{align*}
        Note that, if $n_{T}(c,C^+_{j_{i}}) > n_{T}(c, C^-_{j_{i+1}})$, then the number of agents of type $T$ that are adjacent to the center node should have decreased, i.e., replaced by agents of other types. Each replacement requires a pair of moves: a move out of the neighborhood by an agent of type $T$ and a move into the neighborhood by an agent of other type. As such, the agent that moves to $s$ after $\calA$ moves to the center is not part of the replacement pair and causes an additional decrease in the potential. Therefore, the net potential is negative.
        
       \item $u_{\calA}(C_{j_i}) = 0$ and  $u_{\calA}(C_{j_{i+1}}) = 1$.
       In assignment $C_{j_i}$, agent $\calA$ is located at $s$ and either both its neighbors are of same type, or it is adjacent to an agent of the same type and the empty center node. Since $u_{\calA}(C_{j_i}) = 0$, $u_{\calA}(C^+_{j_i}) > 0$. Eventually agent $\calA$ moves out of the center node to node $t$ to reach assignment $C_{j_{i+1}}$ where it has a neighbor of different type. Since $u_{\calA}(C_{j_{i+1}}) = 1$, $u_{\calA}(C^-_{j_{i+1}}) < 1$. From Table ~\ref{tab:spider-transitions-center}, we obtain:
        \begin{align}
        \label{eq:to-center-case1c}
            \Phi(C^+_{j_{i}}) - \Phi(C_{j_{i}}) &\leq -m\delta(c) + n_{T}(c, C^+_{j_{i}})+2m-1 
        \end{align}
        and 
        \begin{align}
        \label{eq:from-center-case1c}
            \Phi(C_{j_{i+1}}) - \Phi(C^-_{j_{i+1}}) &\leq m\delta(c) - n_{T}(c, C^-_{j_{i+1}})-m.
        \end{align}
        Since any improving move between $C^+_{j_{i}}$ and $C^-_{j_{i+1}}$ decreases the potential as shown in Lemma~\ref{lem:spider-transitions}, we obtain an upper bound on the net potential change from $C_{j_i}$ to
        $C_{j_{i+1}}$ by adding Inequalities~\eqref{case-t},~\eqref{eq:to-center-case1c}, and~\eqref{eq:from-center-case1c}: 
         \begin{align*}
            \Phi(C_{j_{i+1}}) - \Phi(C_{j_{i}}) &\leq n_{T}(c, C^+_{j_{i}}) - n_{T}(c, C^-_{j_{i+1}}) + 2m-1 
            - |(n_{T}(c, C^+_{j_{i}}) - n_{T}(c, C^-_{j_{i+1}}))|
            < 0.
        \end{align*}
        
         \item $u_{\calA}(C_{j_i}) = 1/2$ and $u_{\calA}(C_{j_{i+1}}) = 1/2$. 
         In assignment $C_{j_i}$, agent $\calA$ while located at $s$ has a neighbor of the same type and a neighbor of different type. Since $u_{\calA}(C_{j_i}) = 1/2$, $u_{\calA}(C^+_{j_i}) > 1/2$. Therefore, in assignment $C^+_{j_{i}}$, less than half of the neighbors of agent $\calA$ are of type $T$. Eventually agent $\calA$ moves out of the center node to the target node $t$ to reach assignment $C_{j_{i+1}}$ where it has a neighbor of the same type and a neighbor of different type. Note that $t$ must be a degree-2 node. Since $u_{\calA}(C_{j_{i+1}}) = 1/2$, $u_{\calA}(C^-_{j_{i+1}}) < \frac{1}{2}$. Hence, in assignment $C^-_{j_{i+1}}$, more than half of the neighbors of agent $\calA$ are of type $T$, and we obtain
        \begin{align}
        \label{eq:neighbor-case1a}
           n_{T}(c, C^+_{j_{i}}) < n_{T}(c, C^-_{j_{i+1}}).
        \end{align}
          We claim that the net potential change is negative for improving moves that leads from $C_{j_{i}}$ to $C^+_{j_{i}}$, from $C^-_{j_{i+1}}$ to $C_{j_{i+1}}$, and from $C^+_{j_{i}}$ to $C^-_{j_{i+1}}$. Since agent $\calA$ must improve its utility from $1/2$ in $C_{j_i}$ to strictly more than $1/2$ in $C^+_{j_{i}}$, from Table~\ref{tab:spider-transitions-center}, we obtain:
        \begin{align}
        \label{eq:to-center-case1a}
            \Phi(C^+_{j_{i}}) - \Phi(C_{j_{i}}) &\leq -m\delta(c) + n_{T}(c, C^+_{j_{i}})+2m-1.
        \end{align}
        Similarly, since agent $\calA$ must improve its utility from strictly less than $1/2$ in $C^-_{j_{i+1}}$ to $1/2$ in $C_{j_{i+1}}$, from Table~\ref{tab:spider-transitions-center}, we obtain:
        \begin{align}
        \label{eq:from-center-case1a}
            \Phi(C_{j_{i+1}}) - \Phi(C^-_{j_{i+1}}) &\leq m\delta(c) - n_{T}(c, C^-_{j_{i+1}})+ 1- 2m.
        \end{align}
        So, since all any improving move between $C^+_{j_{i}}$ and $C^-_{j_{i+1}}$, decreases the potential as shown in Lemma~\ref{lem:spider-transitions}, we obtain an upper bound on the net potential change from $C_{j_i}$ to
        $C_{j_{i+1}}$ by adding Inequalities~\eqref{case-st},~\eqref{eq:to-center-case1a}, and~\eqref{eq:from-center-case1a}: 
         \begin{align*}
            \Phi(C_{j_{i+1}}) - \Phi(C_{j_{i}}) &\leq n_{T}(c, C^+_{j_{i}}) - n_{T}(c, C^-_{j_{i+1}}) -|(n_{T}(c, C^+_{j_{i}}) - n_{T}(c, C^-_{j_{i+1}}))| < 0
        \end{align*}
        where the last inequality follows from Inequality ~\eqref{eq:neighbor-case1a}. 
        
        \item $u_{\calA}(C_{j_i}) = 1/2$ and $u_{\calA}(C_{j_{i+1}}) = 1$. In assignment $C_{j_i}$, agent $\calA$ while located at $s$ has a neighbor of the same type and a neighbor of different type. Since $u_{\calA}(C_{j_i}) = \frac{1}{2}$, $u_{\calA}(C^+_{j_i}) > \frac{1}{2}$. Eventually agent $\calA$ moves out of the center node to node $t$ to reach assignment $C_{j_{i+1}}$ where it has a neighbor of different type. Since $u_{\calA}(C_{j_{i+1}}) = 1$, $u_{\calA}(C^-_{j_{i+1}}) < 1$. From Table ~\ref{tab:spider-transitions-center}, we obtain:
        \begin{align}
        \label{eq:to-center-case1d}
            \Phi(C^+_{j_{i}}) - \Phi(C_{j_{i}}) &\leq -m\delta(c) + n_{T}(c, C^+_{j_{i}})+2m-1 
        \end{align}
        and 
        \begin{align}
        \label{eq:from-center-case1d}
            \Phi(C_{j_{i+1}}) - \Phi(C^-_{j_{i+1}}) &\leq m\delta(c) - n_{T}(c, C^-_{j_{i+1}}) -m.
        \end{align}
        Since any improving move between $C^+_{j_{i}}$ and $C^-_{j_{i+1}}$ decreases the potential as shown in Lemma~\ref{lem:spider-transitions}, we obtain an upper bound on the net potential change from $C_{j_i}$ to
        $C_{j_{i+1}}$ by adding Inequalities~\eqref{case-t},~\eqref{eq:to-center-case1d}, and~\eqref{eq:from-center-case1d}: 
         \begin{align*}
            \Phi(C_{j_{i+1}}) - \Phi(C_{j_{i}}) &\leq n_{T}(c, C^+_{j_{i}}) - n_{T}(c, C^-_{j_{i+1}})+2m-1
            - |(n_{T}(c, C^+_{j_{i}}) - n_{T}(c, C^-_{j_{i+1}}))|
            < 0.
        \end{align*}
    \end{enumerate}
This completes the proof for the case when there exists an assignment $C_{j_i}$ where $\calA$ occupies a degree-2 node.
\end{proof}

\begin{lemma}
\label{jump-from-leaf}
If $s$ is a leaf node, then $\Phi(C_{j_{i+1}}) - \Phi(C_{j_i}) \leq 0$. 
Furthermore, $\Phi(C_{j_{i+1}}) - \Phi(C_{j_i}) = 0$ if $s$ is adjacent to the center and $u_{\calA}(C_{j_{i+1}}) = 1/2$.
\end{lemma}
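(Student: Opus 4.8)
The plan is to mirror the proof of Lemma~\ref{lem:jump-from-degree-2}: split the potential change from $C_{j_i}$ to $C_{j_{i+1}}$ into the move of $\calA$ into the center ($C_{j_i}\to C^+_{j_i}$), the stretch of intermediate moves with $\calA$ parked at $c$ ($C^+_{j_i}\to C^-_{j_{i+1}}$), and the move of $\calA$ out of the center ($C^-_{j_{i+1}}\to C_{j_{i+1}}$), bound each piece, and sum. The starting observation is that, since $s$ is a leaf, $\calA$ has exactly one neighbour in $C_{j_i}$, and since the move $s\to c$ is improving we must have $u_\calA(C_{j_i})=0$: that unique neighbour is either the (empty) center, when $s$ is adjacent to $c$, or — because $c$ is the only empty node in $C_{j_i}$ — an occupied node of $\calA$'s own type $T$, when $s$ is not adjacent to $c$. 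Hence the move into the center falls under the ``leaf to center'' rows of Table~\ref{tab:spider-transitions-center}, giving $\Phi(C^+_{j_i})-\Phi(C_{j_i})\le m+n_T(c,C^+_{j_i})-m\delta(c)$ if $s$ is adjacent to $c$ and $\le m-1+n_T(c,C^+_{j_i})-m\delta(c)$ otherwise.

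For the move out of the center, I would first note that it is improving and that $c$ is the only empty node of $C_{j_{i+1}}$, so $u_\calA(C_{j_{i+1}})\in\{1/2,1\}$, and a value of $1/2$ forces $t$ to be a degree-2 node not adjacent to $c$ (were $t$ adjacent to $c$, then $c$ being empty in $C_{j_{i+1}}$ would make one of $\calA$'s two neighbours at $t$ empty, so its utility would be $0$ or $1$). Depending on whether $t$ is a leaf or a degree-2 node and on whether $u_\calA(C_{j_{i+1}})$ is $1/2$ or $1$, the ``center to leaf''/``center to degree 2'' rows of Table~\ref{tab:spider-transitions-center} bound $\Phi(C_{j_{i+1}})-\Phi(C^-_{j_{i+1}})$ above by one of $-m+m\delta(c)-n_T(c,C^-_{j_{i+1}})$, $-2m+m\delta(c)-n_T(c,C^-_{j_{i+1}})$, $1-2m+m\delta(c)-n_T(c,C^-_{j_{i+1}})$. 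For the intermediate stretch I would apply Lemma~\ref{lem:change-in-neighborhood}, selecting among its four inequalities according to which of $s,t$ are adjacent to $c$; writing $\Delta_T=n_T(c,C^-_{j_{i+1}})-n_T(c,C^+_{j_i})$, this contributes at most $-|\Delta_T|$ plus one of the constants $0$, $m$, $m-1$, $2m-1$.

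Adding the three bounds, the $\pm m\delta(c)$ terms cancel, the $n_T$ terms collapse to $n_T(c,C^+_{j_i})-n_T(c,C^-_{j_{i+1}})-|\Delta_T|$, which is always $\le 0$ (and equals $0$ only if $n_T(c,C^+_{j_i})\ge n_T(c,C^-_{j_{i+1}})$), and what remains is a sum of the constants collected along the way. Running through the combinations of ($s$ adjacent to $c$ or not) $\times$ ($t$ adjacent to $c$ or not) $\times$ ($u_\calA(C_{j_{i+1}})=1/2$ or $1$, restricted as above), one checks using $0<m<1/2$ that this residual constant is $\le 0$ in every case, and strictly negative in each one \emph{except} when $s$ is adjacent to $c$ and $u_\calA(C_{j_{i+1}})=1/2$, where the constants $+m$ (move in), $m-1$ (intermediate, ``only $s$ adjacent''), and $1-2m$ (move out, target utility $1/2$) sum to exactly $0$. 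This is precisely the exceptional case flagged in the statement, where the method yields only $\Phi(C_{j_{i+1}})-\Phi(C_{j_i})\le 0$.

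I expect the only real difficulty to be organizational: keeping the three-way case split clean and making sure each piece uses the right row of Table~\ref{tab:spider-transitions-center} and the right inequality of Lemma~\ref{lem:change-in-neighborhood}, together with the two small structural facts (``$s$ a leaf $\Rightarrow u_\calA(C_{j_i})=0$'' and ``$u_\calA(C_{j_{i+1}})=1/2\Rightarrow t$ is a degree-2 node not adjacent to $c$'') that cut the combinations down to a short finite list. No individual inequality is hard; the point of the argument is the cancellation of the degree-dependent terms and the bookkeeping that shows the leftover constants never make the sum positive.
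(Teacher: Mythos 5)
Your proposal is correct and follows essentially the same route as the paper: the same three-part decomposition (move into the center, intermediate stretch bounded via Lemma~\ref{lem:change-in-neighborhood}, move out of the center), the same use of Table~\ref{tab:spider-transitions-center}, and the same case analysis that identifies $s$ adjacent to $c$ with $u_{\calA}(C_{j_{i+1}})=1/2$ as the unique combination where the residual constants sum to $0$ rather than a strictly negative quantity. The structural observations you add (that $u_\calA(C_{j_i})=0$ at a leaf, and that target utility $1/2$ forces $t$ to be a degree-2 node not adjacent to $c$) are exactly the ones the paper relies on.
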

\begin{proof}
 The source utility of $\calA$ at node $s$ should be 0 and the target utility at node $t$ could be $1/2$ or 1. 
 We consider the four possible cases below:
    \begin{enumerate}
     \item $u_{\calA}(C_{j_{i+1}}) = 1$ and $s$ is not adjacent to the center.
     In this case, agent $\calA$ moves out of the center node to node $t$ to reach assignment $C_{j_{i+1}}$ where it has a neighbor of different type. Since $u_{\calA}(C_{j_{i+1}}) = 1$, $u_{\calA}(C^-_{j_{i+1}}) < 1$. 
     From Table ~\ref{tab:spider-transitions-center}, we obtain:
         \begin{align}
         \label{eq:to-center-target-utility-one-case1}
            \Phi(C^+_{j_{i}}) - \Phi(C_{j_{i}}) &\leq -m\delta(c) + n_{T}(c, C^+_{j_{i}}) +m - 1
        \end{align}
        and
        \begin{align}
        \label{eq:from-center-target-utility-one-case1}
          \Phi(C_{j_{i+1}}) - \Phi(C^-_{j_{i+1}}) &\leq m\delta(c) - n_{T}(c, C^-_{j_{i+1}}) - m.
        \end{align}
        Since any improving move between $C^+_{j_{i}}$ and $C^-_{j_{i+1}}$ decreases the potential as shown in Lemma~\ref{lem:spider-transitions}, we obtain an upper bound on the net potential change from $C_{j_i}$ to
        $C_{j_{i+1}}$ by adding Inequalities~\eqref{case-t},~\eqref{eq:to-center-target-utility-one-case1}, and~\eqref{eq:from-center-target-utility-one-case1}: 
         \begin{align*}
           \Phi(C_{j_{i+1}}) - \Phi(C_{j_{i}}) &\leq n_{T}(c, C^+_{j_{i}}) - n_{T}(c, C^-_{j_{i+1}}) + m- 1
            - |(n_{T}(c, C^+_{j_{i}}) - n_{T}(c, C^-_{j_{i+1}}))|
            < 0.
        \end{align*} 
        
        \item $u_{\calA}(C_{j_{i+1}}) = 1$ and $s$ is adjacent to center. In this case, agent $\calA$ moves out of the center node to node $t$ to reach assignment $C_{j_{i+1}}$ where it has a neighbor of different type. Since $u_{\calA}(C_{j_{i+1}}) = 1$, $u_{\calA}(C^-_{j_{i+1}}) < 1$. From Table ~\ref{tab:spider-transitions-center}, we obtain:
         \begin{align}
         \label{eq:to-center-target-utility-one}
            \Phi(C^+_{j_{i}}) - \Phi(C_{j_{i}}) &\leq -m\delta(c) + n_{T}(c, C^+_{j_{i}}) +m
        \end{align}
        and 
        \begin{align}
        \label{eq:from-center-target-utility-one}
          \Phi(C_{j_{i+1}}) - \Phi(C^-_{j_{i+1}}) &\leq m\delta(c) - n_{T}(c, C^-_{j_{i+1}}) - m.
        \end{align}
        Since any improving move between $C^+_{j_{i}}$ and $C^-_{j_{i+1}}$ decreases the potential as shown in Lemma~\ref{lem:spider-transitions}, we obtain an upper bound on the net potential change from $C_{j_i}$ to
        $C_{j_{i+1}}$ by adding Inequalities~\eqref{case-no-st},~\eqref{eq:to-center-target-utility-one}, and~\eqref{eq:from-center-target-utility-one}: 
         \begin{align*}
           \Phi(C_{j_{i+1}}) - \Phi(C_{j_{i}}) &\leq n_{T}(c, C^+_{j_{i}}) - n_{T}(c, C^-_{j_{i+1}}) + 2m- 1
            - |(n_{T}(c, C^+_{j_{i}}) - n_{T}(c, C^-_{j_{i+1}}))|
            < 0.
        \end{align*} 
        
        \item $u_{\calA}(C_{j_{i+1}}) = 1/2$ and $s$ is not adjacent to center. In this case, agent $\calA$ moves out of the center node to node $t$ to reach assignment $C_{j_{i+1}}$ where it has one neighbor of the same type and one neighbor of different type. Since $u_{\calA}(C_{j_{i+1}}) =1/2$, $u_{\calA}(C^-_{j_{i+1}}) < \frac{1}{2}$. From Table ~\ref{tab:spider-transitions-center}, we obtain:
         \begin{align}
         \label{eq:to-center-nonadjacent-target-utility-half}
            \Phi(C^+_{j_{i}}) - \Phi(C_{j_{i}}) &\leq -m\delta(c) + n_{T}(c, C^+_{j_{i}}) +m -1 
        \end{align}
        and
        \begin{align}
        \label{eq:from-center-nonadjacent-target-utility-half}
          \Phi(C_{j_{i+1}}) - \Phi(C^-_{j_{i+1}}) &\leq m\delta(c) - n_{T}(c, C^-_{j_{i+1}})+1 - 2m.
        \end{align}
        Since any improving move between $C^+_{j_{i}}$ and $C^-_{j_{i+1}}$ decreases the potential as shown in Lemma~\ref{lem:spider-transitions}, we obtain an upper bound on the net potential change from $C_{j_i}$ to
        $C_{j_{i+1}}$ by adding Inequalities~\eqref{case-st},~\eqref{eq:to-center-nonadjacent-target-utility-half}, and~\eqref{eq:from-center-nonadjacent-target-utility-half}:  
         \begin{align}
         \label{eq:zero-net-nonadjacent}
          \Phi(C_{j_{i+1}}) - \Phi(C_{j_{i}}) &\leq n_{T}(c, C^+_{j_{i}}) - n_{T}(c, C^-_{j_{i+1}}) - m - |(n_{T}(c, C^+_{j_{i}}) - n_{T}(c, C^-_{j_{i+1}}))| < 0.
        \end{align}
        
         \item $u_{\calA}(C_{j_{i+1}}) = 1/2$ and $s$ is adjacent to center. In this case, agent $\calA$ moves out of the center node to node $t$ to reach assignment $C_{j_{i+1}}$ where it has one neighbor of the same type and one neighbor of different type. Since $u_{\calA}(C_{j_{i+1}}) = 1/2$, $u_{\calA}(C^-_{j_{i+1}}) < 1/2$. From Table ~\ref{tab:spider-transitions-center}, we obtain:
         \begin{align}
         \label{eq:to-center-adjacent-target-utility-half}
            \Phi(C^+_{j_{i}}) - \Phi(C_{j_{i}}) &\leq -m\delta(c) + n_{T}(c, C^+_{j_{i}}) +m
        \end{align}
        and
        \begin{align}
        \label{eq:from-center-adjacent-target-utility-half}
          \Phi(C_{j_{i+1}}) - \Phi(C^-_{j_{i+1}}) &\leq m\delta(c) - n_{T}(c, C^-_{j_{i+1}})+1 - 2m.
        \end{align}
        Since any improving move between $C^+_{j_{i}}$ and $C^-_{j_{i+1}}$ decreases the potential as shown in Lemma~\ref{lem:spider-transitions}, we obtain an upper bound on the net potential change from $C_{j_i}$ to
        $C_{j_{i+1}}$ by adding Inequalities~\eqref{case-s},~\eqref{eq:to-center-adjacent-target-utility-half}, and~\eqref{eq:from-center-adjacent-target-utility-half}:  
         \begin{align*}
         \label{eq:zero-net-adjacent}
          \Phi(C_{j_{i+1}}) - \Phi(C_{j_{i}}) &\leq n_{T}(c, C^+_{j_{i}}) - n_{T}(c, C^-_{j_{i+1}}) - |(n_{T}(c, C^+_{j_{i}}) - n_{T}(c, C^-_{j_{i+1}}))|\leq 0.
        \end{align*}
    \end{enumerate}
The proof is complete.
\end{proof}

We are now ready to prove the main result of this section.

\begin{theorem}\label{thm:spider}
For $k\geq2$, every \game $(R, S, T, G, \lambda)$ where $G$ is a spider graph and $|V| = |R \cup S| + 1$ is a potential game.
\end{theorem}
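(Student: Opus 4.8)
The plan is to assemble Lemmas~\ref{lem:stubborn}, \ref{lem:spider-transitions}, \ref{lem:change-in-neighborhood}, \ref{lem:jump-from-degree-2} and \ref{jump-from-leaf} into one cyclic telescoping argument. By Lemma~\ref{lem:stubborn} it suffices to treat the case $S=\varnothing$. Fix a constant $m\in(0,1/2)$ and work with the potential $\Phi$ of Definition~\ref{def:potential-function}. Since the set of assignments is finite, the game is a potential game exactly when it admits no improving response cycle, so I would assume towards a contradiction that an IRC exists and fix a shortest one, $C_0,C_1,\dots,C_{p-1}$, viewed cyclically.

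The first step is to reduce everything to the center $c$. By Lemma~\ref{lem:spider-transitions}(1), and because $m>0$, every improving move whose source and target are both different from $c$ strictly decreases $\Phi$; hence the IRC cannot consist solely of such moves, so it contains at least one move into or out of $c$, and therefore at least one assignment in which $c$ is empty. Let $C_{j_0},\dots,C_{j_w}$ list, in cyclic order, exactly those assignments of the cycle in which $c$ is unoccupied. Since $G$ has a single empty node, the move leaving $C_{j_i}$ must carry some agent $\calA_i$ into $c$, and since no assignment strictly between $C_{j_i}$ and $C_{j_{i+1}}$ has $c$ empty, $c$ is occupied by this same $\calA_i$ throughout $C^+_{j_i},\dots,C^-_{j_{i+1}}$, which then leaves $c$ in the final step $C^-_{j_{i+1}}\to C_{j_{i+1}}$. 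So each ``episode'' from $C_{j_i}$ to $C_{j_{i+1}}$ is precisely of the shape analysed in Lemmas~\ref{lem:change-in-neighborhood}, \ref{lem:jump-from-degree-2} and \ref{jump-from-leaf}, with source $s_i$ (the position of $\calA_i$ in $C_{j_i}$) and target $t_i$ (its position in $C_{j_{i+1}}$).

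The crux is the identity $\sum_{i=0}^{w}\bigl(\Phi(C_{j_{i+1}})-\Phi(C_{j_i})\bigr)=0$, which holds because the sum telescopes around the cycle. On the other hand, Lemmas~\ref{lem:jump-from-degree-2} and \ref{jump-from-leaf} --- which already absorb, through Lemma~\ref{lem:change-in-neighborhood}, the potential change due to the intermediate moves and to $\calA_i$'s moves into and out of $c$ --- bound every summand above by $0$. Hence every summand equals $0$. But Lemma~\ref{lem:jump-from-degree-2} gives a strict decrease whenever $s_i$ has degree $2$, forcing every $s_i$ to be a leaf; and, from the case analysis proving Lemma~\ref{jump-from-leaf}, for a leaf source the change is strictly negative in every case except when $s_i$ is adjacent to $c$ and $u_{\calA_i}(C_{j_{i+1}})=1/2$. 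That last configuration is impossible under the single-empty-node hypothesis: on leaving $c$, agent $\calA_i$ lands on a neighbour $t_i$ of $c$, and in $C_{j_{i+1}}$ the node $c$ is empty while $t_i\neq c$ has degree at most $2$ in the spider, so $\calA_i$ has at most one occupied neighbour there and hence utility $0$ or $1$, never $1/2$. This contradicts the existence of the IRC, and the theorem follows.

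I expect the main difficulty to be not any single inequality --- those are delivered by the earlier lemmas --- but the clean set-up of the episode decomposition: in particular, justifying that between two consecutive $c$-empty assignments the center is held by a single agent, so that Lemmas~\ref{lem:jump-from-degree-2} and \ref{jump-from-leaf} apply verbatim, together with the logical point that a cyclic sum of nonpositive terms equal to zero forces each term to vanish, after which the degenerate ``utility $1/2$ on exiting $c$'' case singled out by Lemma~\ref{jump-from-leaf} is ruled out directly.
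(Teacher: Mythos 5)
Your argument tracks the paper's proof almost exactly up to the point where every episode of the cycle is forced to be of the form ``leaf adjacent to $c$ $\to$ $c$ $\to$ exit at utility exactly $1/2$.'' The gap is in your very last step: you assert that on leaving $c$ the agent $\calA_i$ lands on a \emph{neighbour} $t_i$ of $c$, and then rule out utility $1/2$ because such a node would have the now-empty center as one of its two neighbours. But nothing forces $t_i$ to be adjacent to $c$. With a single empty node, the node that is vacant at the moment $\calA_i$ exits is wherever the intermediate improving moves between $C^+_{j_i}$ and $C^-_{j_{i+1}}$ have left it, and that can be any degree-$2$ node far out on a leg of the spider. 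In fact the exit target in this degenerate case \emph{cannot} be adjacent to $c$, precisely for the reason you give; this is why case 4 of Lemma~\ref{jump-from-leaf} (leaf source adjacent to $c$, exit to a non-adjacent degree-$2$ node with one same-type and one different-type neighbour) only yields the bound $\leq 0$ rather than $<0$. So the configuration you try to exclude survives your argument, and no contradiction has been reached.

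The paper closes exactly this case with an additional argument about the \emph{types} of the successive occupants of the center. If every agent that enters $c$ during the cycle has the same type $T$, then while $c$ is occupied no $T$-agent ever moves onto the leaf just vacated (it would get utility $0$ there), so $n_T(c,\cdot)$ must strictly decrease over the course of the cycle, contradicting periodicity. Otherwise some occupant $\calA$ of type $T$ is followed by an occupant $\calB$ of type $T'\neq T$; since $\calA$ exits at utility $1/2$ we have $u_{\calA}(C^-_{j_{i+1}})<1/2$, which forces $u_{\calB}(C^+_{j_{i+1}})>1/2$, and since $\calB$ must also exit at utility $1/2$ one gets $n_{T'}(c,C^+_{j_{i+1}})<n_{T'}(c,C^-_{j_{i+2}})$, whence $\Phi(C_{j_{i+2}})-\Phi(C_{j_{i+1}})<0$, contradicting that every episode has zero potential change. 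You need this (or some other way of killing the ``exit at $1/2$ to a non-adjacent node'' episodes) to complete the proof; the rest of your write-up --- the episode decomposition, the telescoping identity, and the reduction via Lemma~\ref{lem:stubborn} --- is sound and matches the paper's structure.
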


\begin{proof}
Suppose instead that there is a game that is not a potential game. Then, there exists an IRC, and the net potential change of the moves in this cycle must be $0$. By Lemma~\ref{lem:spider-transitions}, every improving move that does not involve the center node decreases the potential. It follows that the cycle must involve moves to and out of the center node. Let $C_{j_{0}}, C_{j_{1}}, \ldots C_{j_{w}}$, be the assignments in the cycle in which the center node $c$ is empty, listed in increasing order. Lemmas~\ref{lem:jump-from-degree-2} and ~\ref{jump-from-leaf} show that for every $i$, the difference in potential between $C_{j_i}$ and $C_{j_{i+1}}$ is at most $0$. Since the potential never increases, and the net potential change must be $0$, it must in fact be that, {\em for every} $i$, $\Phi(C_{j_{i+1}}) - \Phi(C_{j_i}) = 0$. It follows from Lemma~\ref{jump-from-leaf} that, for every $i$, an agent moves to the center node from a leaf node adjacent to the center, and subsequently moves out to a degree-$2$ node to get a utility $1/2$. 

Assume without loss of generality that an agent $\calA$ of  type $T$ makes a move from such a leaf node $s$ neighbor of the center to occupy the center in $C_{j_0}^+$. Before $\calA$ moves out of the center node, clearly no agent of type $T$ would move to $s$, as it would get a utility $0$. 
It follows that, if {\em every} agent that move to the center in the cycle is of the same type $T$, then the number of neighbors of the center of type $T$ must decrease during the moves comprising the cycle, which is a contradiction.

Therefore, it must be the case that, for some $j_i$, an agent $\calA$ of type $T$ occupies the center node in $C^+_{j_{i}}$, and an agent $\calB$ of type $T'\neq T$ occupies the center in $C^+_{j_{i+1}}$. By Lemma~\ref{jump-from-leaf}, we have $u_{\calA}(C_{j_{i+1}}) = 1/2$, therefore $u_{\calA}(C^-_{j_{i+1}}) < 1/2$. Since $\calB$ is of a different type to $\calA$, it must be that $u_{\calB}(C^+_{j_{i+1}}) > 1/2$. 

Since the change in potential between $C_{j_{i+1}}$ and $C_{j_{i+2}}$ is also $0$, by Lemma~\ref{jump-from-leaf}, it must be that $\calB$ moves to a node $t$ to get a utility $1/2$, and thus $u_{\calB}(C^-_{j_{i+2}}) < 1/2$. This means that the  number of agents of type $T'$ that occupy the nodes adjacent to the center must have increased, that is, $n_{T'}(c,C^+_{j_{i+1}}) - n_{T'}(c, C^-_{j_{i+2}})<0$. Now, as in the proof of case 3 in Lemma~\ref{jump-from-leaf}, we have:
\begin{align*}
    \Phi(C_{j_{i+2}}) - \Phi(C_{j_{i+1}}) &\leq n_{T'}(c, C^+_{j_{i+1}}) - n_{T'}(c, C^-_{j_{i+2}}) - |(n_{T'}(c, C^+_{j_{i+1}}) - n_{T'}(c, C^-_{j_{i+2}}))|< 0,
\end{align*} 
which is a contradiction. As such, there can be no cycle in the assignment graph, and we conclude that the game must be potential.
\end{proof}

We remark that our proof assumes that there is a single empty node. It is unclear what happens when there are more empty nodes in the graph. In contrast to the above result, the similarity-seeking jump games studied in \cite{ijcai2019p38} are {\em not} potential; see the example given in Figure~\ref{fig:cycle-spider-similarity-seeking}.

\begin{figure}[!htb]
    \centering
    \includegraphics[scale=0.45]{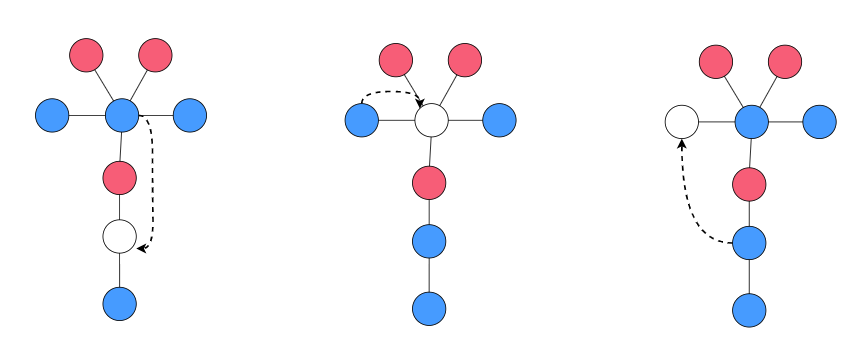}
    \caption{An IRC in a spider graph with similarity-seeking strategic agents.}
    \label{fig:cycle-spider-similarity-seeking}
\end{figure}

\section{Networks of Bounded Degree} \label{sec:bounded}
In this section, we consider \games on networks of fixed degree. We first show that \games are potential games whenever the topology is a regular graph and there is there is a single empty node. 

\begin{theorem}\label{thm:k-regular-graph}
For any $k, \delta \geq 2 $, every \game $(R, S, T, G, \lambda)$ where $G$ is a $\delta$-regular graph and $|V| = |R \cup S| + 1$ is a potential game.
\end{theorem}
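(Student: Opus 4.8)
The plan is to reduce to the all-strategic case using Lemma~\ref{lem:stubborn}, and then to show that the weighted-edge function $\Phi$ of Definition~\ref{def:potential-function} is a generalized ordinal potential function for \emph{any} fixed $m \in (0,1)$. The key structural observation is that, since $G$ is $\delta$-regular and there is exactly one empty node $v_0$, every one of the $\delta$ edges incident to $v_0$ has its other endpoint occupied, and hence contributes exactly $m$ to $\Phi$; so the empty node always contributes the constant $m\delta$. Writing $\mu(C)$ for the number of \emph{monochromatic} edges of $C$ (both endpoints occupied by same-type agents), this gives $\Phi(C) = m\delta + \mu(C)$ in every assignment $C$, so that tracking $\Phi$ is the same as tracking $\mu$.

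Next I would compute the effect of a single improving move. Suppose an agent $\calA$ of type $T$ jumps from its node $s$ to $v_0$, taking $C_0$ to $C_1$; in $C_1$ the unique empty node is $s$. Only the edges incident to $s$ or to $v_0$ change weight. Let $a$ be the number of type-$T$ neighbors of $s$ in $C_0$ and $a'$ the number of type-$T$ neighbors of $v_0$ in $C_1$. A short count shows the monochromatic edges at $s$ drop from $a$ to $0$, those at $v_0$ rise from $0$ to $a'$, and nothing else is affected, so $\Phi(C_1) - \Phi(C_0) = \mu(C_1) - \mu(C_0) = a' - a$. This identity holds verbatim whether or not $s$ and $v_0$ are adjacent, because the edge $\{s,v_0\}$, if present, is incident to the (unique) empty node in both $C_0$ and $C_1$ and so contributes $m$ in both and never lies in $\mu$.

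It then remains to link this to the improvement condition. Let $d$ be the number of occupied neighbors of $s$ in $C_0$; since adjacency is symmetric, $d$ is also the number of occupied neighbors of $v_0$ in $C_1$ --- namely $d = \delta$ if $s \not\sim v_0$ and $d = \delta-1$ if $s \sim v_0$, and in either case $d \geq 1$ because $\delta \geq 2$, so neither utility is the degenerate ``no neighbors'' value. Hence $u_\calA(C_0) = 1 - a/d$ and $u_\calA(C_1) = 1 - a'/d$, so the move is improving, i.e.\ $u_\calA(C_1) > u_\calA(C_0)$, exactly when $a' < a$, which is exactly when $\Phi(C_1) - \Phi(C_0) = a' - a \leq -1 < 0$. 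Thus every improving move strictly decreases $\Phi$, which is bounded above by $|E|$, so $\Phi$ is a generalized ordinal potential function and the game is a potential game; the case of arbitrary stubborn set $S$ follows from Lemma~\ref{lem:stubborn}.

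Unlike the spider-graph case, no reasoning about long sequences of moves is required here, so the only real subtlety is bookkeeping when $s$ and $v_0$ are adjacent and ruling out an agent left with no neighbors; both are dispatched by the two observations above (the shared edge never lies in $\mu$, and $d = \delta - 1 \geq 1$). I therefore expect no genuine obstacle beyond getting these edge cases of the count exactly right.
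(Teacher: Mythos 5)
Your proposal is correct and follows essentially the same route as the paper: both use the edge-weight function $\Phi$ of Definition~\ref{def:potential-function}, observe that an improving jump changes $\Phi$ by exactly the (negative) change in the number of same-type neighbors, and invoke Lemma~\ref{lem:stubborn} for stubborn agents. Your treatment is slightly more explicit about the case where the source and the empty node are adjacent, which the paper dismisses as easy to verify, but the argument is the same.
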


\begin{proof}
Assume that there are no stubborn agents. We show that the function $\Phi$ from Definition ~\ref{def:potential-function} is a generalized ordinal potential function. Consider an improving move of an agent $\calA$ in assignment $C_0$ and call the resulting assignment $C_1$. If $\calA$ does not move to a neighboring node, then we have  $u_{\calA}(C_0) = x_0/\delta $ and $u_{\calA}(C_1) = x_1/\delta$, where $u_{\calA}(C_0) < u_{\calA}(C_1)$ and therefore $x_0 < x_1$. Hence,  
\begin{align*}
  \Phi(C_1) - \Phi(C_0) & = m\delta + (\delta- x_1) - (\delta - x_0) -m\delta =  x_0 - x_1 < 0.
  \end{align*}   
It is easy to verify that the difference in potential is the same if $\calA$ moves to a neighboring empty node. 
This shows that $\Phi$ is a generalized ordinal potential function (for any value of $m \in (0,1)$) which completes the proof for the case where there are no stubborn agents. By Lemma~\ref{lem:stubborn}, the game is also a potential game when $S \neq \varnothing$.
\end{proof}

Theorem~\ref{thm:k-regular-graph} does not generally extend to the case where there are more empty nodes in the topology. We show next that when there are $4$ or more empty nodes, there can be IRCs in the dynamics. 

\begin{theorem}\label{thm:k-regular-graph-multi-empty}
For every $\delta \geq 4$ and $k \geq 2$, there exists a \game $I=(R, S, T, G, \lambda)$ where $G$ is a $\delta$-regular graph and $|V| = |R \cup S| + \delta$ such that $I$ is not a potential game.
\end{theorem}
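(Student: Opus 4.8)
The plan is to exhibit, for each $\delta \ge 4$ and $k \ge 2$, an explicit $\delta$-regular graph $G$ with exactly $\delta$ empty nodes, a set of strategic agents, and a set of stubborn agents, such that the resulting \game admits an \emph{improving response cycle} (IRC). Since the existence of an IRC forces any candidate generalized ordinal potential function to strictly decrease around a cycle, which is impossible, this shows the game is not a potential game, in contrast with the single-empty-node case of Theorem~\ref{thm:k-regular-graph}. The cycle will be driven by a constant-size \emph{active core}, in the spirit of Figure~\ref{fig:cycleTree} and the gadget $G_2$ from the proof of Theorem~\ref{thm:hardness}: a few special nodes $x_1,\dots,x_r$ whose appeal to a roaming agent changes depending on which of the $x_i$ are currently occupied, together with two or three strategic agents, \emph{all of one type, say red}, that chase the temporarily most attractive special node and thereby rotate the occupancy pattern back to itself. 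Making all strategic agents red is the key simplifying device: any empty node all of whose neighbors are stubborn red agents then offers utility $0$ and can never tempt a strategic agent.

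I would carry this out in the following steps. (1) \emph{Active core.} Give each special node $x_i$ degree exactly $\delta$ by attaching $a_i$ stubborn red and $b_i$ stubborn blue neighbors and letting the remaining $\delta - a_i - b_i$ neighbors be other special nodes. When a red roamer sits at $x_i$ with $f$ of its special-node neighbors currently occupied (necessarily by red roamers), its utility is $b_i/(a_i+b_i+f)$; the denominator is tunable through $f$, so with a suitable choice of the $a_i,b_i$ and the adjacencies among the $x_i$ one can realize a cyclic chain of strict improving inequalities, the analogue for general $\delta$ of the chain $1/2 < 4/7$, $4/7 < 5/7$, $5/6 < 1$ used in case~3 of the proof of Theorem~\ref{thm:hardness} (all denominators here being at most $\delta$, which is exactly why $\delta \ge 4$ leaves enough room, e.g.\ the fractions $\tfrac12,\tfrac23,\tfrac34$). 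One then checks directly that the rotation is a finite sequence of improving moves returning to the start; at every step precisely one special node is empty. (2) \emph{Regularize the remainder.} Every node not yet of degree $\delta$ is occupied by a stubborn agent; attach such nodes to fresh stubborn ``shell'' nodes forming a $\delta$-regular stubborn subgraph, so that every node of $G$ ends up with degree exactly $\delta$ and no new empty node or improving move is created. (3) \emph{Fix the empty count.} Add ``parked'' empty nodes inside the shell, each completely surrounded by stubborn red agents, until the number of empty nodes is exactly $\delta$ ($\delta - 1$ parked nodes plus the one currently-empty special node); a red roamer gets utility $0$ at any parked node, so no strategic agent ever wants to go there. (4) \emph{Extra types.} If $k>2$, place one stubborn agent of each type $T_3,\dots,T_k$ in an isolated, red-padded pocket of the shell, leaving all core utilities unchanged and all $k$ types nonempty.

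It then remains to confirm that the core cycle survives in the full instance: steps (2)--(4) change no utility inside the active core and hand no agent a new improving move (every shell node is stubborn-occupied, every parked empty node has utility $0$ for the red roamers, and the roamers keep strictly positive utility throughout the cycle, so they never prefer to leave the core). Hence $G$ has an IRC, the game is not a potential game, and by construction $|V| = |R \cup S| + \delta$.

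The step I expect to be the main obstacle is (1) together with the bookkeeping linking it to (3): one must simultaneously make the whole graph $\delta$-regular, keep the number of empty nodes at exactly $\delta$, and still realize a genuine cyclic sequence of strict improving inequalities over fractions with denominator at most $\delta$, uniformly for all $\delta \ge 4$. Once the core is pinned down this way, the rest --- the $\delta$-regular stubborn shell, the parked empty nodes, and the placement of the extra types --- is routine padding of the kind already used implicitly in the proof of Theorem~\ref{thm:hardness}.
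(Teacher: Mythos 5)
There is a genuine gap, and it is exactly where you predicted: step (1) is not merely the hardest step, it is unachievable as you have set it up. The problem is the interaction of three of your own design choices: all strategic agents are of one type (red), the graph is $\delta$-regular, and at every step precisely one special node is empty, with all other empty nodes parked away from the core behind stubborn red agents. Under these constraints, a red roamer at a special node $v$ has utility $\beta_v/(\delta - e_v)$, where $\beta_v$ is the (fixed) number of stubborn non-red neighbors of $v$ and $e_v$ is the number of empty neighbors of $v$; since the only empty node near the core is the unique empty special node $w$, we have $e_v = 1$ if $v$ is adjacent to $w$ and $e_v = 0$ otherwise. Now evaluate a candidate move of a roamer from $v$ to $w$: before the move its utility is $\beta_v/(\delta - [v \sim w])$, and after the move, when $v$ has become the unique empty special node, its utility at $w$ is $\beta_w/(\delta - [w \sim v])$. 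The two denominators are identical, so the move is improving if and only if $\beta_w > \beta_v$ --- the ``tunable denominator'' you are counting on cancels out entirely. Consequently the value $\beta$ of the currently empty special node strictly decreases with every improving move in the core, and no IRC can exist. Your analogy with the chain $1/2 < 4/7$, $4/7 < 5/7$, $5/6 < 1$ from the gadget $G_2$ of Theorem~\ref{thm:hardness} is misleading here: in that gadget the nodes $x,y,z$ have different degrees, which is precisely what $\delta$-regularity forbids.

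To repair this you must allow the composition of a roamer's prospective neighborhood to change in a way that does not reduce to comparing fixed numerators, which requires either strategic agents of at least two types or at least two empty nodes circulating through the core. This is what the paper does: for $\delta = 4$ it exhibits a concrete six-move IRC (Figure~\ref{fig:cycleRegular}) driven by one red and one blue strategic agent and four empty nodes, where each agent's move changes the other agent's utility through the numerator; for $\delta > 4$ it does not re-engineer the gadget but duplicates the $\delta$-regular example, joins each node to its copy, fills the copy with agents of a third type except for the node above the moving red agent, and verifies that the same six moves remain improving (with utilities $0 \to \frac{\delta-3}{\delta-2}$, $\frac{\delta-1}{\delta} \to 1$, $\frac{\delta-2}{\delta-1} \to \frac{\delta-1}{\delta}$). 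Your padding steps (2)--(4) are reasonable in spirit, but without a working core they have nothing to pad.
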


\begin{proof}
 For $\delta = 4$, we show an IRC in a regular graph of degree $4$ as depicted in Figure~\ref{fig:cycleRegular}. It is easy to verify that the moves shown are improving ones. Indeed, in the first move, the utility of the red agent improves from $0$ to $1/2$. In the second move, the utility of the blue agent improves from $3/4$ to $1$. In the third move, the utility of the red agent improves from $2/3$ to $3/4$. In the next three moves, the red and blue agents interchange roles, with the same changes in utilities. 
 
 We now give an inductive argument for the case of $\delta > 4$. Suppose inductively that there is an IRC in game $I$ on a $\delta$-regular graph $G$ for some $\delta \geq 4$. We now construct a $(\delta + 1)$-regular graph $H$ by creating a duplicate of graph $G$ called $G'$ and connecting every node in $G'$ to its counterpart in $G$. Consider now the $(\delta+1)$-regular graph $G \cup G'$, and the following placement of agents. We keep the same assignment of agents in $G$. The counterpart in $G'$ of the node containing the moving red agent has no agents assigned to it, while all other nodes in $G'$ are assigned yellow agents. It can be verified that the same IRC exists in this graph for the same red agent and blue agent visiting the same nodes as in $G$. Indeed, in the first move, the utility of the red agent improves from 0 to $\frac{d-3}{d-2}$. In the second move, the utility of the blue agent improves from $\frac{d-1}{d}$ to $1$. In the third move, the utility of the red agent improves from $\frac{d-2}{d-1}$ to $\frac{d-1}{d}$. As before, in the next three moves, the red and blue agents interchange roles, with the same changes in utilities. This completes the inductive argument. 
\end{proof}

\begin{figure}[t]
    \centering
    \includegraphics[scale=0.5]{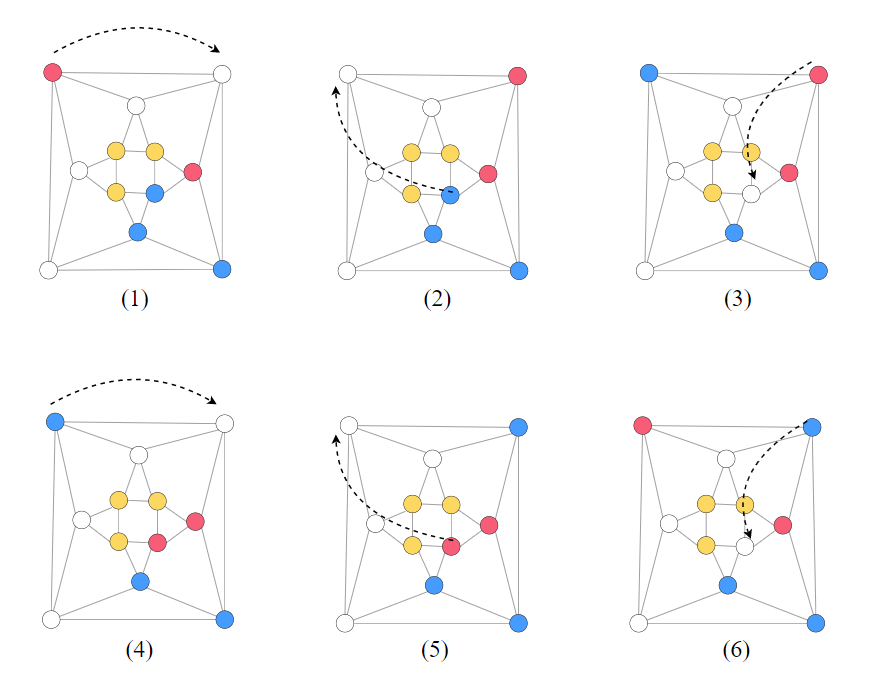}
    \caption{An IRC in a regular graph of degree 4 with 4 empty nodes.}
    \label{fig:cycleRegular}
\end{figure}
    
Note that there is a gap between the results of Theorems~\ref{thm:k-regular-graph} and \ref{thm:k-regular-graph-multi-empty}. The question of whether \games are potential games in regular graphs of degree 3, or regular graphs of degree 4 with 2 or 3 empty nodes remains open. Finally, we remark that an IRC on a regular graph with $k=2$ can be constructed from the tree shown in Figure~\ref{fig:cycleTree} but the number of empty nodes would be greater than $4$. 

Next, we consider topologies of maximum degree 2 (that may not be regular), and show that, regardless of the number of empty nodes, any \game is a potential game. 

\begin{theorem}\label{thm:degree-2}
For $k\geq2 $, every \game $(R, S, T, G, \lambda)$ where $G$ is a graph of maximum degree $2$ is a potential game.
\end{theorem}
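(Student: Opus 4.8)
The plan is to define a generalized ordinal potential function along the lines of $\Phi$ from Definition~\ref{def:potential-function}, but tailored to degree-$\le 2$ topologies, and show it strictly decreases on every improving move. A graph of maximum degree $2$ is a disjoint union of paths and cycles, so every node has degree $0$, $1$, or $2$. An agent's utility can only take the values $0$, $1/2$, or $1$ (with degree-$0$ nodes giving utility $0$ by convention, and a degree-$1$ node giving $0$ or $1$). By Lemma~\ref{lem:stubborn} it suffices to handle the case $S=\varnothing$, so I would assume all agents are strategic.

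The main subtlety — and the reason the plain $\Phi$ of Definition~\ref{def:potential-function} may not work directly — is the interplay between edge weights and isolated/degree-$1$ positions: an improving move from a degree-$0$ or degree-$1$ node of utility $0$ to a degree-$1$ or degree-$2$ node of utility $1$ removes a "monochromatic or empty-incident" edge structure in a way that needs the right value of $m$. So first I would fix $m\in(0,1/2)$ (as in the spider analysis) and enumerate the possible improving moves by the degree of the source node, the degree of the target node, the source utility, and the target utility, together with whether source and target are adjacent. This is a finite case analysis: the source utility is strictly smaller than the target utility, each in $\{0,1/2,1\}$, and degrees are in $\{0,1,2\}$. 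For each case I would compute $\Phi(C_1)-\Phi(C_0)$ exactly in terms of $m$, exactly as was done for the spider graph in Lemma~\ref{lem:spider-transitions} and Table~\ref{tab:spider-transitions}; indeed, since no node here plays the role of a high-degree center, every one of those transitions already gives a strictly negative change for $m<1/2$. The key point is that, unlike the spider case, there is no center node whose degree can make a move increase the potential, so all the "bad" cases of the spider analysis simply do not arise.

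Concretely, I would argue: when an agent jumps to a non-adjacent empty node, the net effect on $\Phi$ is (weight gained on the two edges incident to the target) minus (weight lost on the two edges incident to the source), and a utility increase from $x_0/d_0$ to $x_1/d_1$ forces enough same-type neighbours to be shed at the source and/or gained-as-different at the target that the monochromatic-edge count drops; the $m$-weighted empty-incident edges contribute a bounded correction that is dominated because $m<1/2$. When the source and target are adjacent (so the agent hops one step along a path or around a cycle), one edge is shared and the bookkeeping is even simpler. Collecting all cases, $\Phi(C_1)<\Phi(C_0)$ for every improving move, so $\Phi$ is a generalized ordinal potential function; by \citet{monderer1996potential} the game is a potential game, and then Lemma~\ref{lem:stubborn} extends this to arbitrary $S$.

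The main obstacle I anticipate is purely organizational: making sure the case enumeration is genuinely exhaustive once one allows degree-$0$ and degree-$1$ nodes and the adjacent-versus-non-adjacent distinction, and double-checking that the convention "an agent with no neighbours has utility $0$" does not create a spurious improving move from a degree-$1$ utility-$0$ node that $\Phi$ fails to penalize (it does not, but this needs the explicit $m<1/2$ computation, e.g. a move from a degree-$1$ node incident to a same-type neighbour, with source weight $1$, to a degree-$1$ node incident to a different-type neighbour, with source weight $0$: $\Delta\Phi = (0+m) - (1+m) = -1 < 0$). Since every individual case reduces to the same elementary arithmetic already appearing in Table~\ref{tab:spider-transitions}, I expect no genuine mathematical difficulty beyond that of the spider case — in fact strictly less, since the center-node complications vanish.
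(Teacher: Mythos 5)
Your overall strategy is exactly the paper's: take $\Phi$ from Definition~\ref{def:potential-function}, reduce to $S=\varnothing$ via Lemma~\ref{lem:stubborn}, and run a finite case analysis over source/target degrees, source/target utilities, and adjacency, showing every improving move strictly decreases $\Phi$. However, your specific parameter choice $m\in(0,1/2)$ is wrong, and it fails on precisely the kind of move you flagged as the danger point but then checked in the wrong instance. Consider an agent $\calA$ whose current node $s$ has \emph{no occupied neighbors} (so $u_\calA=0$ by convention; e.g.\ $s$ is a degree-$1$ or degree-$2$ node all of whose neighbors are empty), jumping to a non-adjacent degree-$2$ node $t$ with one same-type and one different-type occupied neighbor, for utility $1/2$. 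This is an improving move. The edges at $s$ have weight $m$ before and after the jump (both endpoints involve an unoccupied node in each case), so they contribute $0$; the two edges at $t$ go from weight $m+m$ to $1+0$. Hence $\Phi(C_1)-\Phi(C_0)=1-2m$, which is \emph{strictly positive} for $m<1/2$. So with your choice of $m$, $\Phi$ is not a generalized ordinal potential function. Your appeal to Table~\ref{tab:spider-transitions} does not cover this case because that table is computed for a spider with a single empty node, where a source node cannot have all of its neighbors empty unless one of them is the target.

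The repair is easy but necessary: choose $m\in(1/2,1)$ instead. Re-running all cases, the possible changes are of the form $-1$, $-m$, $-2m$, $m-1$, $m-2$, $2m-1-m=m-1$, and $1-2m$, all of which are strictly negative exactly when $m>1/2$; no case in a maximum-degree-$2$ graph requires $m<1/2$ (that constraint is only needed for the spider analysis, where moves into the center are in play). I will add that the paper's own write-up of this theorem lists the bound $1-2m$ in two of its cases and then asserts negativity ``for all values of $m\in(0,1)$,'' which is the same oversight; the correct reading of both your argument and the paper's is that the potential function works for this topology once $m$ is fixed in $(1/2,1)$.
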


\begin{proof}
Assume that there are no stubborn agents and there are at least 3 nodes in $G$. Consider an improving move of an agent $\calA$ in assignment $C_0$ that leads to assignment $C_1$. Suppose $\calA$ moves from node $s$ (in $C_0$) to node $t$ (in $C_1$). 
We first consider the case where $s$ and $t$ are neighbors: 
If $s$ has degree $1$ and $t$ has degree $2$, since this is an improving move for $\calA$, $t$'s other neighbor must have an agent of a different type from $\calA$, and thus $\Phi(C_1) - \Phi(C_0) =-m$. If, instead, $s$ has degree $2$ and $t$ has degree $2$, it must be that $\calA$'s other neighbor in $C_0$ is of the {\em same} type as $\calA$ or empty, while $t$ is adjacent to an agent of a different type than that of $\calA$. Thus, $\Phi(C_1) - \Phi(C_0) \leq -m$.

Next, we look at the cases where $s$ and $t$ are not neighbors:
\begin{itemize}
\item $s$ and $t$ have degree $1$. Then, $s$'s neighbor be an agent of the same type as $\calA$ or it is an empty node, while $t$'s neighbor must be an agent of type different than that of $\calA$. Hence, $\Phi(C_1) - \Phi(C_0) \leq -m$.

\item $s$ has degree 1 and $t$ has degree $2$. Then, either $\calA$ improves its utility from $0$ to $1/2$ and thus $\Phi(C_1) - \Phi(C_0) \leq 1 - 2m$, or improves its utility from $0$ to $1$ and thus $\Phi(C_1) - \Phi(C_0) \leq -m$.

\item $s$ has degree 2 and $t$ has degree $1$. Then, either $\calA$ improves its utility from $0$ to $1$ and thus $\Phi(C_1) - \Phi(C_0) \leq -m$, or improves its utility from $1/2$ to $1$ and thus $\Phi(C_1) - \Phi(C_0) = 2m-1-m=-m-1$.

\item $s$ and $t$ have degree $2$. Then, either $\calA$ improves its utility from $0$ to $1/2$ and thus $\Phi(C_1) - \Phi(C_0) \leq 1-2m$, or from $0$ to $1$ and thus $\Phi(C_1) - \Phi(C_0) \leq -m$, or from $\frac{1}{2}$ to $1$ and thus $\Phi(C_1) - \Phi(C_0)\leq m-1$.
\end{itemize}
In all possible cases,  $\Phi(C_1) - \Phi(C_0)<0$ for all values of $m \in (0,1)$. Therefore, $\Phi$ is a generalized ordinal potential function, and the \game is a potential game. From Lemma~\ref{lem:stubborn}, the result is also true when stubborn agents exist. 
\end{proof}

\section{Efficiency at Equilibrium} \label{sec:poa}
In this section, we turn our attention to showing welfare guarantees at equilibrium by bounding the price of anarchy and the price of stability. We show bounds on the price of anarchy for games consisting of $n \geq 2$ {\em strategic} agents that are partitioned into $k \geq 2$ types. The types might be {\em asymmetric} or {\em symmetric}. In the asymmetric case, each type $T$ consists of some number $n_T \geq 1$ of agents without any restriction other than that the types form a partition of the set of all agents. In contrast, in the symmetric case, each type $T$ consists of the same number $n_T = n/k$ of agents. We show that the price of anarchy is at most $n/(k-1)$ for asymmetric types and at most $k/(k-1)$ for symmetric type. We complement these upper bounds with tight lower bounds, which in some cases are even achieved in games with a topology as simple as a line. For the price of stability, we provide a lower bound of $65/62 \approx 1.048$ for $k=2$ types showing that there are games in which the optimal assignments (in terms of social welfare) are not always equilibria. 

We start be showing the upper bounds on the price of anarchy. 

\begin{theorem} \label{thm:poa-upper}
For $k \geq 2$ types, the price of anarchy of any \game
with $n$ agents is at most $\frac{n}{k-1}$ if the types are asymmetric, and 
at most $\frac{k}{k-1}$ if the types are symmetric. 
\end{theorem}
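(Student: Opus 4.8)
The plan is to bound $\OPT(I)$ from above and $\min_{C\in\NE(I)}\SW(C)$ from below, and then take the ratio. For the upper bound on $\OPT$, I would use the trivial observation that every agent has utility at most $1$, so $\OPT(I)\le n$ for asymmetric types. For symmetric types, I would argue more carefully: in any assignment, consider the edges of $G$ incident to two occupied nodes; the total utility $\SW(C)$ counts, for each such bichromatic edge $\{u,v\}$, a contribution of $1/n(u,C)+1/n(v,C)$, while each monochromatic occupied edge contributes $0$. Since every agent of a type $T$ with $n/k$ members can have at most... actually the cleaner route is: $\SW(C) = \sum_{\calA}\frac{(\text{\# differently-typed neighbors of }\calA)}{(\text{\# neighbors of }\calA)}$, and since at most a $\frac{n/k - 1}{\dots}$... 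Let me instead bound $\OPT$ directly: for symmetric types, I claim $\OPT(I)\le n$ still holds trivially, but the improvement must come from the \emph{lower} bound side being proportionally larger, OR from showing $\OPT \le \frac{k}{k-1}\cdot(\text{something})$. The key inequality I expect to need is that in \emph{any} equilibrium, every agent's utility is at least $\frac{k-1}{n}$ (asymmetric) respectively at least $\frac{k-1}{k}$ (symmetric) — or at least that the social welfare is — which then gives $\min_{\NE}\SW \ge (k-1)/\ldots \cdot$ something, and combined with $\OPT\le n$ yields the bounds.

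Concretely, the heart of the argument is a lower bound on the social welfare at any equilibrium. I would argue: take an equilibrium $C$ and any strategic agent $\calA$. If $\calA$ currently has an empty node available that is adjacent to some occupied node (which must be the case as long as $|V|>n$ and $G$ is connected enough — or I handle isolated empty components separately), then by the equilibrium condition $u_\calA(C)$ is at least the utility $\calA$ would get by jumping there. The subtle point is to find, for each agent, a \emph{good} target: an empty node whose occupied neighbors include at least one agent of a type other than $\calA$'s. Counting argument: since there are $k$ types, among the agents adjacent to any fixed empty node, if they are not all of $\calA$'s type, jumping there gives $\calA$ positive utility; and by pigeonhole / a global counting over all $n$ agents, one can guarantee every agent either already has good utility or has access to such a node, forcing $u_\calA(C)\ge \frac{k-1}{n}$ in the asymmetric case (one differently-typed neighbor out of at most $n$) and $\ge \frac{k-1}{k}$ in the symmetric case (here I'd use that an agent can be placed adjacent to only same-type agents only if it uses up one of the $n/k$ slots, so in a worst deviation at least a $(k-1)/k$ fraction of a two-node neighborhood... this needs the structure that agents can isolate into a same-type clique of size $n/k$).

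The main obstacle, I expect, is the symmetric case: showing $\OPT\le \frac{k}{k-1}\cdot\min_{\NE}\SW$ requires simultaneously controlling $\OPT$ and the equilibrium welfare, and the clean bound $u_\calA\ge\frac{k-1}{k}$ at equilibrium is \emph{not} literally true for every agent (an isolated agent has utility $0$). So the real statement must be an \emph{aggregate} one: $\SW(C)\ge \frac{k-1}{k}\cdot n$ at every equilibrium, proved by a charging/averaging argument over all empty nodes and all agents, using that the $n/k$ agents of any single type cannot all simultaneously be "hiding" — if many agents of type $T$ have low utility, then many nodes adjacent to empties are dominated by type $T$, but type $T$ has only $n/k$ agents, bounding how much of the graph it can dominate, which forces most low-utility agents to have a profitable deviation, contradiction. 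I would organize the proof as: (i) reduce to connected $G$ and argue about empty nodes; (ii) state and prove the aggregate welfare lower bound at equilibrium via this domination-counting argument; (iii) combine with $\OPT\le n$ (asymmetric) or a matching $\OPT\le n$ bound together with the stronger equilibrium bound (symmetric) to conclude. I'd expect step (ii) in the symmetric case to be where essentially all the work lies, and where I'd need to be careful about agents with empty or degree-one neighborhoods.
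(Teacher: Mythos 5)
Your high-level plan (bound $\OPT$ by $n$ and lower-bound the welfare of every equilibrium) matches the paper's, but the step that actually carries the theorem is missing, and the mechanism you sketch in its place is both unworked and aimed at the wrong object. The paper's argument is local and direct: fix \emph{one} empty node $v$ in the equilibrium $C$, let $x_t$ be the number of type-$t$ agents adjacent to $v$ and $x=\sum_t x_t\geq 1$. The no-deviation condition applied to \emph{every} agent with respect to this single target node says that an agent of type $T$ has utility at least $\frac{x-x_T}{x}$ (or $\frac{x-x_T}{x-1}$ if it is itself adjacent to $v$; the degenerate case $x=x_T=1$ is handled separately). Summing over all agents gives $\SW(C)\geq \sum_T n_T\cdot\frac{x-x_T}{x}=n-\frac{1}{x}\sum_T n_T x_T\geq n-\max_T n_T$, and plugging in $\max_T n_T\leq n-k+1$ (asymmetric) or $n_T=n/k$ (symmetric) yields $\SW(C)\geq k-1$ and $\SW(C)\geq n\frac{k-1}{k}$ respectively; combined with $\OPT(I)\leq n$ this is the whole proof. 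The point you are missing is that the utility the empty node offers depends on the deviating agent's \emph{own} type through $x_T$, and it is exactly this type-dependence that makes the sum over types collapse to $n-\max_T n_T$.

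Your proposal never writes down this deviation inequality. You correctly observe that the pointwise bounds $u_\calA\geq\frac{k-1}{n}$ and $u_\calA\geq\frac{k-1}{k}$ are false and that an aggregate bound is needed, but the replacement you propose --- a global ``domination-counting'' argument over \emph{all} empty nodes, claiming that a type with $n/k$ members cannot dominate the neighborhoods of too many empty nodes --- is not carried out: you do not define domination quantitatively, do not explain how a count of dominated empty nodes converts into a welfare bound, and do not derive the constant $\frac{k-1}{k}$. It is also unnecessary machinery, since a single empty node already constrains every agent simultaneously; no statement about how many empty nodes a type can dominate is required. The asymmetric case is likewise left unresolved (your text trails off at ``$\min_{\NE}\SW\geq(k-1)/\ldots$''), and the aggregate bound $\SW(C)\geq k-1$ that it requires never appears. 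As written, the proposal identifies the shape of the answer but does not contain a proof of either bound.
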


\begin{proof}
Consider an arbitrary game $I$ and denote by $C$ some equilibrium assignment, in which there is an empty node $v$ that is adjacent to $x_T$ agents of type $T$. Let $x = \sum_T x_T \geq 1$. First, suppose that $x = x_T = 1$ for some type $T$. Then the agents of all types different than $T$ must have utility $1$ to not have incentive to deviate to $v$, which means that $\SW(C) \geq \sum_{t \neq T} n_t$. Since the maximum utility of each agent is $1$, $\OPT(I) \leq n$, leading to a price of anarchy of at most $\frac{n}{n-n_T}$. 
If the types are asymmetric, then $n_T \leq n-k+1$, and the price of at anarchy is at most $\frac{n}{k-1}$. 
If the types are symmetric, then $n_T = n/k$, and the price of anarchy is at most $\frac{k}{k-1}$. 

So, now we assume that $x \geq 2$. 
To not have incentive to deviate to $v$, any agent of type $T$ that is adjacent to $v$ must have utility at least 
$\frac{\sum_{t \neq T}x_t}{x - 1} = \frac{x-x_T}{x-1}$, 
while any agent of type $T$ that is not adjacent to $v$ must have utility at least 
$\frac{\sum_{t \neq T}x_t}{x} = \frac{x-x_T}{x}$, 
So, the social welfare at equilibrium can be lower-bounded as follows:
\begin{align*}
    \SW(C) 
    &\geq \sum_T \bigg( x_T \cdot \frac{x-x_T}{x - 1}  + (n_T - x_T)\cdot \frac{x-x_T}{x} \bigg) \\
    &\geq \frac{1}{x} \sum_T n_T \cdot (x-x_T) \\
    &= n - \frac{1}{x}\sum_T n_T x_T.
\end{align*}
If the types are asymmetric, then $n_T \leq n-k+1$ for every type $T$, and the social welfare at equilibrium is
$$\SW(C) \geq n - \frac{1}{x}\sum_T (n-k+1) x_T = n - (n-k+1) = k-1.$$
If the types are symmetric, then $n_T = n/k$ for every type $T$, and the social welfare at equilibrium is
$$\SW(C) \geq n - \frac{1}{x}\sum_T \frac{n}{k} x_T = n - \frac{n}{k} = n \frac{k-1}{k}.$$
Since $\OPT(I) \leq n$, we conclude that the price of anarchy is at most $\frac{n}{k-1}$ if the types are asymmetric, and at most $\frac{k}{k-1}$ if the types are symmetric.
\end{proof}

Next, we investigate under which conditions the upper bounds of Theorem~\ref{thm:poa-upper} are tight. We start with the case of asymmetric types. We show that for games in which the topology is a line, a better price of anarchy bound can be shown; in particular, the price of anarchy can be shown to be at most a function of $n$ that tends to $2$, for any number of types and any distribution of agents into types. 

\begin{theorem} \label{thm:poa-asymmetric-line}
The price of anarchy of any \game with a line topology and $n$ agents of $k \geq 2$ types is at most $\frac{2n}{n-1}$. 
\end{theorem}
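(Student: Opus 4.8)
The plan is to reduce the statement to a lower bound on the social welfare at equilibrium. Since every agent has utility at most $1$, we have $\OPT(I)\le n$, so it suffices to prove that every equilibrium $C$ of a game on a line satisfies $\SW(C)\ge \tfrac{n-1}{2}$: indeed then $\SW(C)\ge \tfrac{n-1}{2}\ge \tfrac{(n-1)\OPT(I)}{2n}$, hence $\tfrac{\OPT(I)}{\SW(C)}\le \tfrac{2n}{n-1}$.

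To lower-bound $\SW(C)$ I would exploit that $|V|>n$, so there is an empty node, and since the line is connected and $n\ge 2$ there is an empty node $v$ with at least one occupied neighbour; on a line $v$ has at most two neighbours. Consider first the favourable case where $v$ has two occupied neighbours, of types $T_a$ and $T_b$. Because $C$ is an equilibrium, no agent gains by jumping to $v$. If $T_a\ne T_b$, then any agent whose type is neither $T_a$ nor $T_b$ would obtain utility $1$ at $v$ and hence already has utility $1$ in $C$, while any agent of type $T_a$ or $T_b$ would obtain utility at least $\tfrac12$ at $v$ and hence has utility at least $\tfrac12$ in $C$; summing, $\SW(C)\ge (n-n_{T_a}-n_{T_b})+\tfrac12(n_{T_a}+n_{T_b})=n-\tfrac12(n_{T_a}+n_{T_b})\ge \tfrac n2\ge \tfrac{n-1}{2}$. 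The case $T_a=T_b$ I would handle together with the structural observations below. In particular this already disposes of every instance in which two maximal blocks of occupied nodes are separated by a single empty node.

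The remaining case is when every empty node has at most one occupied neighbour; pick such a $v$ with exactly one occupied neighbour $a$ of type $T_a$. As before, every agent whose type differs from $T_a$ would get utility $1$ at $v$, so all of them have utility $1$ in $C$, giving $\SW(C)\ge n-n_{T_a}$, which already finishes the proof unless $T_a$ is a large majority type with $n_{T_a}>\tfrac{n+1}{2}$. For that regime I would analyse the block structure of $C$ much as equilibria on trees are analysed in Section~\ref{sec:tree}: if $C$ has any agent of utility $0$ (a ``bad'' agent), then all such agents share a common type, every endpoint of every block is occupied by that type, and every agent of every other type has utility $1$. Using this I would argue that around each maximal run of majority agents the flanking non-majority agents force enough majority agents to have utility $\tfrac12$, raising $\SW(C)$ to $\tfrac{n-1}{2}$; and where the majority is too overwhelming for this, I would replace $\OPT(I)\le n$ by the sharper $\OPT(I)\le 3(n-n_{T_a})$ (a majority agent has positive utility only when adjacent to one of the $n-n_{T_a}$ other agents, each with at most two neighbours) and use that in the ratio.

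The step I expect to be the main obstacle is precisely this last case: relating the number of bad majority agents, the number and lengths of the runs of non-majority agents, and the value of $\OPT(I)$, so that $\OPT(I)/\SW(C)\le \tfrac{2n}{n-1}$ even when almost all agents are of one type. I would organise it as a sub-case split on whether every run of non-majority agents has length $1$ — in which case each such agent is flanked by two majority agents of utility $\tfrac12$, so $\SW(C)\ge 2(n-n_{T_a})$, more than enough — or some run is longer, in which case $\OPT(I)$ is correspondingly constrained; each sub-case is then closed by comparing $\SW(C)$ against the better of $\OPT(I)\le n$ and $\OPT(I)\le 3(n-n_{T_a})$.
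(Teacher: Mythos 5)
Your first two cases are correct and essentially follow the paper's route: an empty node adjacent to two agents of different types forces utility at least $1/2$ everywhere (giving $\SW(C)\ge n/2$), and otherwise all agents outside the single ``attracting'' type $T_a$ must have utility $1$, which closes the argument unless $n_{T_a}>\frac{n+1}{2}$. The genuine gap is in the majority regime, and the specific claims you make there do not hold up. First, the claim that when every run of non-majority agents has length $1$ each such agent ``is flanked by two majority agents of utility $1/2$'', giving $\SW(C)\ge 2(n-n_{T_a})$, is false: a non-majority agent can sit at an endpoint of the line (or next to an empty node), with a single majority neighbour, so the correct guarantee from this argument is only about $\tfrac{3}{2}(n-n_{T_a})$, not $2(n-n_{T_a})$. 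Second, the statement that a longer run of non-majority agents in $C$ means ``$\OPT(I)$ is correspondingly constrained'' is a category error: $\OPT(I)$ is a supremum over \emph{all} assignments and depends only on the topology and the type counts, not on the block structure of the particular equilibrium $C$. Third, and most importantly, your bound $\OPT(I)\le 3(n-n_{T_a})$ is too weak to finish: what is actually provable about the equilibrium in this regime (and what the paper proves) is $\SW(C)\ge (n-n_{T_a})+\tfrac12$ --- for $k\ge 3$ the non-majority agents can alternate types in a single long run, all with utility $1$ but touching only one majority agent --- and $\frac{3(n-n_{T_a})}{(n-n_{T_a})+1/2}$ tends to $3$, not to $2$.

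The paper closes this case differently: it first argues that at equilibrium at least one majority agent has utility at least $1/2$ (otherwise every majority agent is isolated from non-majority agents, so connectivity of the line yields an empty node adjacent to a non-majority agent to which a zero-utility majority agent would jump), giving $\SW(C)\ge (n-n_{T_a})+\tfrac12$, and then pairs this with the sharper upper bound $\OPT(I)\le 2(n-n_{T_a})+1$, which makes the ratio exactly $2$. Any repair of your argument must reproduce a $2(n-n_{T_a})+O(1)$ upper bound on $\OPT(I)$ rather than $3(n-n_{T_a})$; note that this is genuinely delicate, because a majority agent adjacent to one non-majority agent and one \emph{empty} node already attains utility $1$, so when the topology has many empty nodes one can isolate triples of the form (majority, non-majority, majority) and push the optimal welfare above $2(n-n_{T_a})+1$. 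Your charging argument (each non-majority agent has at most two neighbours) only ever yields the factor $3$, so as written the majority case is not established.
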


\begin{proof}
Consider an arbitrary game $I$ with a line topology, and let $C$ be some equilibrium assignment. First, suppose that there is an empty node $v$ in $C$ that is adjacent to only agents of one type, say red. Then, to not have incentive to deviate to $v$, the non-red agents must all have utility $1$. 
Observe that there must exist some red agent with utility at least $1/2$; otherwise there would exist some empty node  that is adjacent to non-red agents (since $G$ is a connected line), where the red agents could jump to get positive utility. So, $\SW(C) \geq n - n_r + 1/2$, where $n_r$ is the number of red agents. We consider the following two cases:
\begin{itemize}
    \item If $n_r \leq \lfloor n/2 \rfloor +1$, then since $\OPT(I) \leq n$ 
    and $\SW(C) \geq n-n_r+1/2 \geq \lceil n/2 \rceil -1/2$, the price of anarchy is at most $\frac{n}{\lceil n/2 \rceil-1/2}$. This is at most $\frac{2n}{n-1}$  if $n$ is even, and at most $2$ if $n$ is odd. 
    
    \item If $n_r > \lfloor n/2 \rfloor + 1$, then it is not possible to achieve social welfare $n$ since some red agents cannot have only neighbors of different type. In particular, the maximum possible social welfare is $\OPT(I) \leq 2(n-n_r)+1 = 2(n-n_r+1/2)$; this can be achieved by creating $n-n_r$ pairs of agents of different type (which, in the optimal assignment, are placed one after the other on the line) and possibly one more red agent placed at the beginning or the end of the line getting utility 1. Hence, the price of anarchy is at most $2$.
\end{itemize}

Finally, suppose that the empty node $v$ in $C$ is adjacent to agents of two different types, say red and blue. Then, to not have incentive to deviate to $v$, the non-red and non-blue agents must all have utility $1$, the two agents adjacent to $v$ must also have utility $1$, and the remaining red and blue agents must have utility at least $1/2$. Therefore, since every agent gets utility at least $1/2$ and the maximum utility is $1$, the price of anarchy is at most $2$. 
\end{proof}

We next show that the price of anarchy approaches the upper bound for asymmetric types when the topology is slightly more general than a line. In particular, we show the following lower bound. 

\begin{theorem}
There is a \game with a star topology and and $n$ agents of $k \geq 2$ asymmetric types such that the price of anarchy at least $\frac{n-1}{k-1}$.
\end{theorem}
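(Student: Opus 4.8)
The plan is to pin down the price of anarchy by exhibiting a single family of \game instances on stars that simultaneously admits an optimal assignment of social welfare $n$ and an equilibrium of social welfare only $(k-1)\cdot\frac{n}{n-1}$, so that the ratio of the two is exactly $\frac{n-1}{k-1}$ (which, in light of Theorem~\ref{thm:poa-upper}, is nearly the best possible). Concretely, I would take the topology to be a star with center $c$ and $n$ leaves, so that $|V| = n+1 > n$, and partition the $n$ agents \emph{asymmetrically} into a ``majority'' type $T_1$ with $n-k+1$ agents together with $k-1$ ``singleton'' types $T_2,\dots,T_k$, each consisting of a single agent; one checks that $1+(n-k)+(k-1)=n$ and that every $n_{T}\ge 1$, so this is a valid asymmetric partition.

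For the optimal side I would place the unique agent of type $T_2$ at $c$ and the remaining $n-1$ agents on distinct leaves, leaving one leaf empty. Since a leaf's only neighbor is $c$, every leaf agent then has a neighbor of a different type and hence utility $1$, while the agent at $c$ sees $n-1$ occupied neighbors, all of a type different from $T_2$, so it too has utility $1$. Thus $\SW$ equals $n$ for this assignment, and because each utility is at most $1$ we get $\OPT(I) = n$.

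For the bad equilibrium $C$ I would instead place an agent of type $T_1$ at $c$, the other $n-k$ agents of type $T_1$ on leaves, and the $k-1$ singleton agents on distinct leaves, once again leaving one leaf empty. Then each $T_1$ leaf agent has utility $0$ (its sole neighbor $c$ is of type $T_1$), each singleton leaf agent has utility $1$, and the agent at $c$ has $n-1$ occupied neighbors of which exactly $k-1$ are of a different type, giving it utility $\frac{k-1}{n-1}$; hence $\SW(C) = (k-1) + \frac{k-1}{n-1} = (k-1)\cdot\frac{n}{n-1}$. The ratio $\OPT(I)/\SW(C)$ is then $\frac{n}{(k-1)n/(n-1)} = \frac{n-1}{k-1}$, which is the claimed bound, provided $C$ is genuinely an equilibrium.

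The one step requiring care is precisely that equilibrium check, though in a star it is mild because all leaves are interchangeable and a leaf agent has essentially one candidate deviation. The only empty node of $C$ is a leaf whose unique neighbor is the $T_1$-occupied center, so an agent relocating there obtains utility $0$ if it is of type $T_1$ and utility $1$ otherwise; in the first case this is not an improvement over its current utility $0$, and in the second case not an improvement over its current utility $1$. The agent at $c$ can only move to that empty leaf, which would drop its utility from $\frac{k-1}{n-1}>0$ to $0$. Hence no agent has an improving move, $C\in\NE(I)$, and $\PoA \ge \OPT(I)/\SW(C) = \frac{n-1}{k-1}$. I would also remark on the degenerate case $n=k$, where all types are singletons, $\SW(C)=n$, and the bound reads $1$ and is attained trivially.
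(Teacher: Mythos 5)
Your construction is exactly the paper's: a star with one majority type of size $n-k+1$ and $k-1$ singleton types, an optimum of welfare $n$ with a singleton agent at the center, and an equilibrium of welfare $(k-1)\frac{n}{n-1}$ with a majority-type agent at the center. The only difference is that you spell out the equilibrium verification, which the paper leaves implicit; your argument is correct.
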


\begin{proof}
Consider a game with star topology and $k \geq 2$ types such that $k-1$ types have size $1$ while one type, say red, has size $n-k+1$. Any assignment such that the center node of the topology is occupied by an agent of the first $k-1$ types is optimal with social welfare $n$, whereas the assignment according to which the center node is occupied by a red agent is an equilibrium with social welfare $k-1 + \frac{k-1}{n-1} = (k-1) \frac{n}{n-1}$. Hence, the price of anarchy is at least $\frac{n-1}{k-1}$.
\end{proof}

We now turn our attention to the case of games with symmetric types. We first show that for $k=2$ symmetric types, a slightly better bound than that of Theorem~\ref{thm:poa-asymmetric-line} can be shown when the topology is a line; this bound is a different function of $n$ that again tends to the upper bound of $2$ when $n$ becomes large. 

\begin{theorem}
For \games with line topology and $n$ agents of $k=2$ symmetric types, the price of anarchy is exactly $\frac{2n}{n+4}.$
\end{theorem}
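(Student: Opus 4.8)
The plan is to establish three facts and combine them. First, $\OPT(I)=n$ for every game $I$ in this class: since $n$ is even, placing the agents in the alternating pattern red, blue, red, blue, $\dots$ on $n$ consecutive nodes of the line makes every agent adjacent only to opposite-type agents, giving $\SW=n$, and no assignment beats this since each utility is at most $1$. Second, $\SW(C)\ge\tfrac{n+4}{2}$ for every equilibrium $C$ of every such game. Third, there is a game with an equilibrium attaining $\SW=\tfrac{n+4}{2}$. Since $\PoA=\sup_{I:\NE(I)\neq\varnothing}\OPT(I)/\min_{C\in\NE(I)}\SW(C)$, these give $\PoA=n/\tfrac{n+4}{2}=\tfrac{2n}{n+4}$ exactly.

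For the lower bound on $\SW$ at equilibrium, fix an equilibrium $C$. If some empty node $v$ has occupied neighbours of both colours --- on a line this means exactly one red agent $a$ and one blue agent $b$ --- use it: moving $a$ onto $v$ would leave $a$ adjacent only to $b$, hence with utility $1$, so $u_a=u_b=1$; moving any other agent onto $v$ gives utility exactly $1/2$, so every agent has utility $\ge 1/2$; and the leftmost and rightmost occupied agents each have a single occupied neighbour (so utility $0$ or $1$, and $0$ is impossible since they could jump to $v$ for $1/2$). A short check shows the leftmost occupied agent lies strictly left of $a$ and the rightmost strictly right of $b$, so these four agents are distinct, whence $\SW(C)\ge 4\cdot 1+(n-4)\cdot\tfrac12=\tfrac{n+4}{2}$. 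Otherwise every empty node with an occupied neighbour is monochromatic; fix one whose occupied neighbours are all red, so every blue agent has utility $1$ (else it would jump there). If some empty node is monochromatic blue, then symmetrically all red agents have utility $1$ and $\SW(C)=n$. If not, no empty node is adjacent to a blue agent, so every blue agent's neighbours are red or off the line; counting red--blue adjacencies, each non-endpoint blue agent contributes $2$ and each of the at most two endpoint blue agents contributes $1$, for at least $n-2$ such adjacencies, and since $u_\rho\ge(\#\text{blue neighbours of }\rho)/2$ for every red agent $\rho$, the red agents have total utility at least $(n-2)/2\ge 2$ for $n\ge 6$; for $n=4$ one checks directly that every equilibrium on a line with two agents of each type has $\SW=4=\OPT$. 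In all cases $\SW(C)\ge\tfrac{n+4}{2}$, so $\PoA\le\tfrac{2n}{n+4}$.

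For the matching lower bound on $\PoA$, construct for every even $n\ge 4$ a line and an equilibrium with exactly four agents of utility $1$ and $n-4$ of utility $1/2$, so $\SW=4+\tfrac{n-4}{2}=\tfrac{n+4}{2}$ while $\OPT=n$. For $n\equiv 0\pmod 4$, on the line with $n+1$ nodes take $B\,(RRBB)^{s}\,R\,\square\,B\,(RRBB)^{s'}\,R$ with $s+s'=(n-4)/4$ and $\square$ the single empty node; one verifies the colour classes are balanced, each agent inside an $RRBB$ block has one same-type and one opposite-type neighbour (utility $1/2$), the two endpoint agents and the two neighbours of $\square$ have utility $1$, and the only possible deviation --- onto $\square$ --- yields utility $1/2$, so it is an equilibrium. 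For $n\equiv 2\pmod 4$ the analogous argument works starting from $B\,(RRBB)^{t}\,R\,R\,B\,\square\,R\,B$.

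The step I expect to be the main obstacle is the monochromatic case of the upper bound: getting $\SW(C)\ge\tfrac{n+2}{2}$ is routine, but squeezing out the extra $\tfrac12$ requires the observation that once an entire colour sits at utility $1$, the empty nodes become disjoint from that colour, which rigidifies the block structure enough to force the total utility of the other colour up to $2$ and to make the leftover small case checkable. The bookkeeping in the both-colours case --- verifying that the four forced utility-$1$ agents are genuinely distinct even when the empty node is near an end of the line --- also needs a little care.
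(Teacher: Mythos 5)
Your proof is correct and follows essentially the same route as the paper's: $\OPT(I)=n$, a case analysis on whether some empty node has occupied neighbours of both colours (forcing four distinct utility-$1$ agents and utility at least $1/2$ for everyone else) or only monochromatic ones, and a matching equilibrium construction with four agents at utility $1$ and $n-4$ at utility $1/2$. If anything, your handling of the monochromatic case is more careful than the paper's, which asserts that all agents then have utility $1$ ``due to symmetry,'' whereas your red--blue adjacency count (together with the separate check for $n=4$) makes that step rigorous.
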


\begin{proof}
For the lower bound, consider a game with a line topology consisting of $n+1$ nodes and $n$ agents of two types, red ($r$) and blue ($b$).
In the optimal assignment, the agents occupy the nodes of the line so that each red agent is followed by a blue agent. 
This guarantees that every agent has only neighbors of different type and thus achieves a maximum utility of $1$, leading to an optimal social welfare of $n$. 
Now consider the assignment in which the agents are ordered according to their types as follows: $(r,b,b,r,r,\ldots,r,r,b,b,r,v,b,r)$, where $v$ is the empty node. This is an equilibrium since the empty node is adjacent to a red and a blue agent, and every agent has either only neighbors of different type (and thus utility $1$), or one red and one blue neighbor (and thus utility $1/2$); hence, no agent would have incentive to jump to the empty node since it only offers utility $1/2$. There are exactly $4$ agents that achieve utility $1$ and $n-4$ agent that achieve utility $1/2$, for a social welfare of $(n-4)/2 +4 = (n+4)/2$. Consequently, the price of anarchy is at least $2n/(n+4)$.

For the upper bound, let $C$ be an equilibrium assignment. First observe that, if there is an empty node $v$ in $C$ that is adjacent to a single agent or two agents of the same type, say red, then it has to be the case that all blue agents have utility $1$, and consequently the same holds for the red agents due to symmetry, leading to price of anarchy of $1$. So, we can assume that $v$  is adjacent to one red agent $\calA$ and one blue agent $\calB$. Since $C$ is an equilibrium, both $\calA$ and $\calB$ must have a neighbor of different type and utility $1$ since otherwise they would jump to $v$ to connect to each other. In addition, since the topology is a line, there must be at least two more agents with a single neighbor and utility $1$; otherwise they would have utility $0$ and incentive to jump to $v$ for a utility of $1/2$. Similarly, all remaining agents must already achieve utility at least $1/2$ to not have incentive to jump to $v$. Hence, the social welfare at any equilibrium assignment must be at least $(n-4)/2 + 4 = (n+4)/2$. Since the maximum possible social welfare is $n$, the price of anarchy is at most $2n/(n+4)$. 
\end{proof}

For games with $k \geq 3$ symmetric types and line topology, we show a bound of $\frac{n}{(k-1)\frac{n}{k}+1/2}$ which almost matches the upper bound of $k/(k-1)$.  

\begin{theorem}
For  \games with a line topology and $n$ agents of  $k \geq 3$ symmetric types such that the price of anarchy is exactly $\frac{n}{(k-1)\frac{n}{k}+\frac12}$.
\end{theorem}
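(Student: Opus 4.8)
The plan is to prove matching lower and upper bounds on the price of anarchy, exactly as in the preceding two theorems but now with $k\ge 3$ symmetric types of size $n/k$ each. For the lower bound, I would construct an explicit instance with a line topology on $n+1$ nodes. The optimal assignment places the agents so that consecutive agents always have different types (possible since each type has size $n/k\le n/2$, in fact this is a proper coloring of the path and is easy to realize with $k\ge 2$ types of equal size), giving every agent utility $1$ and social welfare $n$. For the equilibrium, I would place the empty node $v$ so that it is adjacent to exactly two agents of two \emph{distinct} types, and arrange the rest of the line so that those two agents each additionally have a neighbor of a different type (hence utility $1$), the two endpoints of the line are occupied by agents with a differently-typed neighbor (utility $1$), and \emph{every other} agent has at least one neighbor of a different type but at least one neighbor of its own type --- so utility exactly $1/2$. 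This is an equilibrium because the empty node only offers utility $1/2$, which no agent with utility $\ge 1/2$ wants. Then exactly the two neighbors of $v$ plus the two endpoints get utility $1$ and the remaining $n-4$ agents get utility $1/2$, for social welfare $(n-4)/2 + 4 = (n+4)/2$? Wait --- this would give the same bound as $k=2$, so I must instead exhibit an equilibrium with \emph{higher} welfare, namely $(k-1)\tfrac{n}{k} + \tfrac12$. The right construction is the \emph{segregated-by-type} one: group the $n/k$ agents of each type into a contiguous block, order the blocks along the line, and put $v$ at an appropriate spot. Then the two agents on either side of each block boundary get utility $1/2$, and within each block all agents get utility $0$ except... this gives low welfare, not high. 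So the correct equilibrium is the one analogous to the $k=2$ proof: make $n/k$ of the agents (one full type, say, or the right count) achieve utility $1$ and all others utility $1/2$; a careful count of how many agents can be forced to utility $1$ while remaining an equilibrium yields $(k-1)\tfrac{n}{k}$ agents with some differently-typed neighbor contributing, plus the $1/2$ from the delicate endpoint/empty-node gadget, matching $(k-1)\tfrac nk + \tfrac12$.

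For the upper bound, I would mimic the proof of the previous theorem. Let $C$ be any equilibrium with an empty node $v$. If $v$ is adjacent to a single agent, or to two agents of the same type $T$, then every agent of every type other than $T$ must have utility $1$ (else it would jump to $v$), which forces $\SW(C)\ge (k-1)\tfrac nk$; and by a symmetry argument one can push this further, but in any case this case is easily bounded. The interesting case is when $v$ is adjacent to two agents of two different types $T$ and $T'$. Then both of those agents must have a differently-typed neighbor on their other side and thus utility $1$ (otherwise they jump to $v$ to pair up); moreover every remaining agent must already have utility $\ge 1/2$ to lack incentive to move to $v$. Crucially, since $v$ is adjacent to types $T$ and $T'$, an agent of a type $T''\notin\{T,T'\}$ would get utility $1$ by moving to $v$, so in fact \emph{all} agents of types other than $T,T'$ must have utility $1$; that is $(k-2)\tfrac nk$ agents at utility $1$. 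Combined with the two neighbors of $v$ at utility $1$, plus at least one or two more utility-$1$ agents forced by the line endpoints, plus everyone else at $\ge 1/2$, I get $\SW(C)\ge (k-1)\tfrac nk + \tfrac12$ after arithmetic. Since $\OPT(I)\le n$, this yields $\PoA\le \frac{n}{(k-1)\frac nk+\frac12}$, matching the lower bound.

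The main obstacle, I expect, is the exact bookkeeping in both directions: in the lower bound, designing the equilibrium line-ordering so that it genuinely has \emph{no} improving move while hitting the precise welfare $(k-1)\tfrac nk+\tfrac12$ --- in particular getting the count of utility-$1$ agents to be exactly $(k-1)\tfrac nk$ rather than $(k-1)\tfrac nk + O(1)$ requires care about which agents are forced to utility $1$ versus merely $1/2$, and about divisibility of $n$ by $k$. In the upper bound, the subtle point is arguing that \emph{all} agents of types $\notin\{T,T'\}$ must have utility exactly $1$ (not just $\ge 1/2$) because moving to $v$ would give them a neighborhood consisting solely of agents of types $\ne$ their own; once that observation is in place the rest is routine. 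I would also need to handle the degenerate cases where $n/k$ is small (so that, e.g., $n-4$ could be negative or the endpoint agents coincide with the neighbors of $v$), which, as in the previous theorems, only makes the ratio smaller.
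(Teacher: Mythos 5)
There is a genuine gap in the lower bound, which is the heart of this theorem. You correctly reject your first two attempts (the $k=2$-style construction giving $(n+4)/2$, and full segregation into $k$ blocks), but your final description never lands on a valid equilibrium achieving welfare $(k-1)\tfrac{n}{k}+\tfrac12$; indeed it is internally inconsistent, since ``$(k-1)\tfrac{n}{k}$ agents at utility $1$ and all others at utility $1/2$'' would give welfare $(k-1)\tfrac{n}{k}+\tfrac{n}{2k}$, not $(k-1)\tfrac{n}{k}+\tfrac12$. The construction you need segregates exactly \emph{one} type: interleave the agents of types $1,\dots,k-1$ in a repeating pattern so that each of these $(k-1)\tfrac{n}{k}$ agents has only differently-typed neighbors (utility $1$), then append all $n/k$ agents of type $k$ as a contiguous block, with the empty node at the far end of the line adjacent only to a type-$k$ agent. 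The boundary type-$k$ agent has utility $1/2$ and every other type-$k$ agent has utility $0$ --- but none of them can improve by jumping to the empty node, because that node is adjacent only to an agent of their own type and hence offers them utility $0$. The welfare is exactly $(k-1)\tfrac{n}{k}+\tfrac12$. The key realization you were missing is that the tight equilibrium must contain a large set of utility-$0$ agents who are ``trapped'' by the placement of the empty node, not a large set of utility-$1/2$ agents.

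There is also a smaller gap in your upper bound. In the case where the empty node $v$ is adjacent only to agents of a single type $T$, you obtain $\SW(C)\ge (k-1)\tfrac{n}{k}$ and defer the rest to ``a symmetry argument,'' but there is no symmetry to invoke (the type adjacent to $v$ plays a distinguished role), and without the extra $+\tfrac12$ the claimed exact ratio does not follow. The missing step is a connectivity argument: if every type-$T$ agent were adjacent only to type-$T$ agents and empty nodes, then since the line is connected there would be an empty node adjacent to a non-$T$ agent, to which a type-$T$ agent would profitably jump; hence at least one type-$T$ agent has a differently-typed neighbor and utility at least $1/2$, giving $\SW(C)\ge (k-1)\tfrac{n}{k}+\tfrac12$. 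Your treatment of the other case ($v$ adjacent to two distinct types) is essentially the paper's and yields $(k-1)\tfrac{n}{k}+1$, which suffices.
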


\begin{proof}
For the lower bound, consider a game with a line topology consisting of $n+1$ nodes. Suppose that the $k$ types are identified by the integers $\{1, \ldots, k\}$.  An optimal assignment is to arrange the agents so that an agent of type $T$ is followed by an agent of type $T+1$, for every $T \leq [k-1]$. Hence, every agent gets utility $1$ for an optimal social welfare of $n$. Now consider the following assignment: There is a repeated sequence of agents so that an agent of type $T$ is followed by an agent of type $T+1$, for every $T \in \{1, \ldots, k-2\}$. This sequence is followed by all agents of type $k$ and the last node is left empty. Clearly, all agents of the first $k-1$ types have utility $1$ and no incentive to deviate to the empty node, while there is exactly one agent of type $k$ with utility $1/2$ and all others have utility $0$. So, this is an equilibrium with social welfare $(k-1)\frac{n}{k} + 1/2$, leading to the desired lower bound on the price of anarchy. 

For the upper bound, let $C$ be an equilibrium assignment. We consider the following two cases:
\begin{itemize}
    \item There is an empty node $v$ in $C$ that is adjacent to a single agent or two agents of the same type, say red. 
    Observe that all non-red agents must have utility $1$ so that they do not have incentive to deviate to $v$. In addition, at least one of the red agents must be adjacent to an agent of different type and thus have utility at least $1/2$; otherwise, if all red agents are adjacent only to other red agents or empty nodes, then there must exist a non-red agent that is adjacent to an empty node where the red agents would like to jump. Hence, the social welfare of this equilibrium is at least $(k-1)\frac{n}{k} + 1/2$. Since the optimal social welfare is at most $n$, we get the desired upper bound on the price of anarchy.

    \item There is an empty node $v$ in $C$ that is adjacent to two agents of different type, say red and blue. 
    Then all non-red and non-blue agents must have utility $1$ so that they do not have incentive to deviate to $v$. In addition, both agents that are adjacent to $v$ must have utility $1$ to not have incentive to deviate to $v$ and connect to each other, while all other red and blue agents must have utility at least $1/2$. Hence, the social welfare of this equilibrium is at least $(k-2)\frac{n}{k} + 2 + 2\left(\frac{n}{k}-1 \right)\cdot \frac12 = (k-1)\frac{n}{k} + 1 \geq (k-1)\frac{n}{k} + 1/2$, leading to the desired upper bound again. 
\end{itemize}
This completes the proof. 
\end{proof}

We conclude this section with a lower bound of $65/62$ on the price of stability, showing that the assignment with optimal social welfare is not always an equilibrium. 

\begin{theorem}
There is a \game with $k=2$ types such that the price of stability is at least $65/62 \approx 1.048$.
\end{theorem}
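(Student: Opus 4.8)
The plan is to exhibit one explicit \game $I$ with two types (say red and blue), compute its optimal social welfare $\OPT(I)$ exactly, and then show that every equilibrium assignment of $I$ has social welfare at most $\frac{62}{65}\,\OPT(I)$; together with $\NE(I)\neq\varnothing$ this gives $\PoS\ge 65/62$. The topology will be a small connected graph (with a few empty nodes) engineered so that no assignment can put all strategic agents at utility $1$ --- some welfare loss is forced by the combinatorics of the topology --- which pins $\OPT(I)$ to a specific value below the number of agents. Crucially, the welfare-maximizing assignment $C^\ast$ will \emph{not} be an equilibrium: in $C^\ast$ some agent sits at a comparatively low-utility ``sacrificial'' node purely to boost the total, and it strictly benefits by jumping to a currently empty node.

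Concretely, I would first describe $C^\ast$ and read off $\SW(C^\ast)$ by listing each agent's utility, and justify optimality by a short counting/parity argument bounding the number of bichromatic edges the occupied part of the topology can realize. Then I would verify $C^\ast\notin\NE(I)$ by naming the profitable deviation. The heart of the proof is a finite case analysis over $C\in\NE(I)$: every equilibrium has at least one empty node, so I split on which node is empty and on the types of its neighbours, and in each case argue that stability forces some agents to occupy nodes where they necessarily have utility strictly below $1$ (in fact at most $1/2$, or another fixed fraction determined by the relevant degrees) --- otherwise an agent would have an improving jump to the empty node. Summing these unavoidable losses over all cases yields $\max_{C\in\NE(I)}\SW(C)\le\frac{62}{65}\,\OPT(I)$, and the same analysis exhibits an equilibrium, so the ratio is well defined.

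The main obstacle is the joint tuning needed to make the numbers come out to exactly $65/62$: the empty node(s) must be attractive enough that the integrated optimum is destabilized \emph{and} that every stable configuration is forced to waste a prescribed amount of welfare, yet not so attractive that the game loses all equilibria (which would make $\PoS$ undefined for $I$ and void the bound). Getting the degrees of the nodes around the empty node(s), together with the exact red/blue counts, to produce simultaneously a destabilized optimum, a surviving equilibrium, and the clean ratio is where the design effort lies; the rest is mechanical verification.
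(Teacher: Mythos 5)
Your proposal correctly identifies the shape of the argument --- exhibit a concrete two-type instance, compute $\OPT$, show the optimal assignment is destabilized by a profitable jump, and then run a finite case analysis over equilibria to cap their welfare --- and this is exactly the strategy the paper follows. However, the proposal stops short of the actual mathematical content of the theorem, which is the witness instance itself. You explicitly defer the ``joint tuning'' of degrees, empty-node placement, and red/blue counts to future design effort, but that tuning \emph{is} the proof: without a named topology and agent multiset there is nothing to verify, no value of $\OPT$ to compute, no deviation to exhibit, and no case analysis to carry out. A reviewer cannot check an existence claim whose witness is described only by the properties it ought to have.

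For reference, the paper's instance is quite small: $2$ red and $4$ blue agents on a $9$-node topology with one central node $c$ adjacent to two degree-$2$ nodes $d_1,d_2$ and two higher-degree nodes $e_1,e_2$, plus a pendant path $a$--$b$. The optimal assignment has social welfare $65/12$ (not an integer --- your suggested ``counting/parity argument bounding the number of bichromatic edges'' would need to handle fractional utilities coming from nodes of differing degrees, which is where the non-round numbers $65/12$ and $62/12$ arise), and it fails to be an equilibrium because one agent at utility $2/3$ can jump to an empty node for $3/4$. The equilibrium case analysis then shows every stable assignment has welfare at most $62/12$, by first establishing structural properties (no agent at utility $0$, at most one agent at utility $\le 1/2$) that any assignment with welfare above $5$ must satisfy, and then enumerating the few placements consistent with those properties. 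Your outline would become a proof once you supply such an instance and carry out these verifications; as written, it is a research plan rather than a proof.
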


\begin{proof}
Consider a \game with $2$ red agents and $4$ blue agents. The topology is as shown in Figure~\ref{fig:pos-lower}(a). 
The assignment of the agents in Figure~\ref{fig:pos-lower}(b) is the optimal one with social welfare $65/12$, while that in Figure~\ref{fig:pos-lower}(c) is the best equilibrium with social welfare $62/12 \approx 5.16$. We will now argue that any equilibrium assignment $C$ has social welfare at most $62/12$, yielding the desired lower bound on the price of stability. For simplicity in our notation, for any node $v$, we will denote by $\calA(v)$ the agent that occupies it (if any). 

\begin{figure}[t]
    \centering
    \includegraphics[scale=0.5]{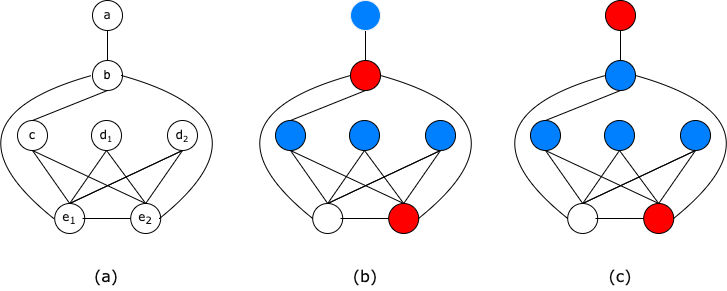}
    \caption{(a) The topology of the game used to show the lower bound on the price of stability; (b) The optimal assignment with social welfare $65/12$; (c) The best equilibrium assignment with social welfare $62/12$.}
    \label{fig:pos-lower}
\end{figure}

First note that if there is an agent with utility $0$, or two agents with utility at most $1/2$, then the social welfare is bounded above by $5$, and we are done. So, we can assume that $C$ has the following two properties: 
(1) No agent has utility 0, and (2) at most one agent has utility $\leq 1/2$. 
Property (1) implies immediately a third property: (3) Either node $a$ is empty, or $a$ and $b$ are occupied by agents of different types. 
This follows since agent $\calA(a)$ would have utility $0$ if node $b$ is empty, or if both $a$ and $b$ are occupied by agents of the same type. 

Next, observe that if the degree-2 nodes $d_1$ and $d_2$ are occupied by agents of different types, then either they both have utility $1/2$ (violating Property (2)) or one of them has utility $0$ (violating Property (1)). Therefore it must be the case that either one of $d_1$ and $d_2$ is left empty, or both are occupied by agents of the same type. We consider these two cases separately: 

\medskip
\noindent 
{\bf Case 1: $d_2$ is empty; the case where $d_1$ is empty is symmetric.} 
If $\calA(d_1)$ is red, then since it must obtain utility strictly larger than $0$ and one of the agents at $a$ or $b$ must be red by Property (3), it must be the case that $c$, $e_1$ and $e_2$ are all occupied by blue agents and have utility at least $1/2$, a contradiction to Property (2). 
If on the other hand $\calA(d_1)$ is blue, then since either $a$ or $b$ must be occupied by a red agent by Property (3), $e_1$ and $e_2$ must be occupied by agents of different type,say $e_1$ is occupied by a red agent and $e_2$ is occupied by a blue agent. 
However, this means that $\calA(d_1)$ and $\calA(e_2)$ have utility at most $1/2$, a contradiction to Property (2).

\medskip
\noindent 
{\bf Case 2: $d_1$ and $d_2$ are occupied by agents of the same type.}
If the agents $\calA(d_1)$ and $\calA(d_2)$ are red, then all the blue agents must be at the remaining nodes, which means that at least one of them must get utility $0$, thus contradicting Property (1). So, the agents $\calA(d_1)$ and $\calA(d_2)$ must both be blue. 
To satisfy Properties (1) and (2), either both $e_1$ and $e_2$ are both occupied by red agents, or one of them is empty and the other is occupied by a red agent. 
\begin{itemize}
    \item {\bf Both $\calA(e_1)$ and $\calA(e_2)$ are red}. 
    To satisfy Property (3), since it is not possible to assign different-type agents at $a$ and $b$, it has to be the case that $a$ is empty. However, such an assignment cannot be an equilibrium as the red agents at $e_1$ and $e_2$ have utility less than $1$ (since they are connected to each other) and would jump to $a$ to get utility $1$. 

    \item {\bf One of $e_1$ and $e_2$ is occupied by a red agent, while the other is empty.} 
    Suppose without loss of generality that $e_1$ is empty and $e_2$ is occupied by a red agent. 
    Since $a$ and $b$ must have agents of different types due to Property (3), only two assignments are possible. 
    The first one is when $\calA(a)$ is blue and $\calA(b)$ is red. This is the optimal assignment with social welfare $65/12$, which however is not an equilibrium since $\calA(b)$ has utility $2/3$ and would prefer to deviate to the empty node $e_1$ to get utility $3/4$; see Figure~\ref{fig:pos-lower}(b). 
    The second one is when $\calA(a)$ is red and $\calA(b)$ is blue. This is the best possible equilibrium with social welfare $62/12$; see Figure~\ref{fig:pos-lower}(c).
\end{itemize}
The proof is now complete. 
\end{proof}

\section{Open Problems}
In this paper we proposed a new class of \games in networks. We showed that determining if there exists an equilibrium assignment in a given instance is NP-hard, using a reduction that relies on the existence of stubborn agents. Do games with only strategic agents always admit an equilibrium? If not, what is the complexity of deciding if an equilibrium exists in a given \game with only strategic agents? 

We also showed that the game is not potential for tree topologies; however, if all agents are strategic, there is always an equilibrium, regardless of the number of empty nodes in the tree. It would be interesting to know if the game is weakly acyclic in trees. For spider graphs with a single empty node, we showed that the game is potential. We conjecture the same holds for any number of empty nodes. While games on $\delta$-regular graphs are always potential games when there is a single empty node, there are games that are not potential when there are $4$ or more empty nodes. It would be very interesting to know whether the game is potential for up to $3$ empty nodes. Similar questions would be interesting for other classes of graphs, such as bipartite graphs and grids. 

Finally, while our price of anarchy bounds are essentially tight, it would be interesting to further investigate the price of stability and either improve the lower bound of $65/62$ that we managed to show for two types, or show an upper bound for it that is better than the upper bounds shown for the price of anarchy. 

\section*{Acknowledgments}
The authors gratefully acknowledge many useful discussions with J. Opatrny, as well as several useful and constructive comments of anonymous referees.

\bibliographystyle{plainnat}
\bibliography{main}

\end{document}